\theoremstyle{plain}
\newtheorem{theorem}{Theorem}[section]
\newtheorem{lemma}[theorem]{Lemma}
\newtheorem{proposition}[theorem]{Proposition}
\newtheorem{claim}[theorem]{Claim}
\theoremstyle{definition}
\newtheorem{definition}[theorem]{Definition}
\newtheorem{remark}[theorem]{Remark}
\newtheorem{example}[theorem]{Example}
\newcommand{\ignore}[1]{}
\newcommand{\dist}{\mathsf{dist}}
\newcommand{\hdist}{\mathsf{hdist}}
\newcommand{\acos}{\cos^{-1}}
\newcommand{\asin}{\sin^{-1}}
\DeclareMathOperator{\Cov}{Cov}
\newcommand{\Var}[1]{\mathrm{Var} \left[ #1 \right]}
\newcommand{\Ex}[1]{\bE \left[ #1 \right]}
\newcommand{\Exu}[2]{\underset{#1} \bE \left[ #2 \right] }
\newcommand{\Exuc}[3]{\underset{#1} \bE \left[ #2 \;\; \left| \;\; #3
\right.\right] }
\renewcommand{\Pr}[1]{\bP \left[ #1 \right]} 
\newcommand{\Pru}[2]{\underset{ #1 }\bP \left[ #2 \right]}
\newcommand{\Pruc}[3]{\underset{ #1 }\bP \left[ #2 \;\; \mid \;\; #3 \right]}
\newcommand{\BigO}[1]{O \left( #1 \right) }
\newcommand{\BigOmega}[1]{ \Omega \left( #1 \right) }
\newcommand{\BigTheta}[1]{\Theta \left( #1 \right) }
\newcommand{\define}{\vcentcolon=}
\renewcommand{\exp}[1]{\mathrm{exp}\left( #1 \right)}
\DeclareMathOperator{\sign}{sign}
\DeclareMathOperator{\poly}{poly}
\newcommand{\xh}[1]{\hspace{2em}\text{#1}}
\newcommand{\ceil}[1]{\ensuremath{\lceil #1 \rceil}}
\newcommand{\norm}[1]{\| #1 \|}
\newcommand{\fracb}[2]{\left( \frac{#1}{#2} \right)}
\newcommand{\inn}[1]{\langle #1 \rangle}
\newcommand{\abs}[1]{| #1 |}
\newcommand{\ind}[1]{\mathds{1} \left[ #1 \right] }
\newcommand{\nth}[1]{\ensuremath{#1^{\textup{th}}}}
\newcommand{\pmset}{\{\pm 1\}}
\newcommand{\cN}{\ensuremath{\mathcal{N}}}
\newcommand{\cX}{\ensuremath{\mathcal{X}}}
\newcommand{\bE}{\ensuremath{\mathbb{E}}}
\newcommand{\bP}{\ensuremath{\mathbb{P}}}
\newcommand{\bR}{\ensuremath{\mathbb{R}}}
\title{Testing Halfspaces over Rotation-Invariant Distributions}
\date{}
\author{Nathaniel Harms\thanks{This research was partially supported by the
Natural Sciences and Engineering Research Council of Canada.} \\ University of Waterloo \\
\texttt{nharms@uwaterloo.ca}}
\begin{document}
\maketitle
\thispagestyle{empty}

\begin{abstract}
We present an algorithm for testing halfspaces over arbitrary, unknown
rotation-invariant distributions.  Using $\widetilde O(\sqrt n \epsilon^{-7})$
random examples of an unknown function $f$, the algorithm determines with high
probability whether $f$ is of the form $f(x) = \sign(\sum_i w_i x_i -t)$ or is
$\epsilon$-far from all such functions. This sample size is significantly
smaller than the well-known requirement of $\Omega(n)$ samples for learning
halfspaces, and known lower bounds imply that our sample size is optimal (in its
dependence on $n$) up to logarithmic factors. The algorithm is distribution-free
in the sense that it requires no knowledge of the distribution aside from the
promise of rotation invariance. To prove the correctness of this algorithm we
present a theorem relating the distance between a function and a halfspace to
the distance between their centers of mass, that applies to arbitrary
distributions.

\end{abstract}

\newpage
\clearpage
\setcounter{page}{1}

\section{Introduction}
Halfspaces (or linear threshold functions) cut $\bR^n$ in half by drawing a
hyperplane through the space. They are defined by a vector $w \in \bR^n$ and a
threshold $t$ and they label points $x \in \bR^n$ by $\sign(\sum_i w_i x_i -
t)$. These functions are fundamental objects in many areas of study, like
machine learning, geometry, and optimization.  In machine learning, halfspaces
are one of the most basic classes of functions that can be learned and
algorithms for doing so have been studied extensively: it is well-known that, in
the PAC learning model, a \emph{distribution-free} algorithm (one with no
knowledge of the distribution of examples) requires $\BigTheta{n/\epsilon}$
random samples to learn a halfspace with accuracy $\epsilon$ \cite{SB14};
however, this requires that the function is guaranteed to be a halfspace, and in
practice it may not be known what class the function belongs to. In this
situation we can try to learn a halfspace and then check how well it works, but
it would be better to find an algorithm that could quickly reject functions that
are not halfspaces, which is the goal of the present paper. This is the problem
of \emph{testing halfspaces}: an $\epsilon$-testing algorithm for halfspaces is
an algorithm which receives random samples of a function $f$, and determines
whether the function is a halfspace or \emph{$\epsilon$-far} from all
halfspaces, where distance is measured by the probability that two functions
differ on a random input.

Matulef \emph{et al.}\cite{MORS10} and Mossel and Neeman \cite{MN12} give
algorithms for testing halfspaces when the algorithm is allowed to make queries
to the function and the underlying distribution of points is either Gaussian
over $\bR^n$ or uniform over $\pmset^n$. Balcan \emph{et al.} \cite{BBBY12}
present an algorithm in the model where the tester receives random samples from
the Gaussian distribution. It is unknown whether there is a distribution-free
testing algorithm, in the query or sample model, that improves upon the na\"ive
testing-by-learning strategy. Existing algorithms are tailored for the Gaussian
or the hypercube, so for distributions significantly different from these, even
for another rotation-invariant (RI) distribution, a testing algorithm is not
known.  RI distributions generalize both the Gaussian distribution and the
uniform distribution over the sphere, two distributions that are commonly
studied in the literature on halfspaces (e.g.~\cite{BBBY12, Kearns98, Long94, Long03,
MORS10, MN12}). In this paper we present an algorithm that will work for
any \emph{unknown} rotation-invariant distribution. 
\begin{theorem}
\label{thm:ri tester}
There is an algorithm that is, for any unknown rotation-invariant distribution
$\mu$ and any $\epsilon > 0$, an $\epsilon$-tester for halfspaces using at
most $\widetilde O(\sqrt n \epsilon^{-7})$ \footnote{The notation $f(n) =
\widetilde O(g(n))$ hides logarithmic factors; i.e.~for sufficiently large $n,
f(n) \leq g(n) \log^c n $ for some constant $c$, and $f(n) = \widetilde
\Omega(g(n))$ means $f(n) \geq g(n) \log^{-c} n$ for sufficiently large $n$.}
random samples.
\end{theorem}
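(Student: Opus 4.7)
The plan is to reduce testing halfspaces to a moment-matching problem. For a function $f$, define its center of mass as $\mu_f := \Ex{x \mid f(x) = 1}$. Under rotation invariance, the center of mass of any halfspace $h(x) = \sign(\langle w, x\rangle - t)$ depends only on its bias $p := \Pr{h(x) = 1}$ and its direction: one has $\mu_h = L(p)\cdot w$, where $L(p)$ is a scalar determined by $\mu$ alone. The structural ingredient flagged by the abstract is a robust isoperimetric statement: among all $\pmset$-valued functions with bias $p$, halfspaces maximize $\|\mu_f\|$, and if $\|\mu_f\|$ is within $\delta$ of the maximum $L(p)$ then $f$ must be $\poly(\delta)$-close to some halfspace. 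Testing thus reduces to estimating $\|\mu_f\|^2$ and comparing it to $L(p_f)^2$, where $p_f = \Pr{f(x) = 1}$.

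The algorithm has four steps: (i) use $\widetilde O(1/\epsilon^2)$ samples to estimate $p_f$; (ii) use $\widetilde O(1/\epsilon^2)$ samples to estimate $L(p_f)$ by projecting samples onto an arbitrary fixed unit vector $w_0$ and computing the empirical conditional mean of the projection above its $p_f$-quantile (rotation invariance makes this independent of the choice of $w_0$); (iii) use $m$ samples to estimate $\|\mu_f\|^2$ via the U-statistic $U_f := \frac{2}{m(m-1)}\sum_{i<j} \langle x_i, x_j\rangle \ind{f(x_i) = f(x_j) = 1}$, which is unbiased for $p_f^2\|\mu_f\|^2$; and (iv) accept iff $|U_f - p_f^2 L(p_f)^2|$ falls below an appropriate threshold. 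The $\sqrt n$ dependence enters in step (iii): rotation invariance forces $\Ex{\langle x, y\rangle^2} = \Ex{\|x\|^2}^2/n$ for independent $x,y$, so the degenerate term in the Hoeffding decomposition of $U_f$ contributes a standard deviation of order $\Ex{\|x\|^2}/(\sqrt n \cdot m)$, which matches the target precision exactly when $m = \widetilde O(\sqrt n/\poly(\epsilon))$.

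The main obstacle will be proving the robust isoperimetric theorem for arbitrary rotation-invariant distributions. For the Gaussian measure the non-robust direction is classical (Borell) and stability versions are known, but rotation invariance is a much broader class. My approach would be a radial decomposition: any rotation-invariant $\mu$ is a mixture of uniform distributions on spheres $r\bS^{n-1}$ indexed by the law of $\|x\|$, so conditioning on $\|x\|=r$ I would apply a stability version of spherical isoperimetry (caps maximize $\|\mu\|$ on the sphere) and then integrate over $r$. The delicate point is that the per-radius optimal cap could a priori use a different direction $w(r)$, so deducing that $f$ is close to a single global halfspace requires coupling these per-radius approximations, likely via an averaging argument that takes the direction of $\mu_f$ itself as the candidate $w$ and bounds the resulting loss. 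Propagating the stability loss through this radial integration is probably also the source of the somewhat generous $\epsilon^{-7}$ dependence.
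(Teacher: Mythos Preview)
Your algorithmic steps (i)--(iii) match the paper closely: the U-statistic in (iii) is exactly the paper's \textsc{Estimate-IP}/\textsc{Estimate-Norm} subroutine, and your step (ii) is its \textsc{Estimate-Halfspace-Norm}. The variance calculation explaining the $\sqrt n$ dependence is also correct.

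The gap is in the structural theorem. The robust isoperimetric statement you want---``if $\|\mu_f\|$ is within $\delta$ of $L(p_f)$ then $f$ is $\poly(\delta)$-close to a halfspace''---is \emph{false} for general rotation-invariant distributions, and the paper proves this explicitly (Theorem~\ref{thm:counterexample}). Take $\mu$ to be the uniform mixture of the sphere of radius $\sqrt n$ and a sphere of radius $r$, with $r$ arbitrarily small. Let $f$ agree with $h(x)=\sign(x_1)$ on the large sphere and with $-h$ on the small sphere. Then $f$ is $1/4$-far from every halfspace, yet $\bigl|\,\|\Ex{xf(x)}\|-\|\Ex{xh(x)}\|\,\bigr|\le r$, which can be made smaller than any fixed $\delta>0$. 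Your radial decomposition would see a perfect cap on each sphere, but the ``coupling'' step you flag as delicate is exactly where it breaks: the tiny sphere contributes negligibly to $\mu_f$, so the direction of $\mu_f$ carries no information about what $f$ does there, and no averaging argument can recover it. In the paper's language, the issue is that the \emph{width} $W_\mu(w,\epsilon)$ can be arbitrarily small, and Theorem~\ref{thm:gap theorem} only gives $\|\Ex{xh(x)}-\Ex{xf(x)}\|\ge \epsilon\cdot W_\mu(w,\epsilon/2)$.

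What the paper does instead: it restricts your center-of-mass tester to \emph{bounded} RI distributions (supported between radii $CR$ and $R$), where the width is provably $\Omega(\epsilon)$ and your argument goes through (this is \textsc{Simple-Tester}). For a general RI distribution it first runs \textsc{Find-Pivot} to locate the threshold scale, then partitions the space into $O(\log(n/\epsilon))$ concentric annuli, each a bounded RI distribution, runs \textsc{Simple-Tester} on each, and finally runs a separate \textsc{Check-Consistency} routine to verify that the per-annulus halfspaces share (approximately) a common direction and threshold. This consistency check is the new ingredient you are missing; without it, the counterexample above passes your test.
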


This algorithm requires significantly fewer samples than the
$\Omega(n/\epsilon)$ required for the na\"ive testing algorithm that attempts to
learn the function \cite{GGR98}, and in fact it is optimal (up to logarithmic
factors) due to the $\widetilde \Omega(\sqrt n)$ lower bound 
for algorithms sampling from the Gaussian distribution \cite{BBBY12}.

Rotation invariance and halfspaces go hand-in-hand since, in an RI distribution,
the normal vector $w$ of a halfspaces is parallel to its center of mass, which,
for example, allows one to learn a halfspace by learning its center
(e.g.~\cite{Kearns98}).
\begin{definition}[Center of Mass]
Let $\mu$ be a distribution over $\bR^n$ and let $f : \bR^n \to \pmset$ be a
measurable function. The \emph{center of mass} of $f$ is the vector
$\bE_x[xf(x)]$ and the \emph{center-norm} is $\| \bE_{x}[xf(x)] \|_2$.
\end{definition}
To prove the correctness of our algorithm, we show that for arbitrary
distributions, the distance between a function and a halfspace is related to the
distance between the two centers of mass, where this relationship is quantified
by a parameter we call the \emph{width}, $W_\mu(w,\epsilon)$, which is,
informally, the size of the smallest interval $I$ with $\bP_x[\inn{w,x} \in I] =
\epsilon$ (see Definition \ref{def:width}).
\begin{theorem}[See Section \ref{section:centers of mass} for the full
statement]
\label{thm:gap theorem}
For any distribution $\mu$ over $\bR^n$ and any function $f$, if $h$ is a
halfspace with normal vector $w$ having distance $\epsilon$ to $f$ and the same
mean as $f$, then
\[
  \norm{\Ex{xh(x)}-\Ex{xf(x)}}_2 \geq \epsilon \cdot W_\mu(w,\epsilon/2) \,.
\]
\end{theorem}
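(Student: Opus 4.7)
The plan is to reduce the statement to a one-dimensional inequality along the $w$-direction, and then apply the width hypothesis. Assume without loss of generality that $w$ is a unit vector, since halfspaces are invariant under positive rescaling of $w$ and the width $W_\mu(w,\cdot)$ rescales compatibly. Using that $\norm{v}_2 \geq \inn{u,v}$ for any unit vector $u$, projecting onto $w$ gives
\[
  \norm{\Ex{xh(x)} - \Ex{xf(x)}}_2 \;\geq\; \Ex{\inn{w,x}(h(x)-f(x))}.
\]
Let $y := \inn{w,x}$, let $\nu$ denote its pushforward under $\mu$, and note $h(x)=\sign(y-t)$ depends only on $y$. Passing to the conditional expectation $\tilde f(y) := \Ex{f(x) \mid \inn{w,x} = y} \in [-1,1]$, the right-hand side becomes the one-dimensional integral $\Ex{y(\sign(y-t) - \tilde f(y))}$, and the theorem reduces to lower-bounding this quantity by $\epsilon \cdot W_\mu(w,\epsilon/2)$.

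Next I would translate the two hypotheses into constraints on $\tilde f$. Setting $u(y) := 1 - \tilde f(y)$ for $y > t$ and $v(y) := 1 + \tilde f(y)$ for $y < t$ gives nonnegative functions bounded by $2$. A short calculation from $\Pr{f \neq h} = \epsilon$ and $\Ex{f(x)} = \Ex{h(x)}$ shows $\int_{y>t} u \, d\nu = \int_{y<t} v \, d\nu = \epsilon$ (equivalently, the sets where $h = \pm 1, f = \mp 1$ each have $\mu$-mass $\epsilon/2$). Splitting $y = (y - t) + t$ in the target integral and using the matched masses to cancel the $t$-contributions, the target rewrites as the nonnegative sum
\[
  \int_{y>t}(y-t)\,u(y)\,d\nu(y) \;+\; \int_{y<t}(t-y)\,v(y)\,d\nu(y).
\]
A standard rearrangement/greedy argument then shows that this sum is minimized over feasible $u,v$ by concentrating $u \equiv 2$ on an interval $(t, t+\delta^+]$ and $v \equiv 2$ on $[t-\delta^-, t)$, where each interval has $\nu$-mass exactly $\epsilon/2$.

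The final step, which I expect to be the main obstacle, is to bound this minimum below by $\epsilon \cdot W_\mu(w,\epsilon/2)$ using only the width. By definition of the width, each of $\delta^+, \delta^-$ is at least $W := W_\mu(w, \epsilon/2)$, so the minimum equals $2\int_J |y-t| \, d\nu$ over an interval $J \ni t$ with $\nu(J) = \epsilon$ and total length at least $2W$. The difficulty is extracting the stated constant: a direct layer-cake bound using that every sub-interval of length $<W$ carries $\nu$-mass $<\epsilon/2$ only delivers an integral of order $\epsilon W/2$, so a more delicate argument is needed, perhaps invoking the widths $W_\mu(w,\alpha)$ at several scales $\alpha < \epsilon/2$, or using separately that each of the one-sided intervals $(t, t+\delta^+]$ and $[t-\delta^-, t)$ has length at least $W$, to recover the full factor $\epsilon W$ stated in the theorem.
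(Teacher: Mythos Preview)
Your reduction to the $w$-direction and your identification of the two constraints ($\int u\,d\nu = \int v\,d\nu = \epsilon$) are correct and essentially match the paper's setup. The difficulty you flag at the end is real, and it is exactly where the paper's argument differs from yours: rather than rearranging $u,v$ and then trying to lower-bound the resulting integral via layer-cake, the paper uses a simple \emph{median} trick that extracts the full constant with no loss.

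Concretely, skip the rearrangement step entirely. With $A^+=\{h=1,f=-1\}$ and $A^-=\{h=-1,f=1\}$ (each of $\mu$-mass $\epsilon/2$), let $X=\inn{w,x}\mid x\in A^+$ and $Y=\inn{w,x}\mid x\in A^-$, and let $m_X,m_Y$ be their medians. Since $X\ge t$ always and $X\ge m_X$ with probability at least $1/2$, one has $\Ex{X}\ge \tfrac12 m_X+\tfrac12 t$, and symmetrically $\Ex{Y}\le \tfrac12 m_Y+\tfrac12 t$. Your target quantity $\inn{w,\Ex{x(h-f)}}=\epsilon(\Ex X-\Ex Y)$ is therefore at least $\tfrac{\epsilon}{2}(m_X-m_Y)$. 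Now observe that at least half of $A^+$ projects into $[t,m_X]$ and at least half of $A^-$ into $[m_Y,t]$, so $\nu([m_Y,m_X])\ge \epsilon/4+\epsilon/4=\epsilon/2$; by the definition of width this forces $m_X-m_Y\ge 2W_\mu(w,\epsilon/2)$, and the bound $\epsilon\cdot W_\mu(w,\epsilon/2)$ follows.

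The point you were missing is that the width should be applied to the \emph{combined} interval $[m_Y,m_X]$ straddling $t$, not to the two one-sided pieces separately; this is what buys the extra factor of $2$ you could not find. Your rearrangement detour is not wrong, just unnecessary: the same median splitting applied after rearrangement (take $m^\pm$ with $\nu((t,m^+])=\nu([m^-,t))=\epsilon/4$ and combine) would also recover the full constant, but the paper's direct route is shorter.
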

Although the algorithm itself is for RI distributions, Theorem \ref{thm:gap
theorem} actually holds much more generally.  When applied to the Gaussian
distribution, we recover a result of \cite{MORS10} that was used to prove the
correctness of their algorithm. That result used Fourier analysis in the
Gaussian space, which we eliminate in our proof.

The testing algorithms of \cite{MORS10,BBBY12} work by estimating the
center-norm of the input function.  In this paper we show that these algorithms
do not generalize to the larger class of RI distributions.
\begin{theorem}[Informal]
\label{thm:counterexample}
For any algorithm $A$ that has access to a function $f$ only through an
estimate of its center-norm, there is an RI distribution and a function
$f$, far from all halfspaces, that is indistinguishable to $A$ from a halfspace.
\end{theorem}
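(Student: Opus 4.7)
The plan is to exhibit a rotation-invariant distribution $\mu$ along with a halfspace $h$ and a non-halfspace $f$ over $\mu$ that is $\epsilon$-far from every halfspace yet has the same center-norm as $h$. Since $A$'s decision on $f$ depends on $f$ only through estimates of its center-norm, and the true center-norms coincide, the algorithm's output distribution on $f$ and on $h$ must agree; this contradicts the correctness requirement that $A$ accept $h$ and reject $f$.

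I would take $\mu$ to be the uniform distribution on $\bS^{n-1}$ and define $f(x) = \sign(x_1)\cdot\sign(x_1^2 - a)$, where $a > 0$ is chosen so that $\mu$ assigns mass $2\epsilon' > 2\epsilon$ to the strip $\{x_1^2 < a\}$. Because $f$ depends only on $x_1$ and is odd in $x_1$, by rotational invariance within $e_1^\perp$ we have $\bE_\mu[xf(x)] = c \cdot e_1$ with $c = \bE_\mu[|x_1|\sign(x_1^2-a)]$; writing $\alpha = \bE_\mu[|x_1|]$ and $A(s) = \bE_\mu[|x_1|\mathbf{1}_{|x_1|>s}]$, a short computation gives $|c| = |2A(\sqrt a)-\alpha|$, a value in $[0,\alpha]$. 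By the same symmetry the halfspace $h_t(x) = \sign(x_1 - t)$ has center-norm $A(t)$ for $t \geq 0$, a continuous function sweeping $[0,\alpha]$ as $t$ ranges over $[0,\infty)$. The intermediate value theorem then yields $t^*$ with $A(t^*) = |2A(\sqrt a) - \alpha|$, so $h = h_{t^*}$ matches $f$ in center-norm.

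The main obstacle is showing that $f$ is $\epsilon$-far from \emph{every} halfspace, not only those aligned with $e_1$. For normals along $e_1$, a one-dimensional analysis of the four sign regions of $f$ shows that any threshold fails on at least half of the flipping strip $\{|x_1| < \sqrt a\}$, yielding distance $\geq \epsilon'$. For a general unit normal $w = \cos\theta \cdot e_1 + \sin\theta \cdot w_\perp$, rotational invariance within $e_1^\perp$ reduces the analysis to the two-dimensional plane spanned by $e_1$ and $w_\perp$; at $\theta = \pi/2$, the measure-preserving reflection $x_1 \mapsto -x_1$ flips $f$ while fixing $h$, so $\bE[fh] = 0$ and the distance is exactly $1/2$; a continuity argument in $\theta$, combined with spherical concentration of $\langle w, x \rangle$ on a $\Theta(1/\sqrt n)$ window, keeps the minimum distance over $t$ bounded below by $\epsilon$ for all intermediate angles. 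Combined with the center-norm match from the previous paragraph, this completes the construction and yields the claimed indistinguishability.
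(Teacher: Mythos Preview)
Your construction differs substantially from the paper's. The paper does not work on a single sphere; instead it takes $\mu = \tfrac{1}{2}(\sigma_{\sqrt n} + \sigma_r)$, a mixture of two concentric spheres, sets $h(x)=\sign(x_1)$, and defines $f=h$ on the outer sphere and $f=-h$ on the inner sphere. The payoff is that the far-from-all-halfspaces argument becomes a one-liner: any halfspace $g(x)=\sign(\inn{w,x}-t)$ with $t\leq 0$ is constant along every ray $\{au : a>0\}$ with $\inn{w,u}\geq 0$, whereas $f$ flips between the two spheres on each such ray, so $g$ disagrees with $f$ on at least one sphere for half of all directions, giving distance $\geq 1/4$. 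The center-norms of $f$ and $h$ differ by at most $r$, so the paper simply lets $r$ depend on the accuracy request and picks $r<\epsilon/2$; an exact match is not needed.

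In your single-sphere construction, the step that is genuinely incomplete is the bound for intermediate angles $\theta$. Continuity of $\theta \mapsto \min_t \dist(f, g_\theta^t)$ gives no lower bound on the interior of $[0,\pi/2]$ --- the minimum could dip below both endpoint values --- and the spherical-concentration remark has no clear role in preventing this. A clean way to finish your argument is a pair of triangle inequalities against the two endpoints: using the angular part of the halfspace distance decomposition, $\dist(g_\theta^t, g_0^t) \leq \theta/\pi$ and $\dist(g_\theta^t, g_{\pi/2}^t) \leq (\pi/2 - \theta)/\pi$, you get
\[
  \dist(f, g_\theta^t) \;\geq\; \max\bigl(\epsilon' - \theta/\pi,\; \theta/\pi\bigr) \;\geq\; \epsilon'/2,
\]
so $f$ is only $\epsilon'/2$-far (not $\epsilon'$-far) from all halfspaces, and you need $\epsilon' > 2\epsilon$ rather than $\epsilon' > \epsilon$. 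With this fix your construction does yield a valid counterexample (indeed a slightly stronger one, since the center-norms match exactly), but the paper's two-sphere construction is considerably shorter and sidesteps the angle analysis entirely.
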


\ignore{
It may not be immediately
obvious that any work is required to generalize the algorithms for the Gaussian
space to arbitrary rotation-invariant distributions, so we give an example that
shows that the algorithms of \cite{BBBY12, MORS10} are not sufficient. These
algorithms work by estimating the center-norm of the function, and we show that
an algorithm of this type will not work.

\begin{theorem}[Informal]
\label{thm:counterexample}
For any algorithm
$A$ that has access to a balanced function $f$ only through an estimate of the
center-norm, there exists an RI distribution and a balanced function $f$,
$1/4$-far from all halfspaces, that is indistinguishable to $A$ from a
halfspace.
\end{theorem}
}

\subsection{Related Work}

Matulef \emph{et al.} \cite{MORS10} presented algorithms that use queries to test
halfspaces over the Gaussian distribution and the uniform distribution over
$\pmset^n$ with query complexity $\poly(1/\epsilon)$. Mossel and Neeman give
another $\poly(1/\epsilon)$-query algorithm for the Gaussian distribution that
estimates the noise stability instead of the center-norm.  Glasner and Servedio
\cite{GS07} show a $\widetilde \Omega(n^{1/5})$ lower bound for
distribution-free testing in the query model, which contrasts with the
$\poly(1/\epsilon)$ algorithms to show that testing with queries is much harder
for general distributions than for the Gaussian.  Balcan \emph{et al.}
\cite{BBBY12} adapt the center-norm algorithm of \cite{MORS10} to get a sampling
algorithm for the Gaussian distribution that requires $O(\sqrt{n \log n})$
random samples, and also show a lower bound of $\Omega(\sqrt{n / \log n})$ in
the same setting.

There are a few related works that study similar problems: Matulef \emph{et al.}
\cite{MORS09} and Ron and Servedio \cite{RS15} gave algorithms for testing,
under the uniform distribution over $\pmset^n$, if a function is a halfspace
with normal vector $w \in \pmset^n$ (as opposed to $w \in \bR^n$), which they
show requires between $\Omega(\log n)$ and $\poly(\log n)$ queries, rather than
the constant number of queries required for testing halfspaces.  Balcan \emph{et
al.} \cite{BBBY12} also give a lower bound of $\Omega((n / \log n)^{1/3})$
labelled examples in their \emph{active testing} model, where the algorithm
receives unlabelled, randomly selected points, and may request labels for a
subset of these. Finally, Raskhodnikova \cite{Ras03} presents a query algorithm
for testing halfspaces when the domain of the function is an image, i.e.~an $n
\times n$ matrix of binary values, that uses $O(1/\epsilon)$ queries.

Learning halfspaces is a related and well-studied problem; standard arguments
about the VC dimension of halfspaces show that $\BigTheta{\frac{n}{\epsilon}}$
random samples or queries are necessary and sufficient to learn halfspaces in
the PAC learning model \cite{SB14}; this holds when the algorithm is required to
be \emph{distribution-free} (i.e.~it is not given any knowledge of the
distribution), but the lower bound of $\Omega(n/\epsilon)$ holds even when the
distribution is uniform over the unit sphere in $\bR^n$
\cite{Long94}. If we imagine that a testing algorithm has accepted a
function $f$, then we know it is of distance at most $\epsilon$ to a halfspace;
if we now wish to learn the function, the standard PAC model does not directly
apply since the function still may not be exactly a halfspace. The learning
model that applies in this situation is \emph{agnostic learning}, in which the
labels are not guaranteed to be consistent with a halfspace. In this model, the
VC dimension arguments show that $\Theta(n/\epsilon^2)$ random examples are
required to produce a nearly-optimal halfspace \cite{SB14}, but finding this
halfspace is NP-Hard \cite{BEL03}; if we allow the learner to produce a function
that is not necessarily a halfspace, learning can be done in polynomial time
with $n^{O(1/\epsilon^4)}$ random examples when the distribution is uniform
over the sphere \cite{KKMS08}.

Some recent work has been done on the \emph{Chow parameters problem}
\cite{DDFS14, OS11}, which asks how to construct a halfspace over $\pmset^n$
given approximations of its center of mass and mean (known as \emph{Chow
parameters}). That work uses similar ideas to the work on testing and learning
halfspaces, such as bounds on the center-norms.

\section{Outline \& Sketch of the Algorithm}
In Section 3 we cover the notation and preliminaries required to present the
algorithm, including some important facts about centers of mass,
rotation-invariant distributions, and distances between functions.

The remainder of the paper is a construction of the testing algorithm,
\textsc{RI-Tester}, found in \S \ref{section:ri tester}, which we now sketch.
Our first task is to show that the known algorithms (\cite{BBBY12, MORS10}) for
testing halfspaces fail on general RI distributions. This is
done in \S \ref{subsection:width} by defining the class of \emph{center-of-mass
algorithms} and constructing RI distributions to fool them (Theorem
\ref{thm:counterexample}). This justifies our use of a new strategy and
motivates the definition of the \emph{width} and \emph{bounded RI distributions}
(Definitions \ref{def:width} and \ref{def:bounded distributions}) to isolate
those distributions on which center-of-mass algorithms fail.

We then use the definition of \emph{width} to prove a bound on the distance
between two functions in terms of the distance between their centers of mass
(Theorem \ref{thm:gap theorem}). Using this theorem, we show that there is an
algorithm, \textsc{Simple-Tester}, which will test halfspaces over bounded RI
distributions, and has a small amount of tolerance, i.e.~it will accept
halfspaces and anything sufficiently close to a halfspace. The idea of this
algorithm is the same as \cite{BBBY12, MORS10}: estimate the center-norm and
compare it with the center-norm we would expect if the function were a
halfspace, but our algorithm eliminates the dependence on the Gaussian
distribution. This method relies on the property that the center-norm of a
halfspace is independent of its orientation in RI spaces.

Next we reduce the general RI distribution problem to the bounded case. Note
that if $f$ is a halfspace with threshold $t$, then $f$ is constant on the ball
of radius $|t|$ and it is nearly balanced on points $x$ with $\|x\| \gg t$.  The
subroutine \textsc{Find-Pivot} in \S \ref{subsection:find pivot} identifies
these extreme regions by examining a set of examples and finding the smallest
possible threshold. These extreme regions are treated separately from the rest
of the space.

The ball of radius $|t|$ must be constant, which is guaranteed implicitly by the
\textsc{Find-Pivot} algorithm. Balanced halfspaces are approximately
preserved by rescaling the space (Lemma \ref{lemma:projecting balanced
functions}).  Since a halfspace $f$ is nearly balanced in the extreme region of
radius $\gg t$, we may rescale this region so that it is bounded, while
approximately preserving the halfspace property. We apply \textsc{Simple-Tester}
to this region, which requires the tolerance guarantee.

Between the two extreme regions we partition the space into bounded sections.
Now we apply \textsc{Simple-Tester} to each of these bounded sections along with
the transformed outer region. Now each region on its own is a halfspace, so what
remains is to ensure that these halfspaces are consistent with one another. This
is accomplished with the \textsc{Check-Consistency} subroutine in \S
\ref{subsection:check consistency} that checks if two (near-)halfspaces on two
different RI distributions are consistent, i.e.~that they have roughly the same
thresholds and orientation.  This check again depends on rotation invariance,
since in RI spaces the orientation of the halfspaces does not change their
center-norms, and the centers of mass are parallel to their normal vectors.

It may be the case that some sections did not contain enough random samples to
run the \textsc{Simple-Tester} and \textsc{Check-Consistency} algorithms. This
is handled by Lemma \ref{lemma:ignore small sections}, which states that any
section with insufficient samples also has small measure and may be ignored.

\section{Preliminaries and Notation}
\label{section:preliminaries}
See the appendix for the proofs of the propositions in this section.

For a vector $u \in \bR^n$, $\norm{u}$ is the 2-norm $\sqrt{\sum_i u_i^2}$. For
two vectors $u,v \in \bR^n$ we will write $\inn{u,v}$ for the inner product
$\sum_i u_i v_i$.  $\bP, \bE$ denote the probability of an event and the
expected value respectively. We will write $e_1$ for the first standard basis
vector $(1, 0, \dotsc, 0)$.

\begin{definition}[Halfspaces]
A \emph{halfspace} is a function $h : \bR^n \to \pmset$ such that for some unit
vector $w \in \bR^n$ (called the \emph{normal vector}) and threshold $t \in
\bR$
\[
  h(x) = \sign(\inn{w,x}-t) \,.
\]
\end{definition}
We use the standard definition of distance between functions.
\begin{definition}[Distance]
Let $\mu$ be a distribution over $\bR^n$ and let $f,g : \bR^n \to \pmset$ be
measurable functions. Then the distance is defined as
\[
  \dist_\mu(f,g) \define \Pru{x \sim \mu}{f(x) \neq g(x)} \,.
\]
We say $f$ is \emph{$\epsilon$-far} from $g$ if $\dist_\mu(f,g) \geq \epsilon$
and \emph{$\epsilon$-close} if $\dist_\mu(f,g) \leq \epsilon$.
\end{definition}
Functions that are close to being the same constant function are close to each
other.
\begin{proposition}
\label{prop:distance from mean}
Let $f,g$ be $\pm 1$-valued functions with $\abs{\Ex f} \geq 1-\epsilon$ and
$\abs{\Ex f - \Ex g} \leq \delta$. Then $\dist(f,g) \leq \epsilon + \delta/2$.
\end{proposition}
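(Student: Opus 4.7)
The plan is to use the triangle inequality against the constant function $c$ that equals $\sign(\mathbb{E} f)$. Without loss of generality, assume $\mathbb{E} f \geq 1 - \epsilon$ (the other case is symmetric), and take $c \equiv 1$. Since $f$ is $\pm 1$-valued, we have the identity $\mathbb{E} f = 1 - 2\Pr[f = -1]$, which rearranges to $\Pr[f = -1] = (1 - \mathbb{E} f)/2$. Under the hypothesis $\mathbb{E} f \geq 1 - \epsilon$, this gives $\mathrm{dist}(f, c) = \Pr[f \neq 1] \leq \epsilon/2$.

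Next I would use the closeness of means: from $|\mathbb{E} f - \mathbb{E} g| \leq \delta$ we get $\mathbb{E} g \geq \mathbb{E} f - \delta \geq 1 - \epsilon - \delta$. Applying the same identity to $g$,
\[
\mathrm{dist}(g, c) = \Pr[g = -1] = \frac{1 - \mathbb{E} g}{2} \leq \frac{\epsilon + \delta}{2}.
\]

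Finally I would invoke the triangle inequality for $\mathrm{dist}_\mu$ (which follows from a union bound on $\{f \neq g\} \subseteq \{f \neq c\} \cup \{g \neq c\}$) to conclude
\[
\mathrm{dist}(f, g) \leq \mathrm{dist}(f, c) + \mathrm{dist}(g, c) \leq \frac{\epsilon}{2} + \frac{\epsilon + \delta}{2} = \epsilon + \frac{\delta}{2}.
\]
There is no real obstacle here; the only thing to be careful about is handling the sign of $\mathbb{E} f$, which is why one reduces to the case $\mathbb{E} f \geq 1 - \epsilon$ at the start and notes that the case $\mathbb{E} f \leq -(1-\epsilon)$ is identical with $c \equiv -1$.
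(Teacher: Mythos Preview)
Your proof is correct and essentially identical to the paper's: both reduce (after the same WLOG) to the bound $\dist(f,g)\le \Pr[f=-1]+\Pr[g=-1]$ and then control $\Pr[g=-1]$ via the mean hypothesis. The only cosmetic difference is that you phrase the first step as a triangle inequality through the constant function $c\equiv 1$, whereas the paper writes the inclusion $\{f\neq g\}\subseteq\{f=-1\}\cup\{g=-1\}$ directly.
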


\subsection{Spheres and Rotation Invariance}
\label{subsection:spheres and rotation invariance}
\begin{definition}[Rotation-Invariant Distributions]
\label{def:ri distributions}
  Write $\sigma$ for the uniform distribution over the unit sphere.
  A distribution $\mu$ over $\bR^n$ is \emph{rotation-invariant} (RI) if for
  some distribution $\mu_\circ$ over $\bR_{\geq 0}$, $\mu = \sigma \mu_\circ$.
  That is, for a random vector $x \sim \mu$, $x = au$ for independent random
  variables $u \sim \sigma$ and $a \sim \mu_\circ$. We will refer to $\mu_\circ$
  as the \emph{radial distribution} of $\mu$.
\end{definition}
We will frequently use the projection of an RI distribution onto a
one-dimensional line: for RI distributions this projection is the same for every
line. We will define some notation for this projection operation:
\begin{definition}[1-Dimensional Projection]
Let $\mu$ be any RI distribution over $\bR^n$ with radial distribution
$\mu_\circ$. We define $\mu_\pi$ to be the distribution of $\inn{x,e_1}$ where
$x \sim \mu$ and $e_1$ is the first standard basis vector. Note that by rotation
invariance, we could define $\mu_\pi$ as the distribution of $\inn{x,u}$ for any
unit vector $u \in \bR^n$.
\end{definition}
We will usually assume that our distributions have the same scale. For RI
distributions it is convenient to assume that they are \emph{isotropic}, which
means that the 1-dimensional projection has unit variance.
\begin{definition}[Isotropic]
A distribution $\mu$ on $\bR^n$ is \emph{isotropic} if for every unit vector $u
\in \bR^n, \bE[\inn{u,x}^2]=1$.
\end{definition}
\begin{proposition}
\label{prop:isotropy}
Let $\mu$ be any RI distribution over $\bR^n$. $\mu$ is isotropic iff $\bE_{x
\sim \mu}[\norm{x}^2]=n$.
\end{proposition}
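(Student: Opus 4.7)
The plan is to prove both directions by decomposing $\|x\|^2$ in an orthonormal basis and using rotation invariance to relate the second moments along different unit vectors.

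First I would observe that for any orthonormal basis $\{u_1,\dotsc,u_n\}$ of $\bR^n$, one has the Parseval-type identity $\|x\|^2 = \sum_{i=1}^n \inn{u_i,x}^2$. Taking expectations under $\mu$ and using linearity gives
\[
  \Exu{x \sim \mu}{\|x\|^2} = \sum_{i=1}^n \Exu{x \sim \mu}{\inn{u_i,x}^2}\,.
\]
For the forward direction, if $\mu$ is isotropic then each summand equals $1$, so the total is $n$.

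For the reverse direction, the key observation is that rotation invariance forces $\bE[\inn{u,x}^2]$ to be the same for every unit vector $u$: by Definition \ref{def:ri distributions}, $\inn{u,x}$ has the same distribution as $\inn{e_1,x}$ (both equal $\mu_\pi$), hence the same second moment. Apply the identity above to the standard basis: if $\bE[\|x\|^2]=n$, then $n \cdot \bE[\inn{e_1,x}^2]=n$, so $\bE[\inn{e_1,x}^2]=1$, and rotation invariance promotes this to $\bE[\inn{u,x}^2]=1$ for every unit $u$.

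There is no serious obstacle; the only thing to be careful about is to invoke rotation invariance explicitly in the reverse direction (otherwise the implication fails for non-RI distributions), and to note that $\mu$ has finite second moments whenever $\bE[\|x\|^2]$ is finite, so all expectations above are well-defined.
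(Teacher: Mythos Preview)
Your proof is correct and essentially identical to the paper's: both expand $\|x\|^2$ as $\sum_i \inn{e_i,x}^2$, use rotation invariance to conclude that all terms $\bE[\inn{u,x}^2]$ coincide, and read off both directions from the resulting identity $\bE[\|x\|^2] = n\,\bE[\inn{u,x}^2]$. Your version is slightly more explicit about where rotation invariance enters and about finiteness of second moments, but the argument is the same.
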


RI distributions are essentially convex combinations of spheres, and the uniform
distribution over the sphere behaves similarly to the Gaussian distribution.
This is evident in the following propositions giving bounds on the tails and on
the density of the 1-dimensional projection. The proofs  of these propositions
can be found in the appendix.
\begin{proposition}
\label{prop:tail bounds for spheres}
Let $\sigma$ be the uniform distribution over the sphere of radius $r$, and let
$t \geq 0$. Then for any unit vector $u$,
\[
\Pru{x \sim \sigma}{\inn{u,x} \geq t} \leq \sqrt{2}e^{-t^2(n-2)/2r^2} 
\leq \sqrt{2}e^{-t^2n/4r^2}
\]
(where the last inequality holds when $n \geq 4$).
\end{proposition}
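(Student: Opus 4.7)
My plan is to reduce to the unit sphere via rotation invariance and scaling, then prove a sharp pointwise bound on the tail integral of the one-dimensional marginal. By rotation invariance of $\sigma$, I may assume $u = e_1$. Substituting $y := x/r$ reduces the problem to showing that, for $y$ uniform on the unit sphere $S^{n-1}$ and $s := t/r$,
\[
  \Pr{y_1 \geq s} \;\leq\; \tfrac{1}{2}(1-s^2)^{(n-2)/2}.
\]
Combined with $1-s^2 \leq e^{-s^2}$ and the trivial step $(n-2)/2 \geq n/4$ for $n \geq 4$, this yields both stated inequalities, with the $\sqrt{2}$ factor absorbing a great deal of slack.

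Next I will use the classical density formula for the first coordinate of $y$: on $[-1,1]$ it has density $c_n (1-u^2)^{(n-3)/2}$, where $c_n$ is fixed by normalization. This follows from slicing the sphere at height $y_1 = u$: the slice is an $(n-2)$-sphere of radius $\sqrt{1-u^2}$, with surface measure proportional to $(1-u^2)^{(n-2)/2}$, and projecting onto the $y_1$-axis contributes an additional tilt factor of $1/\sqrt{1-u^2}$.

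The heart of the argument will be the substitution $u = \sqrt{s^2 + v^2}$ in the tail integral. Under this change of variable, $u$ sweeps $[s,1]$ as $v$ sweeps $[0,\sqrt{1-s^2}]$, and $du = (v/u)\,dv$ with $v/u \leq 1$. Hence
\[
  \int_s^1 (1-u^2)^{(n-3)/2}\,du \;\leq\; \int_0^{\sqrt{1-s^2}} \left((1-s^2) - v^2\right)^{(n-3)/2}\,dv.
\]
Rescaling $w := v/\sqrt{1-s^2}$ factors $(1-s^2)^{(n-2)/2}$ out of the right-hand side and leaves the integral $\int_0^1 (1-w^2)^{(n-3)/2}\,dw$, which, by evenness, equals $1/(2c_n)$. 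Multiplying both sides by $c_n$ delivers the pointwise bound displayed in the first paragraph.

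The only genuinely non-routine step is spotting the substitution $u = \sqrt{s^2+v^2}$, which cleverly converts the tail integral into a rescaled copy of the full normalization integral; the Jacobian $v/u \leq 1$ is precisely what makes the estimate work. Once that substitution is in hand, the rest — the slicing formula for the density, the rescaling $w = v/\sqrt{1-s^2}$, and the final exponential bound — is standard bookkeeping.
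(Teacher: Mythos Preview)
Your proof is correct and takes a genuinely different route from the paper's. The paper bounds the integrand pointwise by a Gaussian, $(1-u^2)^{(n-2)/2} \leq e^{-u^2(n-2)/2}$, extends the integral to $\infty$, invokes a standard Gaussian tail bound, and then controls the leftover constant $S_{n-2}/S_{n-1}$ via a separate gamma-function ratio estimate (Proposition~\ref{prop:sphere ratio}). Your substitution $u=\sqrt{s^2+v^2}$ instead turns the tail integral into a scaled copy of the full normalization integral, so the constant $c_n$ cancels exactly and no external ingredients are needed. This is more elementary---no Gaussian tails, no gamma ratios---and also sharper: you obtain constant $\tfrac12$ where the paper gets $\sqrt{2}$. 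The paper's approach, on the other hand, makes the heuristic ``the sphere marginal looks Gaussian'' explicit, which ties in naturally with how the bound is used elsewhere in the paper (e.g.\ Lemma~\ref{lemma:upper bound on norm distance}).
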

\begin{proposition}
\label{prop:density of sphere}
Let $\sigma$ be the uniform distribution of the sphere of radius $r$ over
$\bR^n$. Let $\sigma_\pi$ be the density of the 1-dimensional projection.
Then $\sigma_\pi(x) \leq \frac{\sqrt{n-1}}{\sqrt{2\pi r^2}}
e^{-\frac{x^2(n-2)}{2r^2}}$, and in particular, for $r = \sqrt{n}$,
$\sigma_\pi(x) \leq \frac{1}{\sqrt{2\pi}} e^{-x^2/4}$ when $n \geq 4$.
\end{proposition}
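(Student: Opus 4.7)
The plan is to reduce to the unit sphere by rescaling, derive the projected density by slicing, and then apply two elementary inequalities: a bound on the Gamma ratio and $1-y \leq e^{-y}$.

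First, consider the uniform distribution on the unit sphere $S^{n-1} \subset \bR^n$. Parameterize any point by $x = t\,e_1 + \sqrt{1-t^2}\,u$ with $t = \inn{x,e_1} \in [-1,1]$ and $u$ a unit vector in the hyperplane $e_1^\perp$, so the slice at height $t$ is an $(n-2)$-sphere of radius $\sqrt{1-t^2}$. Substituting $t = \cos\phi$ (or equivalently using the coarea formula) and normalizing by the surface area $|S^{n-1}|$ gives the one-dimensional density
\[
  p(t) = \frac{|S^{n-2}|}{|S^{n-1}|}\,(1-t^2)^{(n-3)/2}, \qquad t \in [-1,1].
\]
Using the standard identity $|S^{k-1}| = 2\pi^{k/2}/\Gamma(k/2)$ rewrites the prefactor as $\Gamma(n/2)/(\sqrt{\pi}\,\Gamma((n-1)/2))$, and for a sphere of radius $r$ a change of variables yields $\sigma_\pi(x) = \tfrac{1}{r}\,p(x/r)$.

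The remaining work is to bound these two factors. For the prefactor, Gautschi's inequality (i.e.\ $\Gamma(y+\tfrac12)/\Gamma(y) \leq \sqrt{y}$ applied with $y = (n-1)/2$) gives $\Gamma(n/2)/\Gamma((n-1)/2) \leq \sqrt{(n-1)/2}$, and hence
\[
  \sigma_\pi(x) \leq \frac{\sqrt{n-1}}{\sqrt{2\pi r^2}}\,\bigl(1 - x^2/r^2\bigr)^{(n-3)/2}.
\]
For the polynomial factor, I apply $1-y \leq e^{-y}$ to produce the exponential form; matching the stated exponent $(n-2)/(2r^2)$ uses a slightly sharper logarithmic estimate (e.g.\ $\ln(1-y) \leq -y - y^2/2$) or absorbs the off-by-one into the prefactor. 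For the specialization $r = \sqrt{n}$ with $n \geq 4$, I use $\sqrt{n-1}/\sqrt{n} \leq 1$ to drop the prefactor to $1/\sqrt{2\pi}$, together with the arithmetic fact that $(n-2)/(2n) \geq 1/4$ precisely when $n \geq 4$, to weaken the exponent to $x^2/4$.

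The only mildly nontrivial ingredient is the Gamma-ratio bound, which I would simply quote as a form of Gautschi's inequality rather than reprove; the rest is routine surface-area bookkeeping and the standard spherical change of variables. I expect that the main care needed is in handling the Jacobian $1/\sqrt{1-t^2}$ correctly when converting the $(n-2)$-dimensional area of the slice into a one-dimensional density, since an off-by-one in the exponent of $(1-t^2)$ would propagate into the final bound.
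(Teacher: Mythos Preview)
Your approach is exactly the paper's: slice the sphere, bound the Gamma ratio (the paper cites the same Gautschi-type inequality via a separate proposition), and apply $1-y \le e^{-y}$. So methodologically there is nothing to add.

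Where you and the paper diverge is precisely the point you flagged at the end. You correctly include the Jacobian $1/\sqrt{1-t^2}$ and obtain the exponent $(n-3)/2$; the paper's own proof writes the density as $\frac{S_{n-2}}{rS_{n-1}}(1-(x/r)^2)^{(n-2)/2}$, i.e.\ it omits the Jacobian and gets $(n-2)/2$. Your instinct is right and the paper's derivation has an off-by-one slip. Your attempts to ``recover'' the $(n-2)$ exponent (via $\ln(1-y)\le -y-y^2/2$ or absorbing into the prefactor) do not actually work uniformly in $x$; in fact one can check that for $n=4$, $r=2$, $x=\sqrt{2}$ the stated bound $\frac{\sqrt{n-1}}{\sqrt{2\pi r^2}}e^{-x^2(n-2)/(2r^2)}$ is slightly below the true density. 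The honest fix is simply to keep $(n-3)$ in the exponent, which is what your derivation gives; for the specialization $r=\sqrt n$ this yields $\sigma_\pi(x)\le \frac{1}{\sqrt{2\pi}}e^{-x^2(n-3)/(2n)}$, and $(n-3)/(2n)\ge 1/4$ for $n\ge 6$ (or one accepts a slightly weaker constant in the exponent for $n\ge 4$). None of the downstream uses in the paper are sensitive to this, so you should record the correct exponent rather than try to force-fit the $(n-2)$.
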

\ignore{\begin{proposition}
\label{prop:projection variance of unit sphere}
  Let $x$ be drawn uniformly at random from the sphere with radius $r$ and let
  $u$ be any unit vector. Then $\Ex{\inn{x,u}^2} \leq \frac{\sqrt{3}
  r^2}{\sqrt{2}(n-2)}$.
\end{proposition}}
\ignore{
\begin{proposition}
\label{prop:sphere absolute value}
Let $\mu$ be the uniform distribution over the sphere with radius $r$ and
dimension $n \geq 4$. Then for any unit vector $u$, $\Ex{\abs{\inn{x,u}}} \in
\left[\frac{r}{2\sqrt{n}}, \frac{r}{\sqrt{n}} \right]$.
\end{proposition}
}
\ignore{
By decomposing an RI distribution into its component spheres we can get a bound
on the 1-dimensional variance:
\begin{proposition}
\label{prop:projection variance is constant}
Let $\mu$ be any isotropic RI distribution. Then for any unit vector $u$, the
variance of the one-dimensional projection satisfies $\Ex{\inn{x,u}^2} \leq
\sqrt{6}$.
\end{proposition}
\begin{proof}
Let $\mu_\circ$ be the radial distribution of $\mu$. Then
\begin{align*}
    \Exu{x \sim \mu}{\inn{x, u}^2}
    &= \Exu{r \sim \mu_\circ}{\Exuc{x \sim \mu}{\inn{x, u}^2}{\norm{x}=r}}
    = \Exu{r \sim \mu_\circ}{r^2 \Exuc{x \sim \mu}{\inn{x,  u}^2}{\norm{x}=1}} \\
    &= \Ex{\norm{x}^2} \Exuc{x \sim \mu}{\inn{x,  u}^2}{\norm{x}=1}
    \leq n \frac{\sqrt{3}}{\sqrt{2}(n-2)} \leq \sqrt{6}
\end{align*}
where the penultimate inequality is due to Proposition \ref{prop:projection
variance of unit sphere} and the final inequality is $n-2 \geq n/2$ for $n \geq
4$.
\end{proof}
}
\subsection{Halfspaces and Rotation Invariance}
\label{subsection:halfspaces and rotation invariance}
Halfspaces have several useful properties in RI distributions, notably the fact
that the normal vectors and centers of mass are parallel and that the distance
between two halfspaces can be decomposed into angular and threshold components,
which we present here.
\begin{proposition}
\label{prop:center and normal are parallel}
For any RI distribution $\mu$ over $\bR^n$ and any halfspace $h(x) =
\sign(\inn{w,x}-t)$, there exists a scalar $s > 0$, such that $\bE_{x \sim
\mu}[x h(x)] = s w$.
\end{proposition}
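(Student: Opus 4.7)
The plan is to exploit rotation invariance twice: first to align $w$ with a coordinate axis, and then to use coordinate reflections to kill the off-axis components of the center of mass.

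First, since $\mu$ is invariant under the full orthogonal group, I would pick an orthogonal $U$ with $Uw = e_1$. Then the change of variables $y = Ux$ preserves $\mu$ and sends $\inn{w,x}$ to $\inn{e_1,y} = y_1$, so it suffices to prove the claim in the case $w = e_1$ and show that $\bE[x\,h(x)]$ is a nonnegative multiple of $e_1$ when $h(x) = \sign(x_1 - t)$.

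Next, to kill the off-axis coordinates, I would use the reflection $R_i$ that flips the sign of $x_i$ for each $i \geq 2$. Because $R_i$ is orthogonal, $\mu$ is $R_i$-invariant; because $h$ depends only on $x_1$, we have $h(R_i x) = h(x)$. The change of variables $x \mapsto R_i x$ therefore gives
\[
  \bE[x_i h(x)] \;=\; \bE[(R_i x)_i\, h(R_i x)] \;=\; -\bE[x_i h(x)],
\]
so $\bE[x_i h(x)] = 0$ for all $i \geq 2$. Hence $\bE[xh(x)] = s\,e_1$ with $s := \bE[x_1 h(x)]$.

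Finally, I would verify $s > 0$ by working with the one-dimensional projection $\mu_\pi$, which is symmetric about $0$ by rotation invariance. Writing
\[
  s \;=\; \int_t^\infty x\, d\mu_\pi(x) \;-\; \int_{-\infty}^{t} x\, d\mu_\pi(x),
\]
and applying the substitution $x \mapsto -x$ to the second integral, I expect cancellation over the symmetric interval $[-|t|,|t|]$ to collapse the expression to $s = 2\int_{|t|}^\infty x\, d\mu_\pi(x)$, which is clearly $\geq 0$ and is strictly positive exactly when $\mu_\pi$ places positive mass outside $[-|t|,|t|]$---equivalently, when $h$ is not almost surely constant on the support of $\mu$. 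The only obstacle is careful sign bookkeeping, in particular unifying the cases $t \geq 0$ and $t < 0$; the structural ingredients (orthogonal invariance for alignment, reflection symmetry for zeroing, and symmetry of $\mu_\pi$ for positivity) are all immediate from rotation invariance.
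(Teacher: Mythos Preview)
Your proof is correct and slightly cleaner than the paper's. Both arguments first reduce to $w = e_1$; the difference is in how the off-axis coordinates are killed. The paper conditions on the radius (writing $\mu$ as a mixture of spheres $\sigma_r$) and then on $x_1 = \gamma$, observing that $(x_2,\dotsc,x_n)$ is then uniform on an $(n-2)$-sphere and hence centered, so $\bE_{\sigma_r}[x \mid x_1=\gamma] = \gamma e_1$; integrating out $\gamma$ and $r$ gives the result. Your reflection argument bypasses the radial decomposition entirely: invariance under the single orthogonal map $R_i$ already forces $\bE[x_i h(x)] = 0$, with no conditioning or integration needed. Your approach is more elementary and would work verbatim for any distribution invariant under sign-flips of the non-$w$ coordinates, not just fully rotation-invariant ones. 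You also handle the positivity of $s$ explicitly (and correctly note the degenerate case where $h$ is $\mu$-a.s.\ constant, in which $s=0$); the paper's proof simply asserts $s = \bE[\inn{w,x}h(x)]$ without verifying the strict inequality.
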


We call two halfspaces \emph{aligned} if their normal vectors (and therefore
centers of mass) are parallel. For arbitrary functions, alignment refers to the
centers of mass:
\begin{definition}[Aligned functions]
Let $\mu$ be any probability distribution over $\bR^n$ and
let $f, g : \bR^n \to \pmset$ be measurable functions. We will say $f$ and $g$
are \emph{aligned} (with respect to $\mu$) if their centers of mass are
parallel, i.e.~for some scalar $s > 0$, $\bE_{x \sim \mu}[x f(x)] = s \bE_{x \sim
\mu}[x g(x)]$. In particular, if $f,g$ are halfspaces with normal vectors $u,v$,
and $\mu$ is RI, then $f,g$ are aligned if and only if $u = sv$.
\end{definition}

Now we present the decomposition of distance into the angular component and the
threshold component; following are the definitions of these components.
\begin{definition}[Threshold Distance]
For any RI distribution $\mu$ we will we will define $\hdist_\mu$ as a metric on
$\bR$ as follows: for $a, b \in \bR$ define the halfspaces $h_a(x) = \sign(x_1 -
a), h_b(x) = \sign(x_1 - b)$. Then
\[
  \hdist_\mu(a,b) = \dist_\mu(h_a, h_b) \,.
\]
\end{definition}
We will drop the subscript when the distribution is clear from context. Next we
will introduce notation for the angle between two halfspaces:
\begin{definition}[Halfspace Angle]
For two halfspaces $g, h$ with normal (unit) vectors $u,v$ respectively, we will
write $\alpha(g,h)$ for the angle between $u$ and $v$:
\[
\alpha(g,h)
\define \acos(\inn{u,v}) \,.
\]
\end{definition}

We can decompose the distance between two halfspaces into the sum of these two
metrics:

\begin{proposition}
\label{prop:distance decomposition}
  For any two halfspaces $h(x) = \sign(\inn{v,x}-q), g(x) = \sign(\inn{u,x}-p)$
  (where $u,v$ are unit vectors) and RI distribution $\mu$,
  \[
    \dist_\mu(h,g) \leq \frac{\alpha(h,g)}{\pi} + \hdist_\mu(p,q) \,.
  \]
\end{proposition}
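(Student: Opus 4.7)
The plan is to bound $\dist_\mu(h,g)$ via the triangle inequality, using the intermediate halfspace $h'(x) := \sign(\inn{u,x} - q)$, which shares the threshold $q$ with $h$ and the normal $u$ with $g$. Then
\[
  \dist_\mu(h,g) \leq \dist_\mu(h, h') + \dist_\mu(h', g),
\]
so it suffices to prove $\dist_\mu(h, h') \leq \alpha(h,g)/\pi$ and $\dist_\mu(h', g) = \hdist_\mu(p, q)$.

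The second identity is immediate: since $h'$ and $g$ share the normal vector $u$, rotation invariance of $\mu$ lets me rotate $u$ to $e_1$, so that $\dist_\mu(h', g) = \Pru{x \sim \mu}{\sign(x_1 - q) \neq \sign(x_1 - p)} = \hdist_\mu(p,q)$ by the definition of $\hdist_\mu$.

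For the first bound I would reduce to two dimensions. Because $h$ and $h'$ depend only on the projection of $x$ onto $V := \mathrm{span}(u,v)$, and the pushforward of an RI distribution onto any 2-dimensional subspace is itself a 2D RI distribution, the problem reduces to the 2D case. Writing this 2D marginal as a radius $r$ times an independent uniform angle $\theta \in [0,2\pi)$, and choosing orthonormal coordinates on $V$ so that $u = (1,0)$ and $v = (\cos\alpha, \sin\alpha)$, the task becomes showing that for every $r \geq 0$,
\[
  \Pru{\theta}{\sign(r\cos\theta - q) \neq \sign(r\cos(\theta - \alpha) - q)} \leq \alpha/\pi;
\]
taking expectation over $r$ then yields $\dist_\mu(h,h') \leq \alpha/\pi$.

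This last inequality on the circle is the main technical step, though elementary. Using the involution $(q,u,v) \mapsto (-q,-u,-v)$, which preserves both sides of the claim, I may assume $q \geq 0$. The two halfspaces restricted to the circle of radius $r$ then become arcs of equal angular length $2\arccos(q/r) \leq \pi$ (taking the empty arc when $q > r$), and the second is obtained from the first by a rotation of angle $\alpha$. The bound $\arccos(q/r) \leq \pi/2$ prevents the arcs from wrapping around to overlap on the far side, so a short case analysis depending on whether $\alpha \leq 2\arccos(q/r)$ shows that the angular measure of their symmetric difference is $\min(2\alpha,\, 4\arccos(q/r)) \leq 2\alpha$, giving probability at most $\alpha/\pi$, as desired.
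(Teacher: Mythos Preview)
Your proof is correct and follows essentially the same route as the paper: split via the triangle inequality through an intermediate halfspace sharing the normal of one and the threshold of the other, identify the aligned piece with $\hdist_\mu(p,q)$, and reduce the equal-threshold piece to a 2D rotation-invariant question and thence to a single circle. The paper chooses the intermediate halfspace with threshold $p$ after assuming $|p|\geq|q|$ and does the circle step via a two-case picture (intersecting vs.\ disjoint chords), whereas you take threshold $q$ and argue directly that two equal arcs of length $\leq \pi$ rotated by $\alpha$ have symmetric difference $\min(2\alpha,2L)\leq 2\alpha$; these are the same argument in slightly different clothing.
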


\ignore{
Suppose there exist points $x$ on the sphere such that $g(x)=h'(x)=1$. Then
$\inn{u,x}, \inn{v,x} \geq p$
Suppose $\inn{b,v} \geq p$. 

We can easily see that $\alpha(a,v) =
\alpha(b,u)$ which implies $\alpha(a,b) = \alpha(u,v) = \alpha$. Thus in case 1,
where lines $ac, bd$ intersect inside the circle, we have $\dist_{\sigma_r}(g,h')
= \frac{2\alpha(a,b)}{2\pi} = \alpha/\pi$. And in case 2, where the lines
intersect outside the circle, we have $\dist_{\sigma_r}(g,h') =
\frac{2\alpha(a,c)}{2\pi} \leq \alpha(a,b)/\pi = \alpha/\pi$.
}
Finally, here is an identity for computing the mean of halfspaces:
\begin{proposition}
  \label{prop:halfspace mean}
  Let $h$ be a halfspace with threshold $t$ and normal vector $w$. Then for
  any RI distribution,
  \[
  \abs{\Ex h} = \Pr{-|t| \leq \inn{w,x} \leq |t|} \,.
  \]
\end{proposition}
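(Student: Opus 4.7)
The plan is to reduce the identity to a one-dimensional computation involving the projected distribution $\mu_\pi$, and then exploit the fact that $\mu_\pi$ is symmetric about the origin.

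I would set $Y \define \inn{w,x}$ for $x \sim \mu$. Because $w$ is a unit vector and $\mu$ is rotation invariant, $Y$ has distribution $\mu_\pi$; applying the same observation to the unit vector $-w$ shows that $-Y = \inn{-w,x}$ also has distribution $\mu_\pi$, so $\mu_\pi$ is symmetric about $0$. It also suffices to handle the case $t \geq 0$, since replacing $(w,t)$ by $(-w,-t)$ leaves both sides of the claimed identity unchanged (the interval $[-|t|,|t|]$ is symmetric about $0$, and $-w$ is still a unit vector).

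Assuming $t \geq 0$, I would partition the real line into the three disjoint events $\{Y > t\}$, $\{Y < -t\}$, and $\{-t \leq Y \leq t\}$. Symmetry of $\mu_\pi$ yields $\Pr{Y > t} = \Pr{Y < -t}$, and so
\[
2\Pr{Y > t} + \Pr{-t \leq Y \leq t} = 1\,.
\]
Since $\Pr{h = +1} = \Pr{Y > t}$ and $\Pr{h = -1} = 1 - \Pr{Y > t}$, subtracting gives $\Ex h = -\Pr{-t \leq Y \leq t}$, and taking absolute values yields $\abs{\Ex h} = \Pr{-|t| \leq \inn{w,x} \leq |t|}$, as required.

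I do not expect any substantive obstacle: once the reduction to the one-dimensional projection is in place the identity is essentially a one-line symmetry computation. The only minor subtlety is that $\mu_\pi$ could have atoms at $\pm t$, in which case one must fix a convention for the value of $\sign(0)$; however, by symmetry $\Pr{Y = t} = \Pr{Y = -t}$, and these atoms contribute equally to $\Pr{h = -1} - \Pr{h = +1}$ and to $\Pr{-|t| \leq Y \leq |t|}$, so the identity holds exactly under any such convention.
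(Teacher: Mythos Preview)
Your proof is correct and follows essentially the same approach as the paper's: reduce to the one-dimensional projection $\mu_\pi$, use its symmetry about the origin, and compute $\Ex h$ directly. The only cosmetic difference is that the paper normalizes to $t \leq 0$ (by replacing $h$ with $-h$) rather than $t \geq 0$, but the arithmetic is identical.
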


\subsection{Concentration Inequalities and Empirical Estimation}
We refer the reader to the appendix for a discussion of the standard empirical
estimation techniques used in this paper. For now it suffices to mention that
the algorithms in this paper will frequently refer to the algorithm
$\textsc{Estimate-Mean}(\mu, f, \epsilon, \delta)$ which produces an estimate of
$\bE_{x \sim \mu}[f(x)]$ with additive error $\epsilon$ and confidence $\delta$.

\section{Centers of Mass}
\label{section:centers of mass}
Our algorithms will rely primarily on the properties of the centers of mass of
Boolean functions, or more specifically the norms of these centers, which we
will call the \emph{center-norms}. Centers of mass are fundamental quantities in
the study of halfspaces: in the language of Fourier analysis of Boolean
functions (see for example \cite{OD14} for an introduction to this topic), what
we call the center of mass $\Ex{x f(x) }$ is the vector of \emph{degree-1
Fourier coefficients} (one can see this by observing that the $\nth i$ Fourier
coefficient is $\hat f(i) = \Ex{x_i f(x) }$). The relationship between the degree-1 Fourier
coefficients and halfspaces has arisen in a number of works (e.g. \cite{Chow61,
MORS10, OD14}). Another name for the set of $n+1$ quantities $\Ex f, \Ex{x_i
f(x)}$ is the \emph{Chow parameters}, after a result of Chow which states that
for any function $f : \pmset^n \to \pmset$ and any halfspace $h$, if $f$ and $h$
share the same Chow parameters then $f = h$ \cite{Chow61}. Much of the work on
halfspaces can be interpreted as variations on this theme, including our Theorem
\ref{thm:gap theorem}.

It has been known for some time that halfspaces maximize the quantity
$\norm{\Ex{xf(x)}}$ when the distribution is uniform over $\pmset^n$ or Gaussian
(e.g. \cite{Wind71}). The main result of this section is a generalization of
this relationship, Theorem \ref{thm:gap theorem}: we will show that for any
distribution $\mu$ over $\bR^n$ with bounded \emph{width}, halfspaces not only
maximize the center-norm but the larger the center-norm of a function, the
closer it must be to a halfspace. The other results in this section are bounds
on the center-norms that we will require to prove guarantees on our algorithms.

\subsection{Width}
\label{subsection:width}
Recall the example from the introduction (Theorem \ref{thm:counterexample})
which stated that pure center-of-mass algorithms cannot work as testers. We
now prove this theorem, which will illustrate that the main obstacle to
generalizing the center-of-mass algorithms is that RI distributions can be
densely concentrated on very small areas. This motivates the definition of
\emph{width}, which quantifies this dense concentration.

The algorithms of \cite{MORS10, BBBY12} work by making an estimate of the
center-norm: we will refer to such algorithms as \emph{center-of-mass
algorithms} and interpret them as making a single query to a center-norm
estimator. Such algorithms should work regardless of which estimator is used, as
long as it satisfies a sufficient accuracy guarantee. We formalize this in the
next definition, which is inspired by the Statistical Query model of Kearns
\cite{Kearns98}:
\begin{definition}
  A \emph{center-norm oracle} for an isotropic rotation-invariant
  distribution $\mu$ and function $f$ is an oracle $C_{f,\mu}(\epsilon)$ which
  on request $\epsilon \geq 0$ produces a random variable $C$ satisfying
  \[
  \bP[| C - \|\Exu{x \sim \mu}{xf(x)}\|_2 | > \epsilon] \leq 1/3 \,.
  \]
  A \emph{center-of-mass tester} for balanced halfspaces is an algorithm $A$
  which, when given access to any center-norm oracle $C_{f,\mu}$,
  must satisfy the following\footnote{Making
  more than one request to this oracle is superfluous: we could provide an oracle
  that always overestimates by exactly $\epsilon$, so making multiple requests
  would give no more information than just the request for highest accuracy.}:
  \begin{enumerate}
    \item If $f$ is a balanced halfspace, then $A(C_{f,\mu})$ accepts with
      probability at least $2/3$, and
    \item If $f$ is $1/4$-far (in $\mu$) from all halfspaces, then
      $A(C_{f,\mu})$ rejects with probability at least $2/3$.
  \end{enumerate}
  Observe that $A$ does not have any knowledge of $\mu$, and note that we are
  weakening the standard requirements for a tester: we are only required to
  accept when $f$ is balanced, and we are concerned with only constant distance.
  For the Gaussian distribution, the algorithms of \cite{MORS10, BBBY12} would
  satisfy this definition, using a request $\epsilon = \Omega(1)$.
\end{definition}
We can now show that such algorithms cannot exist for unknown isotropic RI
distributions, since without knowledge of the distribution we must demand
perfect estimations of the center-norm:

\textbf{Theorem \ref{thm:counterexample}.}\emph{
Any center-of-mass tester $A$ for balanced halfspaces must request accuracy
$\epsilon = 0$ from the center-norm oracle.
}

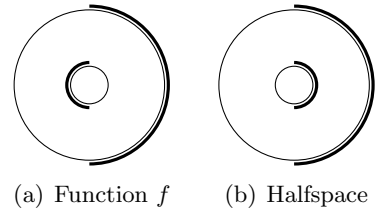
\begin{wrapfigure}[10]{r}{2.5in}
\centering
\mbox{
\subfigure[Function $f$]{
\begin{tikzpicture}
  \draw (0,0) circle [x radius=1, y radius=1];
  \draw (0,0) circle [x radius=0.25, y radius=0.25];
  \draw[very thick] (90:1.05) arc (90:-90:1.05) ;
  \draw[very thick] (90:0.3) arc (90:270:0.3) ;
\end{tikzpicture}
}\quad
\subfigure[Halfspace]{
\begin{tikzpicture}
  \draw (0,0) circle [x radius=1, y radius=1];
  \draw (0,0) circle [x radius=0.25, y radius=0.25];
  \draw[very thick] (90:1.05) arc (90:-90:1.05) ;
  \draw[very thick] (90:0.3) arc (90:-90:0.3) ;
\end{tikzpicture}
}
}
\caption{A counterexample for Center-of-Mass algorithms.}
\label{fig:counterexample}
\end{wrapfigure}

We will actually prove the following claim; from this claim, the theorem holds
since the oracles give identical responses unless $\epsilon=0$.
\begin{claim}
  For any $\epsilon > 0$, there exist functions $f,h$ and a rotation-invariant
  distribution $\mu$ such that $h$ is a balanced halfspace, $f$ is $1/4$-far in
  $\mu$ from all halfspaces, and there exist center-norm oracles $C_{f,\mu},
  C_{h,\mu}$ which on query $\epsilon$ produce random variables $C_f, C_h$ with
  the same distribution.
\end{claim}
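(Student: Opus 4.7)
The plan is to construct a rotation-invariant $\mu$ supported on two concentric spheres of very different radii, and to define $f$ by flipping a halfspace $h$ on the smaller sphere. This perturbation keeps the center-norm within $2\epsilon$ of the halfspace's but leaves $f$ at distance strictly greater than $1/4$ from every halfspace. Concretely, let $m_n := \Exu{u \sim \sigma}{|u_1|}$ for $\sigma$ uniform on the unit sphere in $\bR^n$, choose $r_1 \in (0, 2\epsilon/m_n]$, set $r_2 := \sqrt{2n-r_1^2}$, and let $\mu$ be the RI distribution with radial distribution $\tfrac12\delta_{r_1} + \tfrac12\delta_{r_2}$. By Proposition~\ref{prop:isotropy}, $\mu$ is isotropic. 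Take $h(x) := \sign(x_1)$ as the balanced halfspace, and define $f := -h$ on the inner sphere $\{\norm{x}=r_1\}$ and $f := h$ on the outer sphere $\{\norm{x}=r_2\}$.

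Using axial symmetry and the fact that the uniform distribution on the sphere of radius $r$ satisfies $\Ex{x_1\sign(x_1)} = r\,m_n$, one computes
\[
  \norm{\Ex{xh(x)}}_2 = m_n \cdot \tfrac{r_1+r_2}{2} \qquad \text{and} \qquad \norm{\Ex{xf(x)}}_2 = m_n \cdot \tfrac{r_2-r_1}{2},
\]
so the two center-norms differ by exactly $m_n r_1 \leq 2\epsilon$. Therefore the deterministic oracle that always outputs the midpoint $M := m_n r_2/2$ is within $\epsilon$ of both true center-norms, and hence is a valid center-norm oracle for both $(f,\mu)$ and $(h,\mu)$ producing identical point-mass distributions for $C_f$ and $C_h$, as required.

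The main technical step is the distance lower bound. Because $f$ is invariant under all rotations of $\bR^n$ fixing $e_1$, a standard symmetrization argument reduces the problem to halfspaces with normal $\pm e_1$, i.e.~$h'(x) = \pm \sign(x_1 - t)$. If $|t| \geq r_1$, then $h'$ is constant on the inner sphere, and since $f = -\sign(x_1)$ there, $h'$ disagrees with $f$ on exactly half of the inner sphere, contributing $1/4$ to $\dist_\mu(f,h')$; a short case check shows that $|t| < r_1$ only increases the inner disagreement beyond $1/2$, so the inner contribution is always at least $1/4$. On the outer sphere, where $f = \sign(x_1)$, the outer disagreement is minimized by $h' = \sign(x_1 - t)$ with $|t| = r_1$, yielding an outer disagreement equal to the probability that the first coordinate of a uniform sample from the outer sphere falls in $(0, r_1)$, which is strictly positive by Proposition~\ref{prop:density of sphere}. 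Combining, every halfspace satisfies $\dist_\mu(f,h') > 1/4$. The main obstacle is handling all halfspaces uniformly: orientation is controlled by symmetrization exploiting axial symmetry of $f$, and the threshold by the case split on $|t|$ versus $r_1$, after which the two opposite sign conventions for $h'$ on the outer sphere are ruled out by direct comparison.
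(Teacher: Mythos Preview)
Your construction and the center-norm/oracle argument are essentially the paper's: two concentric spheres, flip $h$ on the inner one, and use that the center-norms differ by at most $m_n r_1$, so a common oracle works. That part is fine.

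The gap is in the $1/4$-far argument. First, the ``standard symmetrization'' reducing to halfspaces with normal $\pm e_1$ is not actually standard and you do not justify it; axial symmetry of $f$ tells you that all rotations of a given halfspace around the $e_1$-axis are equidistant from $f$, but it does not by itself imply that some halfspace with normal $\pm e_1$ is at least as close. Second, even granting the reduction, your claim that ``$|t|<r_1$ only increases the inner disagreement beyond $1/2$'' is false for the minus sign: take $h'(x)=-\sign(x_1)$ (i.e.\ $t=0$), which equals $f$ on the entire inner sphere, so the inner disagreement is zero. Your handling of the minus sign is then deferred to an unexplained ``direct comparison''. The conclusion is still true (the outer sphere picks up the slack), but the argument as written does not establish it.

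The paper sidesteps both issues with a two-line direct argument that needs no symmetrization: for any halfspace $g(x)=\sign(\inn{w,x}-t)$ with (WLOG) $t\le 0$, every direction $u$ with $\inn{w,u}\ge 0$ satisfies $g(r_1 u)=g(r_2 u)=+1$, while $f(r_1 u)=-f(r_2 u)$, so $f$ disagrees with $g$ on at least one of the two radii; since half of all directions $u$ satisfy $\inn{w,u}\ge 0$, $\dist_\mu(f,g)\ge 1/4$.
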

\begin{proof}
Let $\sigma_{\sqrt n}$ be the uniform distribution over the sphere of radius
$\sqrt n$ and $\sigma_r$ be the uniform distribution over the sphere of radius
$r$, to be chosen later. Let $\mu = \frac{1}{2}(\sigma_{\sqrt n} + \sigma_r)$.
This distribution is clearly rotation-invariant, and we can transform it into
an isotropic distribution by scaling it appropriately, by a factor of at most
2.  Define $h(x) = \sign(x_1)$, and define $f$ as follows:
\[
  f(x) = \begin{cases}
    h(x) &\text{ if } \norm{x} = \sqrt n \\
    -h(x) &\text{ if } \norm{x} = r
  \end{cases} \,.
\]
We will first show that $f$ is $1/4$-far from all halfspaces.  Let
$g(x) = \sign(\inn{w,x}-t)$ for some arbitrary $w,t$; without loss of
generality, assume $t \leq 0$. Now choose any unit vector $u$ satisfying
$\inn{w,u} \geq 0$ and any $a,b$ satisfying $0 \leq a \leq r < b$; we have $g(
a \cdot u ) = g( b \cdot u ) = 1$ since $\inn{w,u} \geq 0 \geq t$. But $f( r
\cdot u ) \neq f( \sqrt{n} \cdot u)$ so $f$ and $g$ disagree on at least one
of those points; $\Pruc{}{f(x) \neq g(x)}{\inn{x,u} = \norm{x}} \geq 1/2$.
Since half of all unit vectors $u$ satisfy $\inn{w,u} \geq 0$, we have
$\Pr{f(x) \neq g(x)} \geq 1/4$, proving that $f$ is $1/4$-far from
any halfspace.

We also have $\abs{\norm{\Ex{xh(x)}}_2 - \norm{\Ex{xf(x)}}_2} \leq
\norm{\Ex{xf(x)} - \Ex{xh(x)}}_2 \leq r$ since $f,h$ differ only on the points
with $\norm{x}_2 = r$. Thus, setting $r < \epsilon/2$ we can define the
center-norm oracles by defining $C_{f,\mu}(\epsilon) = C_{h,\mu}(\epsilon)$ to
be the random variable uniformly distributed over $\norm{\Ex{xh(x)}}_2 \pm
\epsilon$.
\end{proof}
\ignore{
The Statistical Query model of learning (\cite{Kearns98}) is a model where the
algorithm can make requests for estimates of expectations of functions. Kearns
shows that there is a Statistical Query algorithm for learning balanced
halfspaces over rotation-invariant distributions using $n$ statistical queries
(since statistical queries can be used to estimate the center of mass
coordinates $\Ex{x_i f(x)}$). It is not clear if a statistical query algorithm
can efficiently test halfspaces (even in the balanced case).
}

To get good bounds on the center-norms of functions on RI spaces, we will
quantify the ``maximum concentration'' of distributions with a quantity we call
the \emph{width}; intuitively, the width at $\epsilon$ of a distribution over
$\bR$ is the size of the smallest interval with measure at least $\epsilon$.

\begin{definition}[Width]
\label{def:width}
We will use the L\'evy anticoncentration function (see e.g.~\cite{DS13}) which
is defined as follows: let $\mu$ be an arbitrary distribution over $\bR^n$, let
$w \in \bR^n$ and $r \in \bR, r > 0$. Then
\[
  p_r(w) \define \sup_{\theta \in \bR} \Pru{x \sim \mu}{\abs{\inn{w,x}-\theta}
  \leq r} \,.
\]
Using this function we define the width: for any $\epsilon \in (0,1)$,
\[
W_\mu(w, \epsilon) \define \inf\{r > 0 : p_r(w) \geq \epsilon \} \,.
\]
\end{definition}
\begin{example}
Consider the Gaussian distribution over $\bR^n$: we may ignore $w$ since the
distribution is rotation-invariant, and we can see that for any $r > 0$, the
maximum in $p_r(w)$ is achieved at $\theta
= 0$. Then for any $r$ such that $p_r(w) \geq \epsilon$ we have
$\epsilon \leq \Pru{x \sim \cN(0,1)}{\abs{x} \leq r} \leq \frac{1}{\sqrt{2\pi}} r$
since $1/\sqrt{2\pi}$ is the maximum density of $\cN(0,1)$.  Thus the Gaussian
distribution has $W(w,\epsilon) \geq C\epsilon$ for some constant $C > 0$.
\end{example}

\subsection{Bounds on the Center-Norm}
Equipped with our definition of width, we will now state the main theorem that
allows the center-of-mass algorithm to work. This theorem relates the distance
of a function from a halfspace to the difference between the two centers of
mass. Similar theorems have been proven for the Gaussian space in earlier
papers: see for example Theorem 29 in \cite{MORS10} and the proof of Corollary 4
in \cite{Eld15}. There have also been similar observations about the uniform
distribution over the hypercube \cite{DDFS14, Gol06, OS11}. Our theorem has the
advantage of making no assumption on the distribution; applying the theorem with
the bound $W(w,\epsilon) \geq C\epsilon$ for the Gaussian distribution, as
discussed above, will reproduce the theorems of \cite{Eld15, MORS10} up to
constant factors. The proof is also very simple.

\textbf{Theorem \ref{thm:gap theorem}.}\textit{
Let $\mu$ be any distribution over $\bR^n$ and let $f : \bR^n \to \pmset$ be any
measurable function. Suppose $h(x) = \sign(\inn{w,x}-t)$ is a halfspace
such that $\Ex h = \Ex f$, and let $\epsilon = \dist_\mu(f,h)$. Write $\alpha$
for the angle between $\Ex{x h(x)} - \Ex{x f(x)}$ and $w$. Then
\[
  \norm{\Ex{x h(x)} - \Ex{x f(x)}}_2 \geq \frac{\epsilon}{\cos\alpha}
  W_\mu(w, \epsilon/2) \,.
\] }%
\begin{proof}
Let $\mu_\pi$ be the distribution over $\bR^n$ obtained by projecting $\mu$
onto the vector $w$ (recall that, at the moment, $\mu$ is not necessarily RI).
We rewrite the left side of the theorem as an inner product:
\[
  \norm{\Ex{x(h(x)-f(x))}}_2 \cos \alpha = \abs{\inn{w, \Ex{x(h(x)-f(x))}}} \,.
\]
We now give a lower bound on this inner product.  Write
\[
A^+ \define \{ x : h(x)=1, f(x)=-1 \} \qquad,\qquad
A^- \define \{ x : h(x) = -1, f(x)=1 \}
\]
and note that $\forall x \in A^+, \inn{w,x} \geq t$ and $\forall x \in A^-,
\inn{w,x} < t$. Note that $\mu(A^+) = \mu(A^-) = \epsilon/2$ since $\Ex h = \Ex
f$. Let $X,Y$ be the following conditional random variables, where $x,y \sim
\mu$ are independent:
\[
X \define \inn{w,x} \mid x \in A^+  \qquad,\qquad
Y \define \inn{w,x} \mid x \in A^-  \,.
\]
Then
\begin{align*}
\inn{w, \Ex{x(h(x) - f(x))}}
= \Ex{(h(x)-f(x)) \inn{w,x}}
= 2\left(\mu(A^+) \Ex X - \mu(A^-) \Ex Y\right) \,.
\end{align*}
Let $m_X, m_Y$ be the medians of $X,Y$ respectively. Since $X$ is supported on
values at least $t$ and $Y$ is supported on values at most $t$, we have
\[
  \Ex{X} \geq \frac{1}{2}m_X + \frac{1}{2} t \qquad,\qquad
  \Ex{Y} \leq \frac{1}{2}m_Y + \frac{1}{2} t \,.
\]
Then since $\mu(A^+)=\mu(A^-)=\epsilon/2$ we get
\begin{align*}
  2\left(\mu(A^+)\Ex{X} - \mu(A^-)\Ex Y \right)
  \geq \left( \mu(A^+)m_X  - \mu(A^-)m_Y + (\mu(A^+)-\mu(A^-))t \right) 
  = \frac{\epsilon}{2}(m_X - m_Y) \,.
\end{align*}
Now $\mu_\pi(A^+ \cap [t,m_X]), \mu_\pi(A^- \cap [m_Y,t]) \geq \epsilon/4$ since
$m_X,m_Y$ are the medians of $X,Y$, so
$\mu_\pi[t, m_X] \geq \epsilon/4, \mu_\pi[m_Y, t] \geq \epsilon/4$ and
$\mu_\pi[m_Y, m_X] \geq \epsilon/2$. Selecting $r = (m_X-m_Y)/2$ we see that
$p_r(w) \geq \mu_\pi[m_Y, m_X] \geq \epsilon/2$ so $m_X-m_Y = 2r \geq
2W_\mu(w,\epsilon/2)$, which gives us the lower bound of
$\epsilon \cdot W_\mu(w, \epsilon/2)$.
\end{proof}
\begin{remark}
Using the fact that the density of an isotropic log-concave distribution is
bounded by 1 \cite{LV03} we also get $W_\mu(w,\epsilon) \geq \epsilon$ for any
unit vector $w$ and isotropic log-concave distribution $\mu$.
\end{remark}
\begin{remark}
While the above theorem holds for discrete distributions, it may not be useful:
e.g.~for the uniform distribution over $\pmset^n$ and $\epsilon = 2^{-n}$ we
have $p_0(w) \geq \epsilon$ for every $w$, so $W(w,\epsilon)=0$.
\end{remark}
We define the class of \emph{bounded} RI distributions for which the
center-of-mass tester will work. The general algorithm will partition an
arbitrary RI space into a number of these bounded distributions, allowing us to
use the center-of-mass tester.
\begin{definition}[Bounded RI Distributions]
\label{def:bounded distributions}
  An RI distribution $\mu$ over $\bR^n$ is a \emph{bounded RI distribution} if
  for some radius $R > 0$ and constant $0 < C \leq 1$,
  $\Pru{x \sim \mu}{C R \leq \norm{x} \leq R} = 1$.
\end{definition}
We can easily show that bounded RI distributions have width at least linear in
$\epsilon$; this will be important for the \textsc{Simple-Tester} in the next
section. 
\begin{proposition}
\label{prop:simple ri has linear width}
Let $\mu$ be an isotropic bounded RI distribution with parameters $R,C$. Then
for all $\epsilon > 0$ and any $w, W_\mu(w,\epsilon) \geq C \epsilon$.
\end{proposition}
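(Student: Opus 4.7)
The plan is to reduce the problem to a bound on the $L^\infty$ norm of the density of the one-dimensional projection $\inn{w,x}$, since $p_r(w) \leq 2r\cdot \|\rho\|_\infty$ for any density $\rho$, which immediately gives $W_\mu(w,\epsilon) \geq \frac{\epsilon}{2\|\rho\|_\infty}$. So it will suffice to show $\|\rho\|_\infty \leq \frac{1}{2C}$, where $\rho$ is the density of $\inn{w,x}/\norm{w}$ under $\mu$.

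First I would normalize: by rotation invariance and the fact that $W_\mu(w,\epsilon)$ scales linearly with $\norm{w}$ on both sides (the statement is really about the unit direction), I can assume $w = e_1$ and study the density of $x_1$ under $\mu$. Writing $\mu = \sigma \mu_\circ$ as in Definition \ref{def:ri distributions}, the distribution of $x_1$ is a mixture over $a \sim \mu_\circ$ of the one-dimensional projection $\sigma_\pi^{(a)}$ of the uniform distribution on the sphere of radius $a$. Since $\mu_\circ$ is supported on $[CR, R]$ by the bounded-RI hypothesis, the density of $x_1$ satisfies
\[
\|\rho\|_\infty \;\leq\; \sup_{a \in [CR, R]} \|\sigma_\pi^{(a)}\|_\infty \,.
\]

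Next I would apply Proposition \ref{prop:density of sphere} to each sphere: $\|\sigma_\pi^{(a)}\|_\infty \leq \frac{\sqrt{n-1}}{\sqrt{2\pi}\,a} \leq \frac{\sqrt{n-1}}{\sqrt{2\pi}\,CR}$. Now isotropy (Proposition \ref{prop:isotropy}) says $\bE[\norm{x}^2] = n$, and combined with $\norm{x} \leq R$ this forces $R \geq \sqrt{n}$. Substituting,
\[
\|\rho\|_\infty \;\leq\; \frac{\sqrt{n-1}}{\sqrt{2\pi}\,C\sqrt{n}} \;\leq\; \frac{1}{C\sqrt{2\pi}} \;\leq\; \frac{1}{2C} \,,
\]
where the last step uses $\sqrt{2\pi} > 2$.

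Finally, plugging back into the anticoncentration bound gives $p_r(w) \leq 2r\cdot \frac{1}{2C} = r/C$, so $p_r(w) \geq \epsilon$ forces $r \geq C\epsilon$, which yields $W_\mu(w,\epsilon) \geq C\epsilon$ as claimed. I do not anticipate a real obstacle here; the only subtlety is the bookkeeping that ties isotropy to the lower bound $R \geq \sqrt{n}$, which is needed to cancel the $\sqrt{n}$ factor appearing in the per-sphere density bound of Proposition \ref{prop:density of sphere}.
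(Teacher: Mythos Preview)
Your argument is correct and essentially identical to the paper's: both reduce (via rotation invariance and the mixture-over-spheres decomposition) to bounding the maximum density of the one-dimensional projection using Proposition~\ref{prop:density of sphere}, then invoke isotropy to get $R \geq \sqrt{n}$ and convert the density bound into a width bound. The only cosmetic difference is that the paper keeps the slightly sharper constant $C\sqrt{\pi/2}$ while you round down via $\sqrt{2\pi} > 2$ to land exactly on $C\epsilon$.
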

\begin{proof}
Since $\Ex{\norm{x}^2} = n$ (Proposition \ref{prop:isotropy}) we must have $R
\geq \sqrt n$. By rotation invariance we can drop the direction $w$ from the
definition of width and consider the 1-dimensional projection $\mu_\pi$. Since
$\Pr{\norm x < CR} = 0$ we can bound the maximum density of $\mu_\pi$ by the
maximum density of $(\sigma_{CR})_\pi$ where $\sigma_{CR}$ is the uniform
distribution over the sphere of radius $CR$. By Proposition \ref{prop:density of
sphere} this density is at most $\frac{\sqrt n}{\sqrt{2\pi} CR} \leq \frac{1}{C
\sqrt{2\pi}}$. Then for any interval $[t-r, t+r]$ the total probability mass is
at most $2r \frac{1}{C \sqrt{2\pi}}$; if $r < C \frac{\sqrt{\pi}}{\sqrt 2}
\epsilon$ then this mass is less than $\epsilon$, so we must have
$W_\mu(\epsilon) \geq \epsilon \cdot C \sqrt{\pi/2}$.
\end{proof}
Theorem \ref{thm:gap theorem} proves that a small gap in center-norms implies
close to proximity to a halfspace. This is sufficient for a testing algorithm,
but for the general algorithm we will also require a small amount of tolerance
in our tester; that is, we will need to prove that the algorithm accepts not
only halfspaces but functions that are very close to being halfspaces. For this
purpose we will need an upper bound on the center-norm gap, which is provided by
the next lemma.
\begin{lemma}
\label{lemma:upper bound on norm distance}
Let $\mu$ be any RI distribution with $\Pr{\norm{x} \leq R}=1$ and let $f :
\bR^n \to \pmset$ be any measurable function. Suppose $h$ is a halfspace such
that $\dist_\mu(f,h) = \epsilon$. Then
\[
\norm{\Ex{x h(x)}} - \norm{\Ex{x f(x)}} \leq \BigO{\frac{R}{\sqrt{n}} \epsilon
  \sqrt{\ln(1/\epsilon)}} \,.
\]
\end{lemma}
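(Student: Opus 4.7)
The plan is to reduce the norm gap to a one-dimensional moment problem and then apply the sub-Gaussian tail bound for the spherical projection (Proposition \ref{prop:tail bounds for spheres}).

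First, by the triangle inequality,
\[
\norm{\Ex{xh(x)}} - \norm{\Ex{xf(x)}} \;\leq\; \norm{\Ex{x(h(x)-f(x))}} \,.
\]
Let $v$ be a unit vector in the direction of $\Ex{x(h-f)(x)}$. Then the right-hand side equals $\inn{v, \Ex{x(h(x)-f(x))}} = \Ex{\inn{v,x}(h(x)-f(x))}$. Writing $A = \{x : f(x) \neq h(x)\}$, we have $\mu(A)=\epsilon$ and $|h(x)-f(x)| \leq 2 \cdot \ind{x \in A}$, so the quantity to bound is
\[
  \norm{\Ex{x(h(x)-f(x))}} \;\leq\; 2 \Ex{|\inn{v,x}| \cdot \ind{x \in A}} \,.
\]

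Next I would observe that, over all measurable sets $A$ with $\mu(A)=\epsilon$, this expectation is maximized by the ``top-tail'' set $\{x : |\inn{v,x}| \geq t_\epsilon\}$, where $t_\epsilon$ is chosen so that $\Pr{|\inn{v,x}| \geq t_\epsilon} = \epsilon$. Because $\mu$ is rotation-invariant and supported on $\{\norm{x} \leq R\}$, the projection $\inn{v,x}$ is distributed like $\mu_\pi$, and its tails are controlled by the tails of the uniform distribution on the sphere of radius $R$. Applying Proposition \ref{prop:tail bounds for spheres} gives
\[
  \Pr{|\inn{v,x}| \geq t} \;\leq\; 2\sqrt{2}\, e^{-t^2 n/(4R^2)} \,,
\]
from which $t_\epsilon = O\!\left(\tfrac{R}{\sqrt n}\sqrt{\ln(1/\epsilon)}\right)$.

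Finally, using the layer-cake identity $\Ex{Y\ind{Y \geq t_\epsilon}} = t_\epsilon \Pr{Y \geq t_\epsilon} + \int_{t_\epsilon}^\infty \Pr{Y \geq t}\,dt$ for $Y = |\inn{v,x}|$, and the standard Gaussian-tail estimate $\int_{a}^\infty e^{-\lambda^2 t^2}\,dt \leq \tfrac{1}{2a\lambda^2} e^{-\lambda^2 a^2}$ with $\lambda^2 = n/(4R^2)$ and $a = t_\epsilon$, both terms come out as $O\!\left(\tfrac{R}{\sqrt n}\,\epsilon\sqrt{\ln(1/\epsilon)}\right)$. Combining with the factor of $2$ above yields the claimed bound.

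The main obstacle is the last step: the naive bound $\Ex{|\inn{v,x}|\ind{A}} \leq R\epsilon$ is too weak by a factor of $\sqrt{n}$, so the point is to use rotation invariance to replace $\norm{x}$ by the one-dimensional projection and then exploit the sub-Gaussian concentration $R/\sqrt n$ of spherical marginals. Keeping track of constants in the tail integral is the only fussy computation.
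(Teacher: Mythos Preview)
Your high-level strategy matches the paper's: reduce the norm gap to a one-dimensional tail moment and invoke the sub-Gaussian concentration of spherical projections. But the final estimate for the second layer-cake term has a real gap as written.

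You apply $\int_a^\infty e^{-\lambda^2 t^2}\,dt \le \tfrac{1}{2a\lambda^2}e^{-\lambda^2 a^2}$ with $a = t_\epsilon$ and $\lambda^2=n/(4R^2)$. For this to give $O\bigl(\tfrac{R}{\sqrt n}\,\epsilon\sqrt{\ln(1/\epsilon)}\bigr)$ you would need $e^{-\lambda^2 t_\epsilon^2} = O(\epsilon)$ and $t_\epsilon = \Omega\bigl(\tfrac{R}{\sqrt n}\sqrt{\ln(1/\epsilon)}\bigr)$. However, the tail bound only says $\epsilon = \Pr{|\inn{v,x}|\ge t_\epsilon} \le 2\sqrt 2\,e^{-\lambda^2 t_\epsilon^2}$, i.e.\ $e^{-\lambda^2 t_\epsilon^2}\ge \epsilon/(2\sqrt 2)$, which is the wrong direction. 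Since $\mu$ is only assumed to be supported in $\{\norm{x}\le R\}$, it may be concentrated on spheres of radius much smaller than $R$, in which case the actual quantile $t_\epsilon$ can be arbitrarily small (even $0$); then $\tfrac{1}{2t_\epsilon\lambda^2}e^{-\lambda^2 t_\epsilon^2}$ is only $O(R/\sqrt n)$, not $O\bigl(\tfrac{R}{\sqrt n}\,\epsilon\sqrt{\ln(1/\epsilon)}\bigr)$.

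The fix is short: let $t^*$ satisfy $2\sqrt 2\,e^{-(t^*)^2 n/(4R^2)}=\epsilon$, so $t^*=\Theta\bigl(\tfrac{R}{\sqrt n}\sqrt{\ln(1/\epsilon)}\bigr)$ and $t_\epsilon\le t^*$. Split $\int_{t_\epsilon}^\infty \Pr{Y>s}\,ds$ at $t^*$: on $[t_\epsilon,t^*]$ use $\Pr{Y>s}\le\Pr{Y>t_\epsilon}=\epsilon$, contributing at most $t^*\epsilon$; on $[t^*,\infty)$ apply the Gaussian-tail integral at $a=t^*$, where now $e^{-\lambda^2(t^*)^2}=\epsilon/(2\sqrt 2)$ by construction. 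Both pieces are $O\bigl(\tfrac{R}{\sqrt n}\,\epsilon\sqrt{\ln(1/\epsilon)}\bigr)$.

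For comparison, the paper projects onto the normal $w$ of $h$ (exploiting $\norm{\Ex{xh(x)}}=\inn{w,\Ex{xh(x)}}$ rather than the triangle inequality) and then bounds $\Exuc{}{\inn{w,x}}{\inn{w,x}>t_0}$ by a dyadic decomposition: with $t_k$ the $(\epsilon/2^k)$-quantile of $\inn{w,x}$, the conditional mean is at most $\sum_{k\ge 1}2^{-k}t_k$, and only the \emph{upper} bounds $t_k\le \tfrac{2R}{\sqrt n}\sqrt{\ln(\sqrt 2\cdot 2^k/\epsilon)}$ from the tail inequality are needed. This sidesteps the lower-bound-on-$t_\epsilon$ issue entirely, which is the one substantive advantage of the paper's route over yours.
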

\begin{proof}
First assume that $\mu$ is the uniform distribution over the sphere of radius
$\sqrt{n}$.  Let $w$ be the normal vector of the halfspace and let $\mu_w$ be
the 1-dimensional projection of $\mu$ onto $w$. Let $A = \{ x : h(x) = 1, f(x) =
-1 \}$ and $B = \{ x : h(x) = -1, f(x) = 1 \}$ so $\epsilon = \mu(A \cup B)$.
Clearly we have $\norm{\Ex{xh(x)}} = \inn{w, \Ex{xh(x)}}$, while
$\norm{\Ex{xf(x)}} \geq \inn{w, \Ex{xf(x)}}$, so we can get an upper bound on
the difference as follows:
\begin{align*}
  \norm{\Ex{xh(x)}} - \norm{xf(x)}
  &\leq \inn{w, \Ex{xh(x)} - \Ex{x f(x)}} \\
  &= \left\langle w, \mu(A)\Exuc{}{x}{x \in A} - \mu(B)\Exuc{}{x}{x \in B} \right.\\
    &\qquad\left. - \mu(B)\Exuc{}{x}{x \in B} + \mu(A)\Exuc{}{x}{x \in A}
      \right\rangle\\
  &= \epsilon \inn{w, \Exuc{}{x}{x \in A} - \Exuc{}{x}{x \in B}} \\
  &\leq 2\epsilon \max_{S : \mu(S) = \epsilon} \Exuc{}{\inn{w,x}}{x \in S} \,.
\end{align*}
This maximum is achieved when $S = \{ x : \inn{w,x} > t_0 \}$ for $t_0$ chosen
such that $\Pr{\inn{w,x} > t_0} = \epsilon$. Then we can get a bound on the
expectation as follows.  Let $t_k$ be the threshold such that $\Pr{\inn{w,x} >
t_k} = \epsilon/2^k$.  Then, using Proposition \ref{prop:tail bounds for
spheres}, $\epsilon/2^k =
\Pr{\inn{w,x} > t_k} \leq \sqrt{2}e^{-t_k^2/4}$ so $t_k \leq \sqrt{4\ln(
\sqrt{2} \cdot 2^k/\epsilon)}$. Thus
\begin{align*}
 \Exuc{}{\inn{w,x}}{\inn{w,x} > t} 
 &= \sum_{k = 1}^\infty \Pruc{}{t_{k-1} < \inn{w,x} \leq t_k}{t < \inn{w,x}}
 \Exuc{}{\inn{w,x}}{t_{k-1} < \inn{w,x} \leq t_k} \\
 &\leq \sum_{k = 1}^\infty \frac{1}{2^k} t_k
 \leq \sum_{k \geq 1} \frac{1}{2^k} \sqrt{2(k + \ln(\sqrt{2} /\epsilon))}
 \leq
  \left( 2 \sum_{k=1}^\infty \frac{k+\ln(\sqrt{2}\epsilon)}{2^k} \right)^{1/2} \\
 &= \sqrt{2(2+\ln(\sqrt{2}/\epsilon))} \,.
\end{align*}
The last inequality is Jensen's inequality, and the final equality is due to
the identity $\sum_{k=1}^\infty k2^{-k} = 2$.  Thus we have an upper bound of $
\epsilon \cdot O(\sqrt{\ln(1/\epsilon)}) = 
\BigO{\epsilon \sqrt{\ln(1/\epsilon)}}$.

Now suppose $\mu$ is any bounded RI distribution. Then the largest sphere
supported by the distribution is of radius $R$, so we can simply multiply the
bound by $R/\sqrt{n}$, since the values on the left scale linearly with the
radius.
\end{proof}

To conclude this section, we give two propositions that bound the center-norm.
\begin{proposition}
\label{prop:center is small}
  Let $\mu$ be any isotropic RI distribution over $\bR^n$ and let $f : \bR^n \to
  \pmset$ be any measurable function. Then $\norm{\Ex{x f(x)}}_2 \leq 1$.
\end{proposition}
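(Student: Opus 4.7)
The plan is a one-line application of Cauchy--Schwarz, using only isotropy (rotation invariance is not actually needed here, though it is implied by the hypotheses). Let $v := \Ex{x f(x)}$. If $v = 0$ we are done, so assume $v \neq 0$ and set $u := v/\norm{v}_2$, which is a unit vector. Then
\[
\norm{v}_2 = \inn{u, v} = \inn{u, \Ex{x f(x)}} = \Ex{\inn{u,x} f(x)}.
\]

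Now I would apply Cauchy--Schwarz to the expectation on the right:
\[
\left(\Ex{\inn{u,x} f(x)}\right)^2 \leq \Ex{\inn{u,x}^2} \cdot \Ex{f(x)^2}.
\]
By the definition of isotropy, $\Ex{\inn{u,x}^2} = 1$ for the unit vector $u$, and since $f$ is $\pmset$-valued we have $f(x)^2 = 1$ identically, so $\Ex{f(x)^2} = 1$. Combining these gives $\norm{v}_2^2 \leq 1$, which is the desired bound.

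There is no real obstacle here; the only thing to verify is that the argument uses only integrability of $\inn{u,x}$ (which follows from isotropy) and the fact that $f^2 \equiv 1$. The proposition in fact extends verbatim to any isotropic distribution and any measurable $f$ with $|f| \leq 1$, but since the paper only needs the $\pmset$-valued case I would state the proof at that level of generality.
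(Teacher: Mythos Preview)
Your proof is correct and essentially identical to the paper's: both pick the unit vector $u$ parallel to $\Ex{xf(x)}$, write the norm as $\Ex{\inn{u,x}f(x)}$, and bound it by $\sqrt{\Ex{\inn{u,x}^2}}=1$ via isotropy. The only cosmetic difference is that the paper inserts the intermediate step $\Ex{\inn{u,x}f(x)} \leq \Ex{|\inn{u,x}|}$ and then invokes Jensen, whereas you apply Cauchy--Schwarz directly; these are the same argument.
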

\begin{proof}
  Let $\mu_\circ$ be the radial distribution for $\mu$. Suppose $u$ is the
  unit vector parallel to $\Ex{x f(x)}$. Then using Jensen's inequality we have
  \[
    \norm{\Ex{x f(x)}} = \inn{u, \Ex{x f(x)}}
    = \Ex{\inn{u,x}f(x)} \leq \Ex{\abs{\inn{u,x}}} \leq \sqrt{\Ex{\inn{u,x}^2}}
    =1 \,. \qedhere
  \]
\end{proof}

\begin{proposition}
\label{prop:simple ri halfspaces have large centers}
Let $\mu$ be any RI distribution such that for some constant $C > 0$ and for all
$\epsilon > 0, W_\mu(\epsilon) \geq C\epsilon$. Then for any halfspace $h$ with
$\abs{\Ex h} \leq 1-\epsilon$ we have $\norm{\Ex{xh(x)}}_2 \geq C\epsilon/4$.
\end{proposition}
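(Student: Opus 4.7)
Let $h(x) = \sign(\inn{w,x} - t)$ and write $Z = \inn{w,x}$, whose distribution is $\mu_\pi$. By rotation invariance, $\mu_\pi$ is symmetric about the origin, and by replacing $h$ with $-h$ (which preserves both $|\Ex h|$ and $\|\Ex{xh(x)}\|_2$), I may assume $t \geq 0$. My plan is to reduce the proposition to a one-dimensional statement about the symmetric random variable $Z$: I will show that $\|\Ex{xh(x)}\|_2 = \Ex{|Z|\cdot \ind{|Z| > t}}$, translate the balance hypothesis into $\Pr{|Z| > t} \geq \epsilon$, and then combine two simple lower bounds coming from the width condition.

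For the first step, Proposition \ref{prop:center and normal are parallel} tells me $\Ex{xh(x)} = sw$ for some $s \geq 0$, so $\|\Ex{xh(x)}\|_2 = \Ex{Z h(x)} = \Ex{Z\sign(Z-t)}$. Splitting $\bR$ into $(-\infty,-t]$, $(-t,t]$, and $(t,\infty)$ and applying symmetry of $\mu_\pi$, the middle piece contributes $0$ and the outer two pieces combine to give
\[
  \|\Ex{xh(x)}\|_2 \;=\; 2\int_t^\infty z\,d\mu_\pi(z) \;=\; \Exu{Z\sim\mu_\pi}{|Z|\cdot \ind{|Z| > t}}.
\]
Also, for $t \geq 0$, $\Ex h = 2\Pr{Z > t} - 1$, so $|\Ex h| \leq 1 - \epsilon$ rearranges to $\Pr{|Z| > t} = 2\Pr{Z > t} \geq \epsilon$.

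Next I convert the width hypothesis $W_\mu(w,\alpha) \geq C\alpha$ into a density-type bound. By Definition \ref{def:width}, any interval $[a,b]\subseteq\bR$ of length $\ell = b-a$ has $\mu_\pi([a,b]) \leq \ell/(2C)$, since otherwise $p_{\ell/2}(w) > \ell/(2C)$ would force $W_\mu(\ell/(2C)) < \ell/2$, contradicting the hypothesis. By symmetry, for $0 \leq a \leq b$ this doubles to $\Pr{a < |Z| \leq b} \leq (b-a)/C$. Two consequences I will use: (i) $\Pr{|Z| \leq s} \leq s/C$ for all $s \geq 0$, and (ii) $\Ex{|Z|} = \int_0^\infty \Pr{|Z| > s}\,ds \geq \int_0^C (1 - s/C)\,ds = C/2$.

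Finally, I split on $t$ to conclude $\Ex{|Z|\cdot \ind{|Z|>t}} \geq C\epsilon/4$. If $t \leq C/4$, then using $\Ex{|Z|\cdot \ind{|Z| \leq t}} = \int_0^t \Pr{s < |Z| \leq t}\,ds \leq \int_0^t (t-s)/C\,ds = t^2/(2C)$, I get
\[
  \Ex{|Z|\cdot \ind{|Z|>t}} \;\geq\; \Ex{|Z|} - \frac{t^2}{2C} \;\geq\; \frac{C}{2} - \frac{C}{32} \;>\; \frac{C}{4} \;\geq\; \frac{C\epsilon}{4}.
\]
If instead $t > C/4$, the crude bound $\Ex{|Z|\cdot \ind{|Z|>t}} \geq t\Pr{|Z|>t} \geq (C/4)\cdot\epsilon$ already suffices. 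The only subtle point in the proof is the density-type consequence of the width assumption; everything else is a symmetry unfolding plus a one-line case split. I do not anticipate a serious obstacle.
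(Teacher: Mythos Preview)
Your proof is correct. Both arguments reduce to the one-dimensional symmetric variable $Z = \inn{w,x}$ and extract a density-type bound from the width hypothesis, but the routes diverge from there. The paper writes $\norm{\Ex{xh(x)}} = 2\Pr{Z \geq t}\,\bigl|\Ex{Z \mid Z \geq t}\bigr|$ (via $\Ex{x}=0$), bounds the first factor below by $\epsilon$ from the balance hypothesis, and bounds the second below by $\Ex{|Z|}$ using the (unstated) monotonicity of the tail conditional mean $t \mapsto \Ex{Z \mid Z \geq t}$; it then gets $\Ex{|Z|} \geq \tfrac{1}{2}W_\mu(1/2) \geq C/4$ via the median. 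You instead derive the cleaner identity $\norm{\Ex{xh(x)}} = \Ex{|Z|\,\ind{|Z|>t}}$ and do a case split on $t \lessgtr C/4$, combining the trivial bound $t\Pr{|Z|>t}$ with $\Ex{|Z|} \geq C/2$. Your approach is a little longer but fully self-contained (no implicit tail-monotonicity step) and incidentally yields the sharper constant $\Ex{|Z|} \geq C/2$; the paper's approach is more compact and avoids any case analysis.
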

\begin{proof}
By rotation invariance we may assume that the normal vector of $h$ is $e_1$.
Then
\begin{align*}
\norm{\Ex{xh(x)}}
&= \norm{ \Pr{x_1 \geq t }\Exuc{}{x}{x_1 \geq t}
      - \Pr{x_1 < t}\Exuc{}{x}{x_1 < t}} \\
      &= 2\Pr{x_1 \geq t} \norm{\Exuc{}{x}{ x_1 \geq t}}
      \geq \epsilon \Ex{|x_1|}
\end{align*}
since $\Pr{x_1 \geq t} \geq \epsilon/2$.
By taking the median, we have
$\Ex{|x_1|} \geq \frac{1}{2} W_\mu(1/2) \geq C/4$.
\end{proof}

\section{A Tester for Simple Distributions}
\label{section:simple tester}
We now implement \textsc{Simple-Tester}, a center-of-mass tester for bounded RI
distributions.  The main algorithmic ingredient is the subroutine
\textsc{Estimate-IP} in the next subsection, that can be used for estimating the
center-norm of a function and also, in the \textsc{Estimate-Halfspace-Norm}
subroutine, to estimate the center-norm of the nearest halfspace.


\subsection{Estimating Inner Products}
\label{subsection:estimate ip}

\textsc{Estimate-IP} estimates the quantity $\Ex{f(x)g(y)\inn{x,y}}$ for two
independently random vectors $x,y$ (whose distributions may be different). We
could, of course, use standard empirical estimation to estimate this quantity by
picking $m$ pairs $(x,y)$ and computing $f(x)g(y)\inn{x,y}$ for each of the $m$
pairs, but with $m$ samples points from each distribution we actually have $m^2$
pairings available; exploiting this fact lets us achieve $\sqrt n$ sample
complexity.
\begin{algorithm}[H]
  \label{alg:estimate center}
  \caption{\textsc{Estimate-IP}($\mu_1, \mu_2, f, g, m$)}
  \begin{algorithmic}[1]
    \State Draw $\{x_1, \dotsc, x_m\} \sim \mu_1^m$
    \State Draw $\{y_1, \dotsc, y_m\} \sim \mu_2^m$
    \State \Return $\tilde p \gets m^{-2} \sum_{i,j} f(x_i)g(y_j)\inn{x_i,y_j}$
  \end{algorithmic}
\end{algorithm}
\begin{lemma}
\label{lemma:estimate ip}
  Let $\mu_1, \mu_2$ be any RI distributions over $\bR^n$ with $\tau_1 = \bE_{x \sim
  \mu_1}[x_1^2], \tau_2 = \bE_{y \sim \mu_2}[y_1^2]$ satisfying $\tau_1\tau_2
  \leq 1$, let $f,g :
  \bR^n \to \pmset$ be any measurable functions, and let
  $\epsilon,\delta > 0$. Write $p = \inn{\bE_{x \sim \mu_1, f}[x f(x)], \bE_{y
  \sim \mu_2, g}[y g(y)]}$. Then for some universal constant $L$, and arbitrary
  $0 < \epsilon,\delta < 1$,
  \begin{enumerate}
    \item If $p \geq \eta$ and
      $m \geq L \frac{\sqrt n}{\epsilon^2 \eta^2} \log(1/\delta)$ then with
      probability at least $1-\delta$,
      $(1-\epsilon)p \leq \tilde p \leq (1+\epsilon)p$; and,
    \item If $m \geq L \frac{\sqrt n}{\epsilon^2}\log(1/\delta)$ then with
      probability at least $1-\delta$, $\abs{p - \tilde p} \leq \epsilon$.
  \end{enumerate}
\end{lemma}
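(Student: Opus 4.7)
The plan is to prove part (2) by a variance calculation combined with Chebyshev's inequality, boosted by a standard median-of-means argument, and then to obtain part (1) as a simple corollary. The key structural observation is that $\tilde p$ factors as an inner product of two independent empirical means: letting $u = m^{-1}\sum_i f(x_i) x_i$ and $v = m^{-1}\sum_j g(y_j) y_j$, one has $\tilde p = \inn{u, v}$. Setting $c_f = \Ex{x f(x)}$ and $c_g = \Ex{y g(y)}$ (so that $p = \inn{c_f, c_g}$), unbiasedness $\Ex{\tilde p} = p$ is immediate from the independence of $u$ and $v$.

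The heart of the argument is bounding $\Var{\tilde p}$. I would decompose
\[
  \tilde p - p \;=\; \inn{u-c_f,\, c_g} \;+\; \inn{c_f,\, v-c_g} \;+\; \inn{u-c_f,\, v-c_g}
\]
and observe that the three summands are pairwise uncorrelated (each cross expectation vanishes either by independence of $u,v$ or because $\Ex{u-c_f}=\Ex{v-c_g}=0$), so their variances add. For the first (a sum of $m$ i.i.d. scalars), rotation invariance of $\mu_1$ gives $\Ex{\inn{x,w}^2} = \tau_1\norm{w}^2$ for all $w$, so the variance is at most $\tau_1\norm{c_g}^2/m$; a Jensen-inequality argument (the non-isotropic version of Proposition \ref{prop:center is small}) yields $\norm{c_g}^2 \leq \tau_2$, so this is at most $\tau_1\tau_2/m \leq 1/m$, and symmetrically for the second term. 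For the bilinear term, direct expansion gives $\Var{\inn{u-c_f,v-c_g}} = m^{-2}\,\mathrm{tr}(\Sigma_u \Sigma_v)$ where $\Sigma_u = \tau_1 I - c_f c_f^T$ and $\Sigma_v = \tau_2 I - c_g c_g^T$ are the single-sample covariance matrices (using $\Ex{xx^T}=\tau_1 I$ by rotation invariance), which is at most $n\tau_1\tau_2/m^2 \leq n/m^2$. Summing these yields $\Var{\tilde p} \leq 2/m + n/m^2$, so choosing $m = L\sqrt n/\epsilon^2$ for a sufficiently large absolute constant $L$ forces $\Var{\tilde p} \leq \epsilon^2/100$.

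Chebyshev now gives $|\tilde p - p| \leq \epsilon$ with probability $99/100$ from a single batch of $L\sqrt n/\epsilon^2$ samples. To reach confidence $1-\delta$ I would run $k = O(\log(1/\delta))$ independent batches on disjoint samples and return the median; a standard Chernoff bound on the indicators ``batch $k$ is $\epsilon$-close'' makes the median $\epsilon$-close with probability at least $1-\delta$, using $O(\sqrt n/\epsilon^2 \cdot \log(1/\delta))$ samples in total. This proves part (2). Part (1) is then a direct reduction: the multiplicative guarantee $(1 \pm \epsilon)p$ is the additive guarantee $|\tilde p - p| \leq \epsilon p$, which under $p \geq \eta$ is implied by additive accuracy $\epsilon\eta$, so applying part (2) with $\epsilon\eta$ in place of $\epsilon$ gives precisely the claimed bound $m \geq L\sqrt n/(\epsilon^2\eta^2)\log(1/\delta)$.

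I expect the variance estimate to be the only real obstacle. A naive analysis of $\tilde p$ as a sum of $m^2$ summands $Z_{ij} = f(x_i)g(y_j)\inn{x_i,y_j}$, bounding the covariance of index-sharing pairs by $\Ex{Z_{ij}^2} \leq \Ex{\inn{x,y}^2} = n\tau_1\tau_2$, would yield $\Var{\tilde p} \leq O(n/m)$ from the $\approx m^3$ index-sharing pairs and therefore only the trivial $n/\epsilon^2$ sample complexity. The $\sqrt n$ speedup appears precisely because the shared-index covariance is in fact $\Ex{\inn{x,c_g}^2} = \tau_1\norm{c_g}^2$, a \emph{constant} rather than a quantity scaling with $n$: rotation invariance of $\mu_1$ replaces the dimension factor with $\norm{c_g}^2$, and the center-norm bound $\norm{c_g}^2 \leq \tau_2$ together with the hypothesis $\tau_1\tau_2 \leq 1$ closes out the estimate.
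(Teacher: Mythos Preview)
Your proof is correct and follows essentially the same approach as the paper: a variance bound showing $\Var{\tilde p} = O(1/m + n/m^2)$, then Chebyshev plus a median-of-$O(\log(1/\delta))$ boost. The paper expands the covariance $\Cov(X_{i,j},X_{k,\ell})$ directly over the index cases, whereas your factorization $\tilde p = \inn{u,v}$ and the decomposition $\tilde p - p = \inn{u-c_f,c_g} + \inn{c_f,v-c_g} + \inn{u-c_f,v-c_g}$ is a cleaner reorganization of the same computation, with your three orthogonal pieces corresponding exactly to the paper's ``one shared index $j=\ell$'', ``one shared index $i=k$'', and ``both indices shared'' cases; the key observation (that the one-shared-index covariance is $\tau_i\norm{c}^2 \le \tau_1\tau_2 \le 1$ rather than $O(n)$) is identical in both.
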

\begin{proof}
  By rotation invariance we have $\bE_{x \sim \mu_i}[\inn{x,u}^2] = \tau_i$ for
  each $i \in \{1,2\}$ and any unit vector $u$.

  We will use Chebyshev's inequality. Let $\{x_i\}_{i \in [m]}, \{y_j\}_{j \in
  [m]}$ be the sets of random points that the algorithm receives. We will write
  $X_{i,j} \define f(x_i)g(y_j)\inn{x_i, y_j}$. Then
  \[
    \Ex{X_{i,j}} = \Ex{\inn{x_i f(x_i), y_j g(y_j)}} = \inn{\Ex{xf(x)},\Ex{yg(y)}}
    = p
  \]
  so, since $\tilde p = m^{-2} \sum_{i , j} X_{i,j}$,
  we have $\Ex{\tilde p} = p$.  By Chebyshev's Inequality, we get a
  bound for each the two desired conclusions; for the multiplicative error:
  \begin{equation}
  \label{eq:estimate ip multiplicative error}
    \Pr{\tilde p > (1+\epsilon)\Ex{\tilde p} \text{ or } \tilde p <
    (1-\epsilon)\Ex{\tilde p}}
    = \Pr{\abs{\tilde p - \Ex{\tilde p}} > \epsilon \Ex{\tilde p}}
    \leq \frac{\Var{\tilde p}}{\epsilon^2 \Ex{\tilde p}^2}
    \leq \frac{\Var{\tilde p}}{\epsilon^2 \eta^2}
  \end{equation}
  and for the additive error
  \begin{equation}
  \label{eq:estimate ip additive error}
    \Pr{\abs{\tilde p - \Ex{\tilde p}} > \epsilon} \leq \frac{\Var{\tilde p}}{\epsilon^2} \,.
  \end{equation}
  We will compute the variance:
  \begin{align*}
    \Var{\tilde p}
    &= m^{-4} \sum_{i,j, k,\ell} \Cov(X_{i,j}, X_{k,\ell}) \,.
  \end{align*}
  If $i \neq k$ and $j \neq \ell$, the covariance is 0 since $X_{i,j},
  X_{k,\ell}$ are independent. When either $i = k$ or $j = \ell$
  (say $j = \ell$), we have
  \begin{align*}
  \Cov(X_{i,j},X_{k,\ell})
  = \Ex{X_{i,j}X_{k,j}} - \Ex{X_{i,j}}^2
  &\leq \Ex{f(x_i)f(x_k)g(y_j)^2 \inn{x_i, y_j}\inn{x_k, y_j}} \\
  &= \Ex{\inn{y_j, \Ex{xf(x)}}^2} \\
  &= \norm{\Ex{xf(x)}}^2 \Ex{\inn{y_j, \frac{\Ex{xf(x)}}{\norm{\Ex{xf(x)}}}}^2} \\
  &= \tau_2 \norm{\Ex{x f(x) }}^2 \leq \tau_2 \tau_1 = 1
  \end{align*}
  where the bound is due to Proposition \ref{prop:center is small}; in the case
  of $i = k$ we get the same bound.  This situation occurs $m^3$ times. Finally,
  for $i=k,j=\ell$, we may use the identity $\bE[\norm{y_j}^2] = n\tau_2$
  (Proposition \ref{prop:isotropy}) to get
  \begin{align*}
    \Cov(X_{i,j}, X_{k,\ell})
    &= \Var{X_{i,j}} 
    = \Ex{f(x_i)^2 g(y_j)^2 \inn{x_i, y_j}^2} - p^4 
    = \Ex{ \norm{y_j}^2 \inn{x_i, y_j/\norm{y_j}}^2} - p^4 \\
    &= n \tau_1 \tau_2 - p^4 \leq n \,,
  \end{align*}
  This situation occurs $m^2$ times, so in total the variance is at most
  \[
  \Var{\tilde p} \leq m^{-4} \left( m^3 + m^2 n \right)
  = \frac{1}{m} + \frac{n}{m^2} 
  \leq 2\frac{n}{m^2} \xh{(for $m < n$)} \,.
  \]
  Thus setting $m = O\fracb{\sqrt n}{\epsilon^2\eta^2}$ for the multiplicative
  error and $m = O\fracb{\sqrt n}{\epsilon^2}$ for the additive error suffices
  to a bound of $1/3$ on inequalities \eqref{eq:estimate ip multiplicative
  error} and \eqref{eq:estimate ip additive error}.

  Finally, we may apply a classic boosting technique and repeat this process $M$
  times, taking the median. The probability of failure is bounded by the
  probability that at least half of the trials fail; by the Chernoff bound this
  is at most $\BigO{\exp{-M}}$ so it succeeds with probability at least
  $1-\delta$ when $M = \BigO{\log(1/\delta)}$.
\end{proof}
To simplify the presentation of later algorithms, we define the following
wrapper around \textsc{Estimate-IP} which estimates the center-norm of a
function $f$:
\begin{algorithm}[H]
  \label{alg:estimate norm}
  \caption{\textsc{Estimate-Norm}($\mu, f, m$)}
  \begin{algorithmic}[1]
    \State \Return $\textsc{Estimate-IP}(\mu, \mu, f, f, m)$
  \end{algorithmic}
\end{algorithm}

\subsection{The \textsc{Simple-Tester} Algorithm}

The previous section shows how to estimate the center-norm of an arbitrary
function. We must compare that estimate to an estimate of the center-norm of a
halfspace with the same mean; recall that due to rotation invariance we can
ignore the orientation of the halfspaces since it does not affect their
center-norms. We will now show how to estimate the center-norm of a halfspace $h$
with mean $v$, being given only an estimate $\tilde v$ of $v$. 

We first show that the center-norm of a halfspace with mean $v$ is not very
sensitive to changes in $v$.  We achieve this by taking the derivative of the
center-norm with respect to the mean of the halfspace.
\begin{definition}
  Let $\mu$ any RI distribution over $\bR^n$, and write $\Phi$ for the
  ``two-sided CDF'' of the 1-dimensional projection $\mu_\pi$:
  \[
  \Phi(t) \define \Pru{z \sim \mu_\pi}{z \geq t} - \Pru{z \sim \mu_\pi}{z < t}
    = 2\Pru{z \sim \mu_\pi}{z \geq t} - 1
    = 2 \int_t^\infty d\mu_\pi(z) - 1 \,.
  \]
  Then for all $v \in [-1,1]$ we define $\xi$ as the center-norm of the
  halfspace with mean $v$ (recall that by rotation invariance, the orientation
  of the halfspace does not affect the center-norm):
  \[
  \xi(v) \define \Exu{z \sim \mu_\pi}{ z \sign(z - \Phi^{-1}(v)) }
  = 2\int_{\Phi^{-1}(v)}^\infty z d\mu_\pi(z) \,.
  \]
\end{definition}

\begin{proposition}
\label{prop:derivative of xi}
Let $\mu$ be an isotropic RI distribution. Then
  $\abs{\frac{d}{dx}\xi(x)^2} \leq 1 $.
\end{proposition}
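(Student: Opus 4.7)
The plan is to compute $\frac{d}{dv}\xi(v)^2$ explicitly by the chain rule and then bound the result using the symmetry and isotropy of $\mu_\pi$. Throughout let $t = \Phi^{-1}(v)$ and write $\rho$ for the density of $\mu_\pi$.

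Differentiating in $t$ via the fundamental theorem of calculus, I get $\frac{d\xi}{dt} = \frac{d}{dt}\!\left[2\!\int_t^\infty\! z\,\rho(z)\,dz\right] = -2t\,\rho(t)$ and $\frac{d\Phi}{dt} = -2\rho(t)$. The density $\rho(t)$ cancels in the ratio, so by the chain rule $\frac{d\xi}{dv} = t$, and therefore $\frac{d}{dv}\xi(v)^2 = 2t\,\xi(v)$. This reduces the proposition to controlling $|2t\,\xi(v)|$.

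Because $\mu$ is rotation-invariant, $\mu_\pi$ is symmetric about $0$, so $\xi$ is an even function of $v$ and $\xi(v) = 2\int_{|t|}^\infty z\, d\mu_\pi(z) \geq 0$ regardless of the sign of $t$. Since $|t| \leq z$ throughout the integrand, the pointwise inequality $|t|z \leq z^2$ together with symmetry and isotropy gives
\[
\left|\tfrac{d}{dv}\xi(v)^2\right| \;=\; 4\!\int_{|t|}^\infty\! |t|\,z\, d\mu_\pi(z) \;\leq\; 4\!\int_{|t|}^\infty\! z^2\, d\mu_\pi(z) \;\leq\; 2\,\Ex{z^2} \;=\; 2,
\]
where the penultimate step uses $2\int_0^\infty z^2 d\mu_\pi(z) = \Ex{z^2}$ by symmetry, and the last uses isotropy.

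The main obstacle is closing the remaining factor of $2$ to reach the tight constant $1$ stated in the proposition: the pointwise bound $|t|z \leq z^2$ appears to be what is losing the factor, since it is saturated at $z=|t|$ and the bound of $2$ is achieved in the limit by a two-point distribution on $\{\pm 1\}$. To tighten it I would exploit the actual structure of $\mu_\pi$ as a one-dimensional projection of an isotropic RI distribution in $\bR^n$, not an arbitrary symmetric isotropic measure on $\bR$; concretely, the density bound of Proposition \ref{prop:density of sphere} controls $\mu_\pi$ far more tightly than generic isotropy. One natural refinement is to replace the pointwise estimate with integration by parts: setting $F(s) = \Pr{z \geq s}$,
\[
\int_{|t|}^\infty\! z\, d\mu_\pi(z) \;=\; |t|\,F(|t|) + \int_{|t|}^\infty\! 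F(z)\,dz,
\]
and then bounding $F$ with the Markov-type tail estimate $F(s) \leq 1/(2s^2)$ in concert with the density bound to sharpen the constant.
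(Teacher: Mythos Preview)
Your computation is exactly the paper's: both differentiate via the chain rule to get $\frac{d}{dv}\xi(v)^2 = 2t\,\xi(v)$ and then bound $|t|z \leq z^2$ inside the integral. The ``gap'' you are trying to close does not exist: the paper's displayed chain
\[
  2\theta\xi(x) = 2\theta\int_\theta^\infty z\,d\mu_\pi(z) \leq 2\int_\theta^\infty z^2\,d\mu_\pi(z) \leq \Ex{z^2}=1
\]
silently drops the factor of $2$ from the definition $\xi(v)=2\int_\theta^\infty z\,d\mu_\pi(z)$. With that factor restored, the argument yields exactly your bound of $2$, not $1$. Your worry that ``the pointwise bound $|t|z\le z^2$ is what is losing the factor'' is misplaced --- the paper uses the very same pointwise bound, and the discrepancy is a typo in the paper rather than a genuine improvement you are missing. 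Downstream (Lemma~\ref{lemma:estimate halfspace norm}) the proposition is used only to get $|\xi(\tilde v)^2-\xi(v)^2|=O(|\tilde v-v|)$, so the constant $2$ is harmless; there is no need to bring in Proposition~\ref{prop:density of sphere} or integration by parts.
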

\begin{proof}
  Write $\theta = \Phi^{-1}(x)$. Observe that $\frac{d\theta}{dx} =
  \fracb{dx}{d\theta}^{-1} = \left(-2 d\mu_\pi(\theta)\right)^{-1}$. The
  derivative of $\xi(x)$ is:
  \[
  \frac{d}{dx}\xi(x) = -2\theta d\mu_\pi(\theta) \frac{d\theta}{dx}
  = \theta \,.
  \]
  From here we have
  \[
    \frac{d}{dx}(\xi(x))^2 = 2\xi(x) \frac{d}{dx}\xi(x)
    = 2\theta\xi(x)
    = 2\theta\int_\theta^\infty zd\mu_\pi(z)
    \leq 2 \int_\theta^\infty z^2d\mu_\pi(z)
    \leq \Exu{z \sim \mu_\pi}{z^2} = 1 \,.
  \]
  where the last equality is by isotropy.
\end{proof}

Using this fact, we show that we can estimate $\norm{\Ex{ xh(x)}}_2^2$ given an
estimate of $\Ex h$.
\begin{algorithm}[H]
\label{alg:estimate halfspace norm}
\begin{algorithmic}[1]
\State Draw $X \sim \mu^m$ for $m = \frac{1}{2(\epsilon/2)^2}\ln(4/\delta)$;
\State $q \gets
  L\frac{\sqrt n}{(\epsilon/2)^2}\log(2/\delta)$ for $L$ in Lemma
  \ref{lemma:estimate ip}.
\State $\tilde t \gets
  \max\{ t : \frac{1}{m} \#\{ x \in X : x_1 \geq t \} \geq (1-\tilde v)/2 \} $
\State $\tilde h(x) \define \sign(x_1 - \tilde t)$
\State \Return $\tilde p \gets \textsc{Estimate-Norm}(\mu_\pi, h, q)$
\end{algorithmic}
\caption{\textsc{Estimate-Halfspace-Norm}($\mu, \tilde v, \epsilon, \delta$)}
\end{algorithm}
\begin{lemma}
\label{lemma:estimate halfspace norm}
Let $\epsilon, \delta > 0$, let $\mu$ be any isotropic RI distribution over
$\bR^n$ and let $h$ be a halfspace with $\Ex h = v$.  Suppose $\abs{\tilde v -
v} < \epsilon$. Then with probability at least $1-\delta$,
\textsc{Estimate-Halfspace-Norm} produces an estimate $\tilde p^2$ satisfying
$\abs{\tilde p^2 - \norm{\Ex{x h(x)}}_2^2} < \epsilon$, and uses at most
$\BigO{\frac{1}{\epsilon^2}\log(1/\delta)}$ random samples.
\end{lemma}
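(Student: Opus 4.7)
The plan is to decompose the error in $\tilde p$ into two pieces: the error caused by using an approximate threshold $\tilde t$ in place of the true threshold of $h$, and the error from the \textsc{Estimate-Norm} subroutine applied to the resulting 1-dimensional halfspace $\tilde h$. By rotation invariance of $\mu$, the squared center-norm of a halfspace depends only on its mean, so $\norm{\Ex{xh(x)}}_2^2 = \xi(v)^2$ and $\norm{\Ex{x\tilde h(x)}}_2^2 = \xi(\Ex \tilde h)^2$. It therefore suffices to show (i) that $\xi(\Ex \tilde h)^2$ is close to $\xi(v)^2$, and (ii) that \textsc{Estimate-Norm} returns a value close to $\xi(\Ex \tilde h)^2$.

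For step (i), I would first use a Hoeffding (or DKW) bound on the empirical tail of the 1-dimensional projection to show that with probability at least $1-\delta/2$, the chosen $\tilde t$ satisfies $\abs{\Pr{x_1 \geq \tilde t} - (1 - \tilde v)/2} \leq \epsilon/4$; the sample size $m = \frac{1}{2(\epsilon/2)^2}\ln(4/\delta)$ is calibrated exactly for this. Via the identity $\Ex \tilde h = 2\Pr{x_1 \geq \tilde t} - 1$, this yields $\abs{\Ex \tilde h - \tilde v} \leq \epsilon/2$ (up to a sign), and combined with the hypothesis $\abs{\tilde v - v} < \epsilon$ gives $\abs{\Ex \tilde h - v} = O(\epsilon)$. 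Applying Proposition \ref{prop:derivative of xi}, which shows that $x \mapsto \xi(x)^2$ is $1$-Lipschitz, then yields $\abs{\xi(\Ex \tilde h)^2 - \xi(v)^2} = O(\epsilon)$.

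For step (ii), the function $\tilde h$ depends only on $x_1$, so $\norm{\Ex{x\tilde h(x)}}_2^2$ equals the 1-dimensional inner product $\inn{\Exu{z \sim \mu_\pi}{z\tilde h(z)}, \Exu{z \sim \mu_\pi}{z\tilde h(z)}}$. I would therefore apply the additive-error case of Lemma \ref{lemma:estimate ip} to the distribution $\mu_\pi$, which is isotropic and 1-dimensional (so $\tau_1\tau_2 = 1$ and the $\sqrt{n}$ factor collapses to $1$). With $q = O(\epsilon^{-2}\log(1/\delta))$ samples this produces an estimate $\tilde p$ within additive error $\epsilon/2$ of $\norm{\Ex{x\tilde h(x)}}_2^2$, with failure probability at most $\delta/2$.

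Combining the two steps via a union bound and the triangle inequality, and rescaling $\epsilon$ by a universal constant at the outset, yields the claimed $\abs{\tilde p - \norm{\Ex{xh(x)}}_2^2} < \epsilon$ with total sample complexity $m + q = O(\epsilon^{-2}\log(1/\delta))$. The subtle bookkeeping is in step (i): one must translate an empirical-quantile guarantee on $\tilde t$ into a guarantee on $\Ex \tilde h$ (fluctuations in the empirical CDF at $\tilde t$ become doubled in the mean), then chain through $\abs{\tilde v - v} < \epsilon$ and the Lipschitz bound on $\xi^2$. Each individual estimate is standard, but care is required so that the accumulated constants yield additive error at most $\epsilon$.
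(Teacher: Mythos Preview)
Your proposal is correct and follows essentially the same route as the paper. Both arguments (a) bound the error in the empirical threshold $\tilde t$ so that $|\bE\tilde h - v|$ is $O(\epsilon)$, (b) invoke Proposition~\ref{prop:derivative of xi} to convert this into an $O(\epsilon)$ error in the squared center-norm, and (c) apply the additive guarantee of Lemma~\ref{lemma:estimate ip} on the one-dimensional projection $\mu_\pi$, where the $\sqrt n$ factor becomes $1$. The paper phrases step (a) via $\hdist$ and Proposition~\ref{prop:estimate threshold} rather than a direct Hoeffding/DKW argument, but these are the same estimate.
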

\begin{proof}
The bound on the number of samples is achieved by definition of $m$ and $q$. By
Lemma \ref{lemma:estimate ip}, \textsc{Estimate-Norm} fails with probability
at most $\delta/2$.

Let $t$ be the threshold of $h$, and let $t'$ be the threshold such that the
halfspace $h'$ with threshold $t'$ satisfies $\Ex{h'} = \tilde v$. By the
guarantee on $\tilde v$ we have $\hdist(t, t') \leq \frac{1}{2}\abs{v - \tilde
v} \leq \epsilon/2$. From Proposition \ref{prop:estimate threshold} we know that
$\hdist(t', \tilde t) < \epsilon/2$ so $\hdist(\tilde t, t) \leq \hdist(\tilde t,
t') + \hdist(t', t) \leq \epsilon$.

From the bound on the derivative of $\xi$ (Proposition \ref{prop:derivative of xi})
we see that $\norm{\bE[x \tilde h(x)]}^2$ has error at most
\[
\abs{\bE[\tilde h] - \Ex h}
  \leq \frac{1}{2}\hdist(\tilde t, t)
  \leq \epsilon/2 \,,
\]
and from \textsc{Estimate-IP} (Lemma \ref{lemma:estimate ip}) we know that
$\tilde p$ is within $\pm \epsilon/2$ of $\norm{\bE[x \tilde h(x)]}^2$. Thus we
conclude that $\tilde p$ is within $\epsilon$ of $\norm{\Ex{x h(x)}}^2$.
\end{proof}

Finally we compose these estimations to get a tester for bounded RI
distributions. In the general algorithm, we will require a small amount of
tolerance in our tester; i.e.~the tester must accept halfspaces and also any
function that is very close to being a halfspace.
\begin{algorithm}[H]
\label{alg:large width tester}
\caption{\textsc{Simple-Tester}$(\mu, f, \epsilon, \delta)$}
\begin{algorithmic}[1]
\State $\epsilon_1^3 \gets K_1 C^2 \epsilon^3$;
  $\epsilon_3^3 \gets K_3 C^2 \epsilon^3$ (for some constants $K_1, K_3$)
\State $\tilde v \gets \textsc{Estimate-Mean}(\mu, f, \epsilon_1^3, \delta/3)$

\State $\tilde c^2 \gets
  \textsc{Estimate-Norm}(\mu, f, m \define L \frac{\sqrt n}{\epsilon_1^6}\log(3/\delta))$
  for $L$ in Lemma \ref{lemma:estimate ip}

\State $\tilde p^2 \gets
  \textsc{Estimate-Halfspace-Norm}(\mu, \tilde v, \epsilon_1^3, \delta/3)$

\If{$\tilde p^2 - \tilde c^2 < \epsilon_3^3$ or $|\tilde v| \geq 1-\epsilon$}
  \textbf{accept}
\EndIf
\end{algorithmic}
\end{algorithm}
\begin{theorem}
\label{thm:simple tester}
Let $C$ be a constant independent of $n$ and $R = \sqrt{n}$. For any $\eta > 0$
there exists a constant $K_2(\eta)$ depending only on $\eta$ so that
\textsc{Simple-Tester} satisfies the following properties: Let $\mu$ be any
bounded RI distribution over $\bR^n$ with $R=\sqrt n$ and constant $C$, and let
$f : \bR^n \to \pmset$ be a measurable function. Suppose $\epsilon,\delta \in
(0,1/2)$. Then for $\epsilon_2 = K_2(\eta) \epsilon^{3+\eta}$,
\begin{enumerate}
  \item If $f$ is $\epsilon_2$-close to a halfspace, the the algorithm accepts
    with probability at least $1-\delta$;
  \item With probability at least $1-\delta$, if the algorithm accepts $f$ then
    there exists a halfspace $h$ aligned with $f$ satisfying $\Ex{h}=\Ex{f}$ and
    $\dist(f,h) \leq \epsilon$.
\end{enumerate}
Furthermore, the algorithm requires at most $\BigO{\frac{\sqrt
n}{\epsilon^6}\log(1/\delta)}$ random samples.
\end{theorem}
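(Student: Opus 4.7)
The plan is to prove the sample complexity, soundness (item 2), and tolerance (item 1) separately, conditioning on all three subroutines succeeding; since each fails with probability at most $\delta/3$, a union bound controls the total failure probability by $\delta$. The sample complexity is dominated by the call to \textsc{Estimate-Norm} with $m = L\sqrt{n}/\epsilon_1^6 \log(3/\delta) = O(\sqrt n \epsilon^{-6}\log(1/\delta))$, since $\epsilon_1 = \Theta(\epsilon)$ when $C$ is a constant.

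For soundness, suppose the algorithm accepts. If $\abs{\tilde v} \geq 1-\epsilon$, then $\abs{\Ex f} \geq 1-\epsilon-\epsilon_1^3$, and I pick a halfspace $h$ with $\Ex h = \Ex f$ whose normal is parallel to $\Ex{xf(x)}$ (existence by intermediate value theorem). Proposition \ref{prop:distance from mean} then yields $\dist(f,h) \leq \epsilon$ after a standard constant adjustment. Otherwise the algorithm accepts with $\tilde p^2 - \tilde c^2 < \epsilon_3^3$ and $\abs{\tilde v} < 1-\epsilon$. Let $h$ be the aligned halfspace with $\Ex h = \Ex f$, let $w$ be its normal, and write $v_f = \Ex{xf(x)}, v_h = \Ex{xh(x)}$, $\epsilon' = \dist(f,h)$. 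Alignment forces $v_h - v_f$ parallel to $w$, so Theorem \ref{thm:gap theorem} together with Proposition \ref{prop:simple ri has linear width} gives $\norm{v_h} - \norm{v_f} \geq \epsilon' W_\mu(w,\epsilon'/2) \geq C\epsilon'^2/2$. Aggregating the errors of \textsc{Estimate-IP} and \textsc{Estimate-Halfspace-Norm} (Lemmas \ref{lemma:estimate ip} and \ref{lemma:estimate halfspace norm}) yields $\norm{v_h}^2 - \norm{v_f}^2 \leq \tilde p^2 - \tilde c^2 + 2\epsilon_1^3 \leq \epsilon_3^3 + 2\epsilon_1^3 = O(C^2\epsilon^3)$. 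Since $\abs{\Ex h} < 1-\epsilon/2$ for sufficiently small $\epsilon_1$, Proposition \ref{prop:simple ri halfspaces have large centers} gives $\norm{v_h} \geq C\epsilon/8$. Using the elementary inequality $a^2-b^2 \geq a(a-b)$ for $a \geq b \geq 0$, we get $\norm{v_h}^2 - \norm{v_f}^2 \geq (C\epsilon/8)(C\epsilon'^2/2)$, which combined with the upper bound forces $\epsilon'^2 = O(\epsilon^2)$; picking $K_1, K_3$ small enough yields $\epsilon' \leq \epsilon$.

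For tolerance, suppose $f$ is $\epsilon_2$-close to a halfspace $h'$. The means satisfy $\abs{\Ex f - \Ex{h'}} \leq 2\epsilon_2$. If $\abs{\Ex{h'}}$ is close to $1$ the algorithm accepts via the $\abs{\tilde v} \geq 1-\epsilon$ clause; otherwise I bound $\tilde p^2 - \tilde c^2$. Proposition \ref{prop:derivative of xi} says the squared center-norm of a halfspace is $1$-Lipschitz in its mean, so $\tilde p^2$ is within $O(\epsilon_2 + \epsilon_1^3)$ of $\norm{\Ex{xh'(x)}}^2$. Lemma \ref{lemma:upper bound on norm distance} (with $R = \sqrt n$) gives $\norm{\Ex{xh'(x)}} - \norm{\Ex{xf(x)}} = O(\epsilon_2\sqrt{\log(1/\epsilon_2)})$, which implies the same bound on the squared center-norms since both norms are at most $1$ by Proposition \ref{prop:center is small}. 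Combining these with the additive error of $\tilde c^2$ gives $\tilde p^2 - \tilde c^2 = O(\epsilon_2\sqrt{\log(1/\epsilon_2)} + \epsilon_2 + \epsilon_1^3)$. The hypothesis $\epsilon_2 = K_2(\eta)\epsilon^{3+\eta}$ makes this strictly less than $\epsilon_3^3 = K_3 C^2 \epsilon^3$ for $K_2(\eta)$ chosen small enough in terms of $\eta, K_3, C$, because the $\epsilon^\eta$ slack exactly absorbs the $\sqrt{\log(1/\epsilon_2)}$ factor introduced by Lemma \ref{lemma:upper bound on norm distance}.

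The main obstacle is the soundness direction: one must squeeze the additive gap $\norm{v_h}^2 - \norm{v_f}^2 = O(\epsilon^3)$ through the elementary identity $a^2-b^2 \geq a(a-b)$ using the lower bound $\norm{v_h} \geq C\epsilon/8$, and then through the quadratic-in-$\epsilon'$ width bound from Theorem \ref{thm:gap theorem}. The compounding of these two $\epsilon$-factor losses is exactly what forces the cubic scaling of $\epsilon_1$ and $\epsilon_3$ in $\epsilon$, and hence the $\sqrt{n}\,\epsilon^{-6}$ sample complexity.
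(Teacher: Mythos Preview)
Your proposal is correct and follows essentially the same approach as the paper's proof: the soundness argument combines Theorem~\ref{thm:gap theorem} with the width bound (Proposition~\ref{prop:simple ri has linear width}) and the lower bound on $\norm{v_h}$ (Proposition~\ref{prop:simple ri halfspaces have large centers}) via the factorization $p^2-c^2=(p+c)(p-c)$, and the tolerance argument uses Lemma~\ref{lemma:upper bound on norm distance} together with the bounds $p,c\le 1$ to control $\tilde p^2-\tilde c^2$. Your explicit invocation of Proposition~\ref{prop:derivative of xi} in the tolerance case (to pass from $\bE h'$ to $\bE f$) is slightly more careful than the paper, which absorbs this step into the guarantee of Lemma~\ref{lemma:estimate halfspace norm}, but the arguments are otherwise the same.
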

\begin{proof}
  Assume without loss of generality that $\mu$ is istropic, since we may scale
  the distribution.

  \textsc{Estimate-Mean} requires at most
  $\BigO{\frac{1}{\epsilon^6}\log(1/\delta)}$ samples and fails with probability
  at most $\delta/3$ (Lemma \ref{lemma:estimate
  mean}), \textsc{Estimate-Halfspace-Norm} requires at most $\BigO{\frac{\sqrt
  n}{\epsilon_1^6}\log(1/\delta)}$ and fails with probability at most $\delta/3$
  (Lemma \ref{lemma:estimate halfspace norm}), and \textsc{Estimate-Norm}
  uses $m = \BigO{\frac{\sqrt n}{\epsilon_1^6}\log(1/\delta)}$ samples and fails
  with probability at most $\delta/3$ (Lemma \ref{lemma:estimate ip}); assume
  these estimations all succeed, which occurs with probability at least
  $1-\delta$.

  \textbf{Completeness:} Suppose $f$ is $\epsilon_2$-close to a halfspace $h$.
  First suppose $\abs{\Ex f} \geq 1-\epsilon/2$. Then by the guarantee on
  \textsc{Estimate-Mean} (Lemma \ref{lemma:estimate mean}) we have $|\tilde v|
  \geq 1-\epsilon$ so the algorithm accepts. Now suppose $\abs{\Ex f} <
  1-\epsilon/2$.

  For convenience, let $v = \bE_{\mu}{f}$ and write $p = \norm{\Ex{xh(x)}}, c =
  \norm{\Ex{x f(x)}}$.

  From Proposition \ref{prop:center is small} we have $p,c \leq 1$. Letting $K$
  be the constant in Lemma \ref{lemma:upper bound on norm distance} and using
  that lemma with the fact that $\dist(f,h) < \epsilon_2$ and $R=\sqrt n$, we
  have
  \[
  (p^2 - c^2) = (p+c)(p-c)
  \leq (p+c) \cdot K \epsilon_2 \sqrt{\ln(1/\epsilon_2)}
  \leq 2K \epsilon_2 \sqrt{\ln(1/\epsilon_2)} \,.
  \]
  Since $n \bE[x_1^2] = \bE[\norm{x}^2] \leq R^2 = n$ we have $\bE[x_1^2] \leq
  1$, so by the guarantee on \textsc{Estimate-Norm} (Lemma \ref{lemma:estimate
  ip}) we have $\tilde p^2 \leq p^2 + \epsilon_1^3$ and $\tilde c^2 \leq c^2 +
  \epsilon_1^3$, and we have $\epsilon_2 = K_2 \epsilon^{3+\eta}, \epsilon_1^3
  = K_1 C^2 \epsilon^3$, for some constants $K_1, K_2$ to be chosen later, so
  \[
    \tilde p^2 - \tilde c^2
    \leq p^2 - c^2 + 2\epsilon_1^3
    \leq 2K \epsilon_2 \sqrt{\ln(1/\epsilon_2)}
      + 2K_1 C^2 \epsilon^3
    = 2K K_2 \epsilon^{3+\eta}
      \sqrt{\ln\fracb{1}{K_2 \epsilon^{3+\eta}}} + 2K_1 C^2 \epsilon^3 \,.
  \]
  We want to show that this is at most $\epsilon_3^3$. Recall $\epsilon_3^3 =
  K_3C^2\epsilon^3$; for the second term we have for $K_1 \leq K_3/4$ that this
  term is at most $\epsilon_3^3/2$, so it suffices to show that the first term
  is also at most $\epsilon_3^3/2$. Then we want to show, for $A = K_3/4K$,
  \[
    K_2 \epsilon^{3+\eta} \sqrt{\ln\fracb{1}{ K_2 \epsilon^{3+\eta}}}
    \leq A\epsilon^3
    \,\equiv\,
    \frac{1}{K_2} \leq \exp{A^2/K_2^2\epsilon^{2\eta} - (3+\eta)\ln(1/\epsilon)}
    \,.
  \]
  For $\eta > 0$ the exponent is bounded so we can choose $K_2 = K_2(\eta)$
  (where $K_2(\eta)$ is a constant depending on $\eta$) so that the above
  inequality holds; from this we conclude that $\tilde p^2 - \tilde c^2 \leq
  K_3\epsilon^3 = \epsilon_3^3$, so the test passes.
  \ignore{
  \[
    \frac{d}{d\epsilon} \epsilon^\frac{6+2\eta}{8} = \frac{3+\eta}{4}
    \epsilon^{2(\eta-1)} = \frac{3+\eta}{4}
  \]
  \[
    \frac{d}{d\epsilon} \exp{-K'/\epsilon^{2\eta}}
    = -K' \exp{-K'/\epsilon^{2\eta}} \frac{d}{d\epsilon} \epsilon^{-2\eta}
    = 2K'\exp{-K'/\epsilon^{2\eta}} \epsilon^{-2\eta-1} \eta
    = 2K' \exp{-K'} \eta
    \geq \frac{3+\eta}{4}
  \]
  when $K' \approx \frac{1}{\eta}\ln(1/\eta)$(??).
  }

  \textbf{Soundness:} First suppose $|\tilde v| \geq 1-\epsilon$. Then $\abs{\Ex
  f} \geq 1-\epsilon-\epsilon^3 \geq 1-2\epsilon$. Let $h$ be any halfspace
  satisfying $\Ex h = \Ex f$; then $\dist(f,h) \leq \epsilon$ (Proposition
  \ref{prop:distance from mean}).

  Now suppose $|\tilde v| < 1-\epsilon$ so $\abs{\Ex f} < 1-\epsilon+\epsilon_1^3
  \leq 1-\epsilon/2$, and let $h$ be the halfspace aligned with $f$ with $\Ex h =
  \Ex f$. By Lemma \ref{lemma:estimate ip} we have
  \[
    \tilde c^2 \in \norm{\Ex{x f(x)}}^2 \pm \epsilon_1^3 = c^2 \pm \epsilon_1^3
  \]
  and from Lemma \ref{lemma:estimate ip},
  \[
    \tilde p^2 \in \norm{\Ex{x h(x)}}^2 \pm \epsilon_1^3 = p^2 \pm \epsilon_1^3\,.
  \]
  Suppose for the sake of contradiction that $\dist(f,h) > \epsilon$. Then by
  Theorem \ref{thm:gap theorem}, Proposition \ref{prop:simple ri has linear
  width}, and the identity $(a+b)(a-b) = a^2 - b^2$, we have
  \[
  \frac{C}{2} \epsilon^2 (p+c)
  \leq (p-c)(p+c)
  \leq \tilde p^2 - \tilde c^2 + 2\epsilon_1^3
  \leq \epsilon_3^3 + 2\epsilon_1^3
  \]
  where the final inequality holds because the test has passed.  From
  Proposition \ref{prop:simple ri halfspaces have large centers} and the fact
  that $\Ex h < 1-\epsilon/2$, we have $p \geq C\epsilon/8$. We also have
  $\epsilon_1^3 = K_1 C^2 \epsilon^3, \epsilon_3^3 = K_3 C^2 \epsilon^3$, so
  \[
    \frac{C^2}{16}\epsilon^3 \leq \epsilon_3^3 + 2\epsilon_1^3
    = (K_3 + 2K_1) C^2 \epsilon^3 < \frac{C^2}{16} \epsilon^3
  \]
  for appropriate choices of constants $K_1, K_3$ (and recall from the
  completeness proof that we require only that $K_1 \leq K_3/4$).  This is a
  contradiction.  Thus $\dist(f,h) \leq \epsilon$.
  \end{proof}
  \begin{remark}
    While we have proven the correctness of this tester for \emph{bounded}
    RI distributions, we could also show that the tester, without the tolerance
    guarantee, works for any RI distribution satisfying $W(w,\epsilon) =
    \Omega(\epsilon)$. An important example would be the isotropic
    \emph{log-concave} RI distributions \cite{LV03}.
  \end{remark}

  It is worth comparing this algorithm and analysis to those provided by Matulef
  \emph{et al.} \cite{MORS10} and Balcan \emph{et al.} \cite{BBBY12}. Those
  algorithms both used the same high-level strategy but were proven to work only
  for the Gaussian distribution. The \cite{MORS10, BBBY12} algorithms had no
  need for the \textsc{Estimate-Halfspace-Norm} subroutine since for the
  Gaussian distribution the center-norm of a halfspace with mean $v$ is given by
  the function $(2\phi(\Phi^{-1}(v)))^2$, where $\phi, \Phi$ are the density and
  CDF of the standard normal distribution \cite{MORS10}; by using a sampling
  algorithm instead, we eliminate the need for such exact relationships.
  Finally, our analysis provides a small tolerance guarantee, which was
  unnecessary for the earlier works.

\section{A Tester for General RI Distributions}
\label{section:ri tester}
We will now show how to use the \textsc{Simple-Tester} as a subroutine to get a
tester for general RI distributions. There are 3 ideas, covered in the next 3
subsections, that we will use to partition the RI space into simple sections:

\begin{enumerate}
\item (Subsection \ref{subsection:find pivot}.) A halfspace with threshold $t$
will be constant on the ball of radius $t$ and almost balanced outside a ball
of radius $T \gg t$. We can quickly identify the values of $t$ and $T$ using the
\textsc{Find-Pivot} algorithm. The middle region can be partitioned into simple
sections while the extreme outer regions will be treated specially.
\item \label{outline projection} (Subsection \ref{subsection:projection}.) The
region outside radius $T \gg t$, where the halfspace is nearly balanced, is
essentially an arbitrary RI distribution. We can normalize all the sample points
in this region so that they all lie in a large bounded space; we show that this
preserves balanced halfspaces. Once they are in a bounded space we can apply
\textsc{Simple-Tester}.

\item (Subsection \ref{subsection:check consistency}.) We can run the
\textsc{Simple-Tester} on each of these bounded regions, but then we must ensure
that the halfspaces in each region are consistent with each other.  We do this
with the \textsc{Check-Consistency} algorithm, which for any two functions that
are close to halfspaces will check that the halfspaces are close to each other.
\end{enumerate}
\ignore{
\begin{wrapfigure}[13]{r}{1.2in}
\centering
\begin{tikzpicture}
  \draw[fill=gray!50] (0,0) circle [x radius=1, y radius=1];
  \draw[fill=white] (0,0) circle [x radius=0.25, y radius=0.25];
  \draw (0,0) circle [x radius=0.4, y radius=0.4];
  \draw (0,0) circle [x radius=0.55, y radius=0.55];
  \draw (0,0) circle [x radius=0.7, y radius=0.7];
  \draw (0,0) circle [x radius=0.85, y radius=0.85];
  \draw (0.25, 1.1) -- (0.25, -1.1) node[anchor=north]{$t$};
\end{tikzpicture}
\caption{
}
\label{fig:outline of tester}
\end{wrapfigure}
}

\subsection{Finding the Important Radii}
\label{subsection:find pivot}
For a function $f$, \textsc{Find-Pivot} identifies a threshold $\tilde t$ (a
``pivot'') such that $f$ is nearly constant on the ball of radius $\tilde t$. It
simply returns the smallest radius that can possibly satisfy this condition
given a set of examples.
\begin{algorithm}[H]
\label{alg:find pivot}
\caption{\textsc{Find-Pivot}$(\mu, f, \epsilon, \delta)$}
\begin{algorithmic}[1]
\State Draw $X \sim \mu^m$ for $m = \frac{1}{\epsilon}\ln(2/\delta)$
\If{$f$ is monochromatic on $X$}
\Return $\infty$
\Else
\State \Return
  $\tilde t
    = \min \{ \norm{x} : x,y \in X, \norm{y} \leq \norm{x}, f(x) \neq f(y) \}$
\EndIf
\end{algorithmic}
\end{algorithm}
\begin{lemma}
\label{lemma:find pivot}
Let $\epsilon, \delta \in (0,1)$.  With probability at least $1-\delta$,
\textsc{Find-Pivot} returns $\tilde t$ satisfying the following:
\begin{enumerate}
  \item For some $b \in \pmset$,
    $\Pr{f(x) \neq b, \norm{x} < \tilde t} < \epsilon$ and $\Pr{f(x)=b, \norm x
    \leq \tilde t} > 0$;
  \item If $f$ is a halfspace with threshold $t$, then $\tilde t \geq |t|$.
\end{enumerate}
\end{lemma}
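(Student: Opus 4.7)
The plan is to handle the two conclusions separately.

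Conclusion (2) is deterministic: if $f(x) = \sign(\inn{w,x} - t)$ is a halfspace with unit normal $w$, then Cauchy--Schwarz gives $|\inn{w,x}| \leq \norm{x} < |t|$ for every $x$ with $\norm{x} < |t|$, so $f \equiv -\sign(t)$ on the open ball of radius $|t|$. Consequently no two samples in this ball can disagree, which forces the minimum in the definition of $\tilde t$ to occur at a sample of norm $\geq |t|$.

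For conclusion (1), I would first unpack the algorithm. Sorting the samples by norm and letting $b$ be the label of the minimum-norm sample $y_{(1)}$, the minimality of $\tilde t$ forces every sample of norm strictly less than $\tilde t$ to have label $b$; otherwise two such samples would already realize the defining minimum. The second half of (1), that $\Pr{f(x) = b,\ \norm{x} \leq \tilde t} > 0$, is then immediate, since $y_{(1)} \sim \mu$ witnesses that this set has positive $\mu$-measure.

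The main probabilistic step is the bound $\Pr{f(x) \neq b,\ \norm{x} < \tilde t} < \epsilon$. I would fix $b \in \pmset$ and define the deterministic threshold $T_b := \inf\{ r : \mu(\{f(x) \neq b,\ \norm{x} \leq r\}) \geq \epsilon\}$, so that by right-continuity of $r \mapsto \mu(\{f(x) \neq b,\ \norm{x} \leq r\})$, the fixed set $S_b := \{x : f(x) \neq b,\ \norm{x} \leq T_b\}$ has $\mu(S_b) \geq \epsilon$. Let $E_b$ denote the bad event that the algorithm identifies $b$ while $\mu(\{f(x) \neq b,\ \norm{x} < \tilde t\}) \geq \epsilon$; by monotonicity $E_b$ forces $\tilde t \geq T_b$, and combined with the algorithm's guarantee that every sample of norm $< \tilde t$ is labeled $b$, this yields (modulo a harmless boundary case at $\tilde t = T_b$) that no sample lies in $S_b$. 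Hence $\Pr{E_b} \leq (1-\epsilon)^m \leq e^{-\epsilon m} = \delta/2$ for $m = \frac{1}{\epsilon}\ln(2/\delta)$, and a union bound over $b \in \pmset$ finishes the argument.

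The main technical nuisance is the boundary case $\tilde t = T_b$ (which can have positive probability when $\mu$ places mass on the sphere of radius $T_b$, so that the witnessing sample itself sits inside $S_b$); this is handled by a vanishing perturbation of $T_b$, at negligible cost to the sample complexity. The rest of the proof is bookkeeping.
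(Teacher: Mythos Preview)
Your argument is correct and close in spirit to the paper's, but you organize the key step differently. The paper does not fix $b$ as the label of $y_{(1)}$; instead it sets $b_t := \sign\bigl(\bE[f(x)\mid \norm{x}<t]\bigr)$ and works with the single threshold $T=\min\{t:\Pr{f\neq b_t,\,\norm{x}<t}\geq\epsilon\}$. The point of this choice is that at $T$ \emph{both} sets $\{f=b_T,\,\norm{x}<T\}$ and $\{f\neq b_T,\,\norm{x}<T\}$ have mass at least $\epsilon$, so with probability at least $1-2(1-\epsilon)^m$ the sample already contains two disagreeing points of norm strictly below $T$, forcing $\tilde t<T$ outright and giving $\Pr{f\neq b_{\tilde t},\,\norm{x}<\tilde t}<\epsilon$ by minimality of $T$. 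This sidesteps your boundary case $\tilde t=T_b$ entirely: your perturbation argument can be made to work (perturb $\epsilon$ downward to force $T_b'<\tilde t$ strictly, then let the perturbation vanish), but the paper's majority-label trick buys a cleaner one-line finish. The deterministic verification of conclusion~(2) is identical to yours.

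One small quibble: saying that $y_{(1)}\sim\mu$ ``witnesses'' $\Pr{f=b,\,\norm{x}\leq\tilde t}>0$ is not quite immediate, since $\tilde t$ and $b$ are themselves random and a single sample landing in a set does not by itself certify positive measure. The statement does hold almost surely, but via a short argument about the essential infimum of $\norm{x}$ on $\{f=b\}$ rather than by inspection.
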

\begin{proof}
  For all $t \geq 0$, define $b_t \define \sign\left(\Exuc{}{f(x)}{\norm{x} <
  t}\right)$. If $\Pr{f(x) \neq b_t, \norm{x} < t} < \epsilon$ for all $t$ the
  conclusion holds, so assume that $\Pr{f(x) \neq b_t, \norm{x} < t} \geq
  \epsilon$ for some $t$. Then the following minimum exists:
  \[
    T = \min_t \{ \Pr{f(x) \neq b_t, \norm{x} < t} \geq \epsilon \} \,.
  \]
  Clearly we have $\Pr{f(x) = b_t, \norm{x} < t} \geq \Pr{f(x) \neq b_t,
  \norm{x} < t}$ for all $t$ so in particular this holds for $T$.

  If any two points $x,y \in X$ satisfy $f(x) \neq f(y)$ and $\norm{x}, \norm{y}
  < T$ then the algorithm returns $\tilde t$ such that $\tilde t < T$, implying
  $\Pr{f(x) \neq b_{\tilde t}, \norm{x} < \tilde t} < \epsilon$, since otherwise
  there would be a contradiction. So we can bound the failure probability by
  the probability that this event fails. By the union bound, this probability is
  at most
  \begin{align*}
    &\Pr{\forall x \in X : f(x) = b_T \text{ or } \norm{x} \geq T}
    + \Pr{\forall x \in X : f(x) \neq b_T \text{ or } \norm{x} \geq T} \\
    &\qquad \leq 2(1-\epsilon)^m
    \leq 2e^{-\epsilon m}
  \end{align*}
  which is bounded by $\delta$ when $m = \frac{1}{\epsilon}\ln(2/\delta)$.  The
  second conclusion holds because, for a halfspace with a threshold $t$ and
  normal $w$, if two points $x,y$ satisfy $\norm y \leq \norm x, f(x) \neq f(y)$
  then we must have $\norm x \geq \abs{\inn{w,x}} \geq |t|$ so in particular
  $\tilde t \geq |t|$.
\end{proof}

\subsection{Rescaling Arbitrary RI Distributions onto a Bounded Space}
\label{subsection:projection}
Given a set of examples of a balanced halfspace, we can normalize the example
points so that they all lie on the same sphere and preserve the halfspace: a
point $x$ satisfying $\inn{w,x} \geq 0$ also satisfies $\inn{w,x/\norm{x}} \geq
0$; but this is not true for all halfspaces. Here we show that we can perform
this transformation on any halfspace that is close to balanced, with a small
cost to the distance.

Let $\pi_r : (0,\infty) \to (r,2r)$ be the bijection $\pi_r(x) = r(2 - e^{-x})$
and define $\pi : \bR^n \to \bR^n$ as $\pi(x) = x \cdot
\pi_r(\norm{x})/\norm{x}$.

\begin{lemma}
\label{lemma:projecting balanced functions}
Let $\mu$ be a RI distribution over $\bR^n$, let $\epsilon \in (0,1/2)$,
and let $f : \bR^n \to \pmset$ be a measurable function. Then there exists some
radius $r$ such that if $\sigma$ is the distribution of $\pi(x)$ for $x \sim
\mu$, and $g(x) \define f(\pi^{-1}(x))$, then:
\begin{enumerate}
\item If $f$ is a halfspace then $f,g$ are aligned;
\item If $\dist_\mu(f,h) \leq \epsilon$, for some halfspace $h$ with
  $\abs{\Ex h} < \eta$, then $\dist_\sigma(g,h) < \epsilon + \eta$;
\item If $\dist_\sigma(g,h) \leq \epsilon$ for a halfspace $h$
satisfying $\bE_{\sigma}{h}=\bE_{\sigma}{g}$ and $\abs{\bE_{\sigma}{g}} <
\eta$, and $h'$ is the balanced halfspace aligned with $h$, then
$\dist_\mu(f, h') < \epsilon + \eta/2$.
\end{enumerate}
\end{lemma}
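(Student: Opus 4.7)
The plan is to build all three parts on a single unifying observation: the map $\pi$ is radial (positive rescaling along each ray), so for any \emph{balanced} halfspace $h'$ with normal $w$ we have $h'(\pi(x)) = \sign(\inn{w,\pi(x)}) = \sign(\inn{w,x}) = h'(x)$ pointwise. Thus a balanced aligned halfspace is a ``pivot'' that behaves identically under $\mu$ and under $\sigma$. The radius $r$ is to be chosen depending only on $\mu$ (not on $f,h$) so that the pushforward $\sigma$, which is supported on the shell $(r,2r)$ and is itself RI, has its $1$-dimensional projection well-controlled near the origin.

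For part 1 (alignment) the argument is pure symmetry. The pushforward $\sigma = \pi_\ast \mu$ is RI because $\pi$ commutes with rotations. If $f(x) = \sign(\inn{w,x}-t)$, then $g(y) = f(\pi^{-1}(y))$ depends on $y$ only through $\norm{y}$ and $\inn{w, y/\norm{y}}$, so $g$ is symmetric under rotations fixing $w$; consequently $\Exu{y\sim\sigma}{yg(y)}$ is parallel to $w$. By Proposition \ref{prop:center and normal are parallel}, $\Exu{x\sim\mu}{xf(x)} = sw$ for some $s>0$, and a short calculation decomposing $\mu$ into spheres and using the same proposition on each shows the $\sigma$-center is also a \emph{positive} scalar multiple of $w$. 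Hence $f,g$ are aligned.

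Parts 2 and 3 both reduce to the triangle inequality through the balanced aligned halfspace $h'$. For part 3: since $h'\circ\pi = h'$, a change of variables gives
\[
\dist_\sigma(g,h') = \Pru{x\sim\mu}{f(x) \neq h'(\pi(x))} = \Pru{x\sim\mu}{f(x)\neq h'(x)} = \dist_\mu(f,h').
\]
Then $\dist_\mu(f,h') \leq \dist_\sigma(g,h) + \dist_\sigma(h,h') \leq \epsilon + |\bE_\sigma h|/2$, and since $\bE_\sigma h = \bE_\sigma g$ with $|\bE_\sigma g|<\eta$ we obtain $\dist_\mu(f,h') < \epsilon + \eta/2$. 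Part 2 is the mirror image: $\dist_\sigma(g,h)\leq \dist_\sigma(g,h') + \dist_\sigma(h',h) = \dist_\mu(f,h') + |\bE_\sigma h|/2 \leq \epsilon + |\bE_\mu h|/2 + |\bE_\sigma h|/2$.

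The main obstacle is controlling $|\bE_\sigma h|$ in part 2: the hypothesis bounds $|\bE h|$ directly, but after pushforward the halfspace $h$ need not remain nearly balanced. The plan is to choose $r$ (depending only on the radial distribution $\mu_\circ$) so that $\sigma$ dominates $\mu$ in the right sense---for instance, so that $\pi_r(\norm{x}) \geq \norm{x}$ on the bulk of $\mu$, which forces $|\inn{w,\pi(x)}| \geq |\inn{w,x}|$ and hence $|\bE_\sigma h| \leq |\bE_\mu h|$; alternatively, pick $r$ large enough that the density of $\sigma_\pi$ near $0$, bounded by $O(\sqrt{n}/r)$ via Proposition \ref{prop:density of sphere}, multiplied by the relevant range of thresholds, is at most $\eta$. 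Either route, together with the triangle inequality above, delivers $\dist_\sigma(g,h) < \epsilon + \eta$ and completes part 2.
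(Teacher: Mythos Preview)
Your proofs of parts 1 and 3 are correct and essentially identical to the paper's: the key observation that the balanced aligned halfspace $h'$ satisfies $h'\circ\pi = h'$ pointwise is exactly what the paper uses, and your change-of-variables $\dist_\sigma(g,h') = \dist_\mu(f,h')$ is the same identity the paper writes down.

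For part 2, however, your detour through $h'$ double-counts and does not reach the stated bound. Your chain gives
\[
\dist_\sigma(g,h) \leq \dist_\mu(f,h') + \dist_\sigma(h',h)
\leq \bigl(\epsilon + \tfrac{1}{2}|\bE_\mu h|\bigr) + \tfrac{1}{2}|\bE_\sigma h|,
\]
and your route 1 yields $|\bE_\sigma h| \leq |\bE_\mu h| + \Pr{\pi\text{ shrinks}}$, so the best you obtain is $\epsilon + |\bE_\mu h| + \tfrac{1}{2}\Pr{\pi\text{ shrinks}}$. Since $|\bE_\mu h|$ may be arbitrarily close to $\eta$, no choice of $r$ (independent of $h$) makes this $< \epsilon + \eta$; you get only $\epsilon + O(\eta)$. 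Your route 2 fails outright: controlling $|\bE_\sigma h|$ via a density bound times $|t|$ requires a uniform bound on the threshold $|t|$, which the hypothesis $|\bE_\mu h|<\eta$ does not supply for general RI $\mu$.

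The paper avoids this loss by not passing through $h'$ at all in part 2. It writes directly
\[
\dist_\sigma(g,h) = \Pru{x\sim\mu}{f(x)\neq h(\pi(x))}
\leq \dist_\mu(f,h) + \Pru{x\sim\mu}{h(x)\neq h(\pi(x))},
\]
and then splits the second term as $\Pr{0\leq x_1 < t,\ (\pi(x))_1 \geq t} + \Pr{x_1 \geq t,\ (\pi(x))_1 < t}$. The first summand is at most $\Pr{0\leq x_1 < t} = |\bE_\mu h|/2 < \eta/2$ (since $\pi$ preserves the sign of each coordinate), and the second is at most $\Pr{\pi\text{ shrinks}} \leq \Pr{\norm{x} > r}$, which can be made $\leq \eta/2$ by taking $r$ large. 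This yields exactly $\epsilon + \eta$, paying for the slab $[0,t]$ only once rather than twice.
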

\begin{proof}
If $f$ is a halfspace with normal vector $w$ then $f,g$ are aligned since all
points in a set $R_{a,b} = \{ x : \norm{x}=a, \inn{w,x}=b \}$ have the same
function value and are centered on $wb$, which becomes $wbr(2-e^{-a})$ after the
transformation.

Suppose $f$ is a $\epsilon$-close to a halfspace $h$ with $\abs{\Ex h} < \eta$,
say $h(x) = \sign(\inn{e_1,x}-t)$. Assume without loss of generality that $t
\geq 0$. Then
\begin{align*}
\dist_\sigma(g,h)
&= \Pru{x \sim \mu}{g(\pi(x)) \neq h(\pi(x))} \\
&\leq \Pru{x \sim \mu}{h(\pi(x)) = h(x) \wedge f(x) \neq h(x)}
  + \Pru{x \sim \mu}{h(\pi(x)) \neq h(x)} \\
&\leq \epsilon + \Pr{x_1 < t , (\pi(x))_1 \geq t} + \Pr{x_1 \geq t , (\pi(x))_1
< t} \,.
\end{align*}
$\Pr{x_1 < t} = |\bE[h]|/2 \leq \eta/2$ so what remains is to bound the second
probability. We have $(\pi(x))_1 \geq r x_1 / \norm{x}$ so for large enough $r$
this probability will be at most $\eta/2$. In particular if $\Ex{\norm{x}^2} =
n$ then $\Pr{x_1 > 2r} \leq \Pr{\norm x > 2r} \leq n/4r^2$ by Chebyshev's
inequality, so for $r > n$ only an $o(1)$ fraction of points will get closer to
the origin after applying $\pi$.

Now suppose that $h$ is a halfspace satisfying $\dist_\sigma(g,h) \leq \epsilon$
and $\bE_\sigma[h] = \bE_\sigma[g]$, and $|\bE_\sigma[g]| < \eta$. Let $h'$ be
the balanced halfspace aligned with $h$. Then since $h'(x) = h'(\pi(x))$ for all
$x$ we have
\[
\dist_\mu(f,h')
= \dist_\sigma(g,h') \leq \dist_\sigma(g,h) + \dist_\sigma(h,h') 
\leq \epsilon + \eta/2 \,. \qedhere
\]
\end{proof}

\subsection{Checking the Consistency of Two Halfspaces}
\label{subsection:check consistency}
After dividing the space into many simple sections and applying the
\textsc{Simple-Tester}, we need to combine the results to get the final
decision. The main idea is that the distance between two halfspaces can be
decomposed via the triangle inequality into a threshold component and an angle
component (Proposition \ref{prop:distance decomposition}).
\textsc{Check-Threshold} checks that there is a unifying threshold value for two
halfspaces on different domains, and \textsc{Check-Consistency} further enforces
that the angle between two halfspaces is small. Finally, Lemma \ref{lemma:check
consistency group} allows us to apply \textsc{Check-Consistency} pairwise and
get a consistent result for all regions.  The proofs of these lemmas are lengthy
and are left until after the presentation of the complete testing algorithm (see
Subsection \ref{subsection:proofs of consistency checks}).

\begin{algorithm}
  \label{alg:check threshold}
  \caption{\textsc{Check-Threshold}$(\mu_1, \mu_2, v_1, v_2, \epsilon, \delta)$}
  \begin{algorithmic}[1]
    \State Draw $X_1 \sim \mu_1^m, X_2 \sim \mu_2^m$ for
      $m = \BigO{\frac{1}{\epsilon^2}\log(1/\delta)}$
    \For{$i \in \{1,2\}$}
    \State $\tilde a_i \gets
      \max\{ z : \#\{x \in X_i : x \geq z\} \geq m(v_i + 2\epsilon/3) \}$ (if no
      such $z$ exists, $\tilde a_i \gets -\infty$)
    \State $\tilde b_i \gets
      \min\{ z : \#\{x \in X_i : x \geq z\} \leq m(v_i - 2\epsilon/3) \}$
      (if no such $z$ exists, $\tilde b_i \gets \infty$)
    \EndFor
    \State If $[a_1,b_1],[a_2,b_2]$ intersect, \textbf{accept}; else
    \textbf{reject}
  \end{algorithmic}
\end{algorithm}
\begin{lemma}
\label{lemma:check threshold}
Let $\epsilon,\delta \in (0,1/2)$, $v_1, v_2 \in [0,1]$ and let $\mu_1, \mu_2$
be any distributions over $\bR$, unknown to the algorithm. Let $t_1, t_2$
satisfy $\Pru{x \sim \mu_1}{x_1 \geq t_1} = v_1, \Pru{y \sim \mu_2}{y_1 \geq
t_2} = v_2$. Then, using $\BigO{\frac{1}{\epsilon^2}\log(1/\delta)}$ samples,
with probability at least $1-\delta$ \textsc{Check-Threshold} satisfies the
following:
\begin{enumerate}
\item If there exists a threshold $t$ such that $\hdist_{\mu_1}(t_1, t),
\hdist_{\mu_2}(t_2, t) < \epsilon/3$ then \textsc{Check-Threshold} accepts; and
\item If \textsc{Check-Threshold} accepts then there exists $t$ such that
$\hdist_{\mu_1}(t_1, t), \hdist_{\mu_2}(t_2, t) \leq \epsilon$.
\end{enumerate}
\end{lemma}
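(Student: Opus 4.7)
The plan is to reduce the correctness of \textsc{Check-Threshold} to a uniform concentration bound on the empirical survival function. Define $G_i(z) := \Pru{x \sim \mu_i}{x \geq z}$ and the empirical version $\hat G_i(z) := m^{-1}\#\{x \in X_i : x \geq z\}$. Observe that, by definition, $\hdist_{\mu_i}(z, t_i) = |G_i(z) - v_i|$, since the two halfspaces $\sign(x - z)$ and $\sign(x - t_i)$ disagree precisely on the interval between $z$ and $t_i$. So everything reduces to bounding $|G_i(t) - v_i|$ for points $t$ in the intervals constructed by the algorithm.

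\textbf{Step 1 (uniform concentration).} Threshold indicator functions over $\bR$ form a class of VC-dimension 1, so by the VC-inequality (or just the DKW inequality), with $m = \BigO{\epsilon^{-2}\log(1/\delta)}$ samples, with probability at least $1-\delta/2$ per coordinate we have $\sup_z |\hat G_i(z) - G_i(z)| \leq \epsilon/6$. Union-bound over $i \in \{1,2\}$, so both bounds hold with probability at least $1-\delta$. Condition on this event for the remainder of the proof.

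\textbf{Step 2 (soundness).} Suppose the intervals $[\tilde a_1,\tilde b_1]$ and $[\tilde a_2, \tilde b_2]$ share a point $t$. By the definition of $\tilde a_i$ and the fact that $\hat G_i$ is non-increasing, for any $z > \tilde a_i$ we have $\hat G_i(z) < v_i + 2\epsilon/3$; a symmetric argument using $\tilde b_i$ gives $\hat G_i(z) > v_i - 2\epsilon/3$ for any $z < \tilde b_i$. Applied just above or below $t$ (and using continuity from the right/left of $\hat G_i$ at the boundary cases $t = \tilde a_i$ or $t = \tilde b_i$, which can be handled by a small slackening of constants), this gives $|\hat G_i(t) - v_i| \leq 2\epsilon/3$. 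Combining with Step~1 yields $|G_i(t) - v_i| \leq 2\epsilon/3 + \epsilon/6 < \epsilon$, i.e.\ $\hdist_{\mu_i}(t, t_i) \leq \epsilon$ for both $i$.

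\textbf{Step 3 (completeness).} Suppose some $t$ satisfies $\hdist_{\mu_i}(t_i, t) < \epsilon/3$ for both $i$, so $|G_i(t) - v_i| < \epsilon/3$. By Step~1, $|\hat G_i(t) - v_i| < \epsilon/3 + \epsilon/6 = \epsilon/2$. In particular $\hat G_i(t) > v_i - 2\epsilon/3$, so $t < \tilde b_i$ (since at $z = t$ the count exceeds the threshold that defines $\tilde b_i$ and the count is non-increasing), and similarly $\hat G_i(t) < v_i + 2\epsilon/3$ gives $t > \tilde a_i$; in the boundary cases the inequalities become $\leq$. Either way $t \in [\tilde a_i, \tilde b_i]$ for both $i$, so the intervals intersect and the algorithm accepts.

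The main obstacle is purely bookkeeping around the discrete nature of $\hat G_i$: it is a step function and the definitions of $\tilde a_i, \tilde b_i$ use non-strict inequalities, so translating ``$t \in [\tilde a_i, \tilde b_i]$'' into a two-sided bound on $\hat G_i(t)$ requires a careful choice of where to place strict versus non-strict inequalities. The gap between the $2\epsilon/3$ in the algorithm's definition and the $\epsilon/3$ in the completeness hypothesis is exactly the slack that absorbs the $\epsilon/6$ concentration error on each side, plus the $1/m$-granularity of the empirical counts, which is negligible for our choice of $m$.
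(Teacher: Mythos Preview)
Your proof is correct and follows the same overall structure as the paper: establish a concentration bound relating the empirical survival counts to the true tail probabilities, then read off completeness and soundness directly from the definitions of $\tilde a_i,\tilde b_i$ and the identity $\hdist_{\mu_i}(z,t_i)=|G_i(z)-v_i|$.

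The one substantive difference is in the concentration step. You invoke a uniform bound (DKW / VC-dimension~1) to get $\sup_z|\hat G_i(z)-G_i(z)|\le\epsilon/6$ in one shot, whereas the paper applies pointwise Hoeffding (its Proposition on threshold estimation) four times, once for each of the targets $v_i\pm 2\epsilon/3$, to pin down $\mu_i(\tilde a_i,t_i]$ and $\mu_i(t_i,\tilde b_i]$ in the range $[\epsilon/3,\epsilon]$. Your route is a bit slicker and makes the boundary bookkeeping more transparent; the paper's route is more elementary in that it avoids DKW/VC and uses only Hoeffding. Both yield the same $O(\epsilon^{-2}\log(1/\delta))$ sample complexity, and the remaining interval-overlap arguments are essentially identical.
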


The idea behind \textsc{Check-Consistency} is simple: we will first ensure that
the ``angle between $f_1, f_2$'' is small by estimating their center-norms and
comparing it with the inner product of their centers of mass; this can be done
with \textsc{Estimate-IP}. This will guarantee that if $f_1,f_2$ are close to
halfspaces $h_1$ and $h_2$, then $h_1$ and $h_2$ have normal vectors pointing in
roughly the same direction. We then make sure that $h_1, h_2$ have consistent
thresholds using \textsc{Check-Threshold}.

\begin{algorithm}[H]
\label{alg:check consistency}
\caption{\textsc{Check-Consistency}$(\mu_1, \mu_2, f_1, f_2, \epsilon, \delta)$}
\begin{algorithmic}[1]
\State $\epsilon_1 \gets \frac{C}{\sqrt{12}} \cdot \epsilon$
\State $\tilde v_1 \gets \textsc{Estimate-Mean}(\mu_1, f_1, \epsilon/3, \delta/6)$
\State $\tilde v_2 \gets \textsc{Estimate-Mean}(\mu_2, f_2, \epsilon/3, \delta/6)$
\State $m \gets \BigTheta{\frac{\sqrt n}{\epsilon_1^4}\log(1/\delta)}$
\If{$|v_1|,|v_2| < 1-\epsilon$}
  \State $\tilde c_1^2 \gets \textsc{Estimate-Norm}
    (\mu_1, f_1, m)$
    \label{line:check consistency estimate ip 1}
  \State $\tilde c_2^2 \gets \textsc{Estimate-Norm}
    (\mu_2, f_2, m)$
    \label{line:check consistency estimate ip 2}
  \State $\tilde p \gets \textsc{Estimate-IP}
    (\mu_1, \mu_2, f_1, f_2, m)$
    \If{$\tilde p < \frac{C^2}{6}\epsilon^2$} \textbf{reject}
    \label{line:coarse estimate of angle}
  \EndIf
  \If{$\frac{\tilde p}{\sqrt{ \tilde c_1^2 \tilde c_2^2}} < 1-2\epsilon_1^2$}
    \textbf{reject}
  \EndIf
\EndIf
\State If $\textsc{Check-Threshold}((\mu_1)_\pi, (\mu_2)_\pi, (\tilde v_1+1)/2,
  (\tilde v_2 + 1)/2, \epsilon, \delta/6)$ accepts, \textbf{accept};
  else \textbf{reject}
\end{algorithmic}
\end{algorithm}
\begin{lemma}
\label{lemma:check consistency}
Let $\mu_1, \mu_2$ be bounded RI distributions over $\bR^n$, with shared
constant $C$, unknown to the algorithm.  Assume that for $\tau_1 =
\bE_{\mu_1}[x_1^2], \tau_2 = \bE_{\mu_2}[x_2^2]$ that $\tau_1\tau_2=1$. Let
$\epsilon, \delta \in (0,1/2)$ and let $f_1, f_2 : \bR^n \to \pmset$ be $\mu_1$-
and $\mu_2$-measurable functions.  Then \textsc{Check-Consistency} satisfies the
following:
\begin{itemize}
  \item If there exists a halfspace $h$ such that $\dist_{\mu_1}(f_1, h),
    \dist_{\mu_2}(f_2, h) < \epsilon/6$ and $h$ is aligned with $f_1, f_2$, the
    algorithm will accept with probability at least $1-\delta$;
  \item If there exist halfspaces $h_1, h_2$ with thresholds $t_1, t_2$ such
    that $\dist_{\mu_1}(f_1,h_1), \dist_{\mu_2}(f_2, h_2) < \epsilon$ and $h_1,
    h_2$ are aligned with $f_1, f_2$ with $\bE_{\mu_1}{h_1} = \bE_{\mu_1}{f_1},
    \bE_{\mu_2}{h_2} = \bE_{\mu_2}{f_2}$, then with probability at least
    $1-\delta$: if the algorithm accepts then there exists a threshold $t$ such
    that $\hdist_{\mu_1}(t_1, t), \hdist_{\mu_2}(t_2, t) < 2\epsilon$, and one
    of the following holds:
    \begin{enumerate}
      \item $\alpha(h_1, h_2) \leq \frac{\pi}{2}\epsilon$, or
      \item $\abs{\bE_{\mu_1}{f_1}} \geq 1-\frac{3}{2}\epsilon$ or
        $\abs{\bE_{\mu_2}{f_2}} \geq 1-\frac{3}{2}\epsilon$.
    \end{enumerate}
    In particular, as a consequence, there exists a halfspace $h$ such that
    $\dist_{\mu_1}(f_1, h), \dist_{\mu_2}(f_2, h) < 5\epsilon$.
  \item The algorithm requires at most $\BigO{\frac{\sqrt
    n}{\epsilon^4}\log(1/\delta)}$ random samples from each distribution.
\end{itemize}
\end{lemma}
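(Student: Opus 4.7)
The plan is to decompose the proof into three independent ingredients---sample complexity, completeness, and soundness---glued together by a single union bound. The sample complexity is immediate by inspection: each of the three \textsc{Estimate-IP}/\textsc{Estimate-Norm} calls uses $m = \widetilde O(\sqrt n / \epsilon_1^4) = \widetilde O(\sqrt n / \epsilon^4)$ samples, while \textsc{Estimate-Mean} and \textsc{Check-Threshold} contribute only lower-order terms. I would set each randomized subroutine's failure probability to $\delta/6$, union-bound over the six events, and condition on all estimates being accurate. Throughout, write $v_i = \bE_{\mu_i}[f_i]$, $c_i^2 = \|\bE_{\mu_i}[xf_i(x)]\|_2^2$, and $p = \inn{\bE_{\mu_1}[xf_1(x)], \bE_{\mu_2}[yf_2(y)]}$; by Proposition \ref{prop:center and normal are parallel}, whenever $f_i$ is aligned with a halfspace $h_i$ the ratio $p/(c_1 c_2)$ equals $\cos\alpha(h_1,h_2)$, which reduces the angle check to an estimation problem on $p, c_1, c_2$.

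For completeness, suppose $f_1, f_2$ are both $\epsilon/6$-close to a common aligned halfspace $h$. The key observation is that by alignment all four centers $\bE[xh(x)], \bE[yh(y)], \bE[xf_1(x)], \bE[yf_2(y)]$ are parallel and point in the same direction, so $p = c_1 c_2$ exactly. The \textsc{Check-Threshold} call accepts by Lemma \ref{lemma:check threshold} since $|\tilde v_i - \bE_{\mu_i}[h]| \leq 2\epsilon/3$. If $|\tilde v_i| \geq 1-\epsilon$ the angle check is skipped and we are done; otherwise $|\bE h| < 1 - \epsilon/3$, so Proposition \ref{prop:simple ri halfspaces have large centers} together with Lemma \ref{lemma:upper bound on norm distance} lower-bound $c_1, c_2$ by $\Omega(C\epsilon)$, and hence $p = c_1 c_2 = \Omega(C^2\epsilon^2)$. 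Both the coarse check $\tilde p \geq C^2\epsilon^2/6$ and the ratio check $\tilde p/\sqrt{\tilde c_1^2 \tilde c_2^2} \geq 1 - 2\epsilon_1^2$ then pass under the additive $\epsilon_1^2$ error guarantees from Lemma \ref{lemma:estimate ip}.

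For soundness, assume the algorithm accepts, and let $h_i$ be halfspaces aligned with $f_i$, sharing means, with $\dist_{\mu_i}(f_i,h_i) < \epsilon$. The \textsc{Check-Threshold} step immediately supplies a common threshold $t$ with $\hdist_{\mu_i}(t_i,t) \leq 2\epsilon$ (Lemma \ref{lemma:check threshold}), since $(\tilde v_i+1)/2$ is within $\epsilon/6$ of the true $\Pru{x\sim\mu_i}{x \geq t_i}$. If $|\tilde v_i| \geq 1-\epsilon$ for some $i$ then $|\bE f_i| \geq 1 - 3\epsilon/2$ and the second of the two alternatives holds. Otherwise the coarse check forces $p \geq C^2\epsilon^2/6 - \epsilon_1^2 > 0$, and combined with the lower bound $c_i = \Omega(C\epsilon)$ (via Propositions \ref{prop:simple ri has linear width}, \ref{prop:simple ri halfspaces have large centers} and Lemma \ref{lemma:upper bound on norm distance}) the ratio check together with additive $\epsilon_1^2$ errors on $\tilde p, \tilde c_i^2$ yields $p/(c_1 c_2) \geq 1 - O(\epsilon_1^2)$, hence $\alpha(h_1,h_2) \leq 2\epsilon_1 \leq \pi\epsilon/2$. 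The final $5\epsilon$-consequence follows by taking $h$ to be the halfspace with normal $w_1$ (or $w_2$, in the large-mean subcase, where Proposition \ref{prop:distance from mean} absorbs the mismatch) and threshold $t$: Proposition \ref{prop:distance decomposition} bounds $\dist_{\mu_i}(h_i,h)$ and the triangle inequality with $\dist_{\mu_i}(f_i,h_i) < \epsilon$ gives the claim.

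The hard part is the ratio-check analysis in soundness: the target $\cos\alpha = p/(c_1 c_2)$ is a ratio that amplifies additive errors when the $c_i$ are small. The coarse check exists precisely to rule out this small-$c_i$ regime, and the constants $\epsilon_1 = C\epsilon/\sqrt{12}$ and $C^2\epsilon^2/6$ are matched so that once the coarse check passes, the additive errors of size $\epsilon_1^2$ translate into $O(\epsilon_1^2)$ slack on the cosine, which pins the angle down to $O(C\epsilon)$ via $\arccos(1-2\epsilon_1^2) \leq 2\epsilon_1$. Keeping track of these constants, while simultaneously handling the boundary case where one or both means are close to $\pm 1$ (so the $c_i$ lower bound may degrade), is the main bookkeeping obstacle.
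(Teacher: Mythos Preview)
Your proposal follows the same three-part skeleton as the paper---sample complexity by union bound over six subroutines, completeness via alignment giving $p=c_1c_2$ together with the lower bounds on $c_i$ from Proposition~\ref{prop:simple ri halfspaces have large centers} and Lemma~\ref{lemma:upper bound on norm distance}, and soundness via \textsc{Check-Threshold} for the threshold and the coarse/ratio checks for the angle---and the final $5\epsilon$ consequence is handled the same way (Proposition~\ref{prop:distance decomposition} plus the triangle inequality, with Proposition~\ref{prop:distance from mean} absorbing the rotation in the large-mean subcase).

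There is, however, a genuine gap in your treatment of the ratio check. You repeatedly invoke the \emph{additive} $\epsilon_1^2$ guarantee of Lemma~\ref{lemma:estimate ip} and assert that ``the additive errors of size $\epsilon_1^2$ translate into $O(\epsilon_1^2)$ slack on the cosine.'' This is false in the regime that actually matters. The lower bounds you cite give only $c_i^2, p \geq \Omega(C^2\epsilon^2) = \Omega(\epsilon_1^2)$, so an additive $\epsilon_1^2$ error is a multiplicative error of order $1$. Concretely, with $c_1^2=c_2^2=3\epsilon_1^2$ and $p=c_1c_2=3\epsilon_1^2$ (true cosine $=1$), one can have $\tilde p = 2\epsilon_1^2$ and $\tilde c_i^2 = 4\epsilon_1^2$, giving $\tilde p/\sqrt{\tilde c_1^2\tilde c_2^2}=1/2$, so completeness fails. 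Symmetrically, in soundness, passing the ratio test with only additive control yields $p/(c_1c_2)\geq\Omega(1)$, not $\geq 1-O(\epsilon_1^2)$, and you cannot conclude $\alpha\leq\pi\epsilon/2$.

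The paper fixes this by switching to the \emph{multiplicative} guarantee of Lemma~\ref{lemma:estimate ip}. Once the lower bounds $c_i^2,p=\Omega(\epsilon_1^2)$ are in hand (established directly in completeness, and for $p$ via the coarse check $\tilde p\geq C^2\epsilon^2/6$ in soundness), the sample size $m=\Theta(\sqrt n/\epsilon_1^4)$ is large enough to certify $\tilde c_i^2=(1\pm\epsilon_1^2)c_i^2$ and $\tilde p=(1\pm\epsilon_1^2)p$; the ratio then satisfies $\tilde p/\sqrt{\tilde c_1^2\tilde c_2^2}=\frac{1\pm\epsilon_1^2}{1\pm\epsilon_1^2}\cdot\frac{p}{c_1c_2}$, which is what makes both directions go through. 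The coarse check is there not to ``rule out the small-$c_i$ regime'' in an additive sense, but to certify (in soundness) that $p$ is large enough for the multiplicative guarantee to apply.
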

Finally we show that we can extract a single consistent halfspace from a set of
near-halfspaces by applying \textsc{Check-Consistency} to each pair of
functions:
\begin{lemma}
\label{lemma:check consistency group}
  Let $\epsilon, \delta > 0$, let $\mu_1, \dots, \mu_k$ be a set of bounded RI
  distributions over $\bR^n$ and let $f : \bR^n \to \pmset$ be a function
  measurable by all these distributions. Suppose that for all $i \in [k]$ there
  exists a halfspace $h_i$ satisfying
    $h_i$ is aligned with $f$ on $\mu_i$,
    $\bE_{\mu_i}{h_i} = \bE_{\mu_i}{f}$, and
    $\dist_{\mu_i}(f,h_i) \leq \epsilon$.
  Assume that for all $i,j \in [k], i \neq j$,
  $\textsc{Check-Consistency}(\mu_i, \mu_j, f, f, \epsilon/5, \delta)$ has
  succeeded. Then there exists a halfspace $h$ such that for all $i \in [k],
  \dist_{\mu_i}(f,h) < 3\epsilon$.
\end{lemma}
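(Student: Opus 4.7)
The plan is to select a single reference index $i_0$, set $h = h_{i_0}$, and for every $j$ bound
\[
\dist_{\mu_j}(f, h_{i_0}) \leq \dist_{\mu_j}(f, h_j) + \dist_{\mu_j}(h_j, h_{i_0}) \leq \epsilon + \dist_{\mu_j}(h_j, h_{i_0})
\]
by the triangle inequality, reducing the goal to showing $\dist_{\mu_j}(h_j, h_{i_0}) < 2\epsilon$ for every $j$. Each pairwise Check-Consistency call supplies exactly the ingredients needed, once we unpack the soundness clause of Lemma \ref{lemma:check consistency} with slack parameter $\epsilon/5$.

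First I would partition the indices into $S_{\text{large}} = \{i : |\bE_{\mu_i}[f]| \geq 1 - 3\epsilon/10\}$ and $S_{\text{small}}$, matching the dichotomy of Lemma \ref{lemma:check consistency}'s soundness clause. If $S_{\text{small}}$ is nonempty, choose any $i_0 \in S_{\text{small}}$; otherwise every $f$ is nearly constant on every $\mu_i$, and the pairwise Check-Threshold sub-checks force the signs of those constants to agree across $[k]$, so an extreme halfspace with the common sign works and Proposition \ref{prop:distance from mean} closes this degenerate case.

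For $j, i_0 \in S_{\text{small}}$ the angular clause of Lemma \ref{lemma:check consistency} gives $\alpha(h_j, h_{i_0}) \leq \pi\epsilon/10$, so Proposition \ref{prop:distance decomposition} contributes at most $\epsilon/10$ to $\dist_{\mu_j}(h_j, h_{i_0})$. For the threshold part I would invoke the ``in particular'' consequence of Lemma \ref{lemma:check consistency} to produce a per-pair halfspace $h^{(j,i_0)}$ that is $\epsilon$-close to $f$ on both $\mu_j$ and $\mu_{i_0}$, and route all threshold comparisons through it: $\dist_{\mu_j}(h_j, h^{(j,i_0)}) \leq 2\epsilon$ is automatic by triangle inequality through $f$ inside $\mu_j$, while $h^{(j,i_0)}$ and $h_{i_0}$ share essentially the same normal direction (via the angular bound) and have thresholds linked in $\mu_j$ through the common threshold $t_{j i_0}$ supplied by the threshold clause of Lemma \ref{lemma:check consistency}. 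For $j \in S_{\text{large}}$ with $i_0 \in S_{\text{small}}$, $f$ is $(3\epsilon/20)$-close on $\mu_j$ to a constant $\pm 1$; the threshold compatibility enforced by Check-Consistency between $j$ and $i_0$ pins down the matching behavior of $h_{i_0}$ on $\mu_j$, so Proposition \ref{prop:distance from mean} again supplies the bound.

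The main obstacle is the threshold bookkeeping: the pairwise threshold agreements furnished by Lemma \ref{lemma:check consistency} live in the $\hdist$ metric of whichever distribution they originated in, while we ultimately need a uniform bound in $\hdist_{\mu_j}$. Routing every threshold comparison through the per-pair common halfspace $h^{(j,i_0)}$ keeps us inside a single distribution whenever $\hdist_{\mu_j}$ is invoked, at the cost of an extra $\epsilon$ of triangle-inequality slack; this slack is precisely what upgrades the naive $2\epsilon$ bound on $\dist_{\mu_j}(f, h_{i_0})$ to the $3\epsilon$ appearing in the conclusion.
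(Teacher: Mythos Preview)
Your proposal has a genuine gap in the threshold step, and you have correctly identified where it lies but not actually closed it. You set $h = h_{i_0}$ and need $\dist_{\mu_j}(h^{(j,i_0)}, h_{i_0})$ to be small; since both are aligned with $h_{i_0}$, this equals $\hdist_{\mu_j}(t_{j,i_0}, t_{i_0})$. But the threshold clause of Lemma~\ref{lemma:check consistency} only tells you $\hdist_{\mu_{i_0}}(t_{j,i_0}, t_{i_0})$ is small --- the wrong metric. Nothing prevents $\mu_j$ from placing a large amount of mass between $t_{j,i_0}$ and $t_{i_0}$ even when $\mu_{i_0}$ places almost none there (think of two annuli at very different radii). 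So ``routing through $h^{(j,i_0)}$'' does not keep you inside a single distribution at the crucial step, and the argument as written does not go through.

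The paper's proof avoids this entirely by \emph{not} choosing $h$ to be any of the $h_i$. Instead it exploits the one-dimensional Helly property: for each $i$ it forms the real interval $I(i) = [\min_j t_{i,j}, \max_j t_{i,j}]$; pairwise these intersect (since $t_{i,j} \in I(i) \cap I(j)$), hence $\bigcap_i I(i) \neq \emptyset$, and any $t$ in this common intersection satisfies $\hdist_{\mu_i}(t_i, t) \leq \epsilon$ for \emph{every} $i$ simultaneously, by monotonicity of $\hdist_{\mu_i}$ in $|t - t_i|$. The unifying halfspace is then taken to have this new threshold $t$ and normal aligned with some $h_z$ with $|\bE_{\mu_z} f| < 1-\epsilon$ (or arbitrary normal if no such $z$ exists). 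This is the missing idea: you must manufacture a single real threshold via Helly rather than reuse any existing $t_{i_0}$.
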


\subsection{The \textsc{RI-Tester} Algorithm}
\label{subsection:ri tester}
The \textsc{RI-Tester} algorithm will divide the whole RI space into several
``rings'', each of which is simple. We will not take fresh samples for each run of a
subroutine inside a ring: instead we will take a large set of samples at the
start and ignore any rings which do not receive enough samples. We then simulate
each subroutine, reusing these samples each time and using the union bound to
ignore that the samples are not independent. The following lemma, whose proof is
delayed until after the presentation of the algorithm, allows us to ignore the
rings with few samples:
\begin{lemma}
\label{lemma:ignore small sections}
Let $\epsilon, \delta > 0$ and
let $\mu$ be an arbitrary probability distribution over a space $\cX$ and
suppose $\cX$ is partitioned into sets $\{ R_i \}_{i \in [k]}$ for some $k$. Let
$\{ m_i \}_{i \in [k]}$ be a set of arbitrary nonnegative numbers with maximum
value $m = \max_i m_i \geq \frac{k}{\epsilon}\ln(2k/\delta)$. Let $X$ be
a set of $M = \frac{2k}{\epsilon}m$ random samples from $\mu$. Write $A = \{ i
\in [k] : \#(X \cap R_i) < m_i \}$. Then with probability at least $1-\delta$ we
have $\sum_{i \in A} \mu(R_i) < \epsilon$.
\end{lemma}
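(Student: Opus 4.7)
The plan is to split the parts of the partition by $\mu$-mass and show that only the light parts can end up in $A$. Let $B = \{i \in [k] : \mu(R_i) < \epsilon/k\}$ be the ``light'' parts and $G = [k] \setminus B$ the ``heavy'' parts. Regardless of the sample $X$, the light parts contribute at most $\sum_{i \in B} \mu(R_i) < k \cdot (\epsilon/k) = \epsilon$ to any sum of the form $\sum_{i \in A} \mu(R_i)$. Thus it suffices to show that with probability at least $1-\delta$, no heavy index belongs to $A$, i.e. every $i \in G$ receives at least $m_i$ samples from $X$.

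For a fixed $i \in G$, the count $\#(X \cap R_i)$ is a sum of $M$ i.i.d.\ Bernoulli indicators with success probability $\mu(R_i) \geq \epsilon/k$, so its expectation satisfies
\[
  \Ex{\#(X \cap R_i)} \;=\; M \mu(R_i) \;\geq\; \frac{2k}{\epsilon} m \cdot \frac{\epsilon}{k} \;=\; 2m \;\geq\; 2m_i,
\]
using $m_i \leq m$. Hence $\{\#(X \cap R_i) < m_i\} \subseteq \{\#(X \cap R_i) \leq \tfrac{1}{2}\Ex{\#(X \cap R_i)}\}$, and by the multiplicative Chernoff bound this event has probability at most $\exp{-\Ex{\#(X \cap R_i)}/8} \leq \exp{-m/4}$. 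Plugging in the hypothesis $m \geq (k/\epsilon)\ln(2k/\delta)$ (and absorbing a harmless factor into the constant in $M$ if desired) bounds this by $\delta/k$.

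A union bound over the at most $k$ heavy indices shows that with probability at least $1-\delta$ every $i \in G$ satisfies $\#(X \cap R_i) \geq m_i$, so $A \subseteq B$, and consequently $\sum_{i \in A} \mu(R_i) \leq \sum_{i \in B} \mu(R_i) < \epsilon$. The only real subtlety is matching constants so that the $\exp{-m/4}$ tail bound survives the union bound; this is exactly why the hypothesis carries an extra logarithmic factor $\ln(2k/\delta)$ beyond the $k/\epsilon$ factor. No concentration beyond vanilla Chernoff is needed, and rotation invariance plays no role here---the statement is a purely probabilistic ``throw away the tiny cells'' lemma.
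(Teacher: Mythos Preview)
Your argument is correct and follows essentially the same route as the paper: split the indices into ``heavy'' ($\mu(R_i)\ge\epsilon/k$) and ``light'' ($\mu(R_i)<\epsilon/k$), show by a concentration bound and a union bound that every heavy cell receives at least $m_i$ samples, and conclude that $A$ contains only light cells whose total mass is below $\epsilon$. The only cosmetic difference is that the paper uses Hoeffding's inequality (obtaining the tail $2\exp(-2(Lm)^2/M)=2\exp(-m\epsilon/k)$, which matches the hypothesis on $m$ exactly), whereas you use the multiplicative Chernoff bound and get $\exp(-m/4)$; your bound is actually stronger whenever $k\ge 4\epsilon$, so the union bound goes through with room to spare.
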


In the application of this lemma, the number of samples $m_i$ for each ring is
determined by the number of samples required by the \textsc{Simple-Tester} and
\textsc{Check-Consistency} algorithms:
\begin{definition}
  For $\epsilon, \delta \in (0,1)$, the combined sample complexity of
  \textsc{Simple-Tester} and \textsc{Check-Consistency} with parameters
  $\epsilon,\delta$ is
  \[
  m(\epsilon,\delta) = \BigTheta{\frac{\sqrt
  n}{\epsilon^6}\log(1/\delta)}
  \]
  by Theorem \ref{thm:simple tester} and Lemma \ref{lemma:check consistency}.
  (\textsc{Check-Consistency} requires samples from two distributions, but we
  only count the number of samples required from one of them).
\end{definition}

\begin{algorithm}[h]
\label{alg:ri tester}
\caption{\textsc{RI-Tester}$(\mu, \epsilon, \delta)$}
\begin{algorithmic}[1]
\Statex \textit{Find the pivot and define the parameters:}
\State $k \gets \ceil{ \log\fracb{2n}{\sqrt{2\pi} \epsilon} }$; \qquad
       $K \gets {k \choose 2} + k + 5$
\State $\tilde t \gets \textsc{Find-Pivot}(\mu, f, \epsilon^2, \delta/K)$

\Statex \vspace{1em}
  \textit{Gather the samples and find the ``significant'' rings:}
\State Draw $X \sim \mu^{M}$ for $M = \frac{2k}{\epsilon}m(\epsilon, \delta/K)$
\State $\forall i \in [k] : (f_i, R_i, S_i, \mu_i)
  \gets \left(f, \{x : 2^{i-1} \leq (\norm x / \tilde t) < 2^i \},
    X \cap R_i, \mu \mid_{R_i} \right)$
\State Let $\mu_{k+1}, f_{k+1}$ be the projection for $f$ on $\mu
  \mid_{(T, \infty)}$, defined by Lemma \ref{lemma:projecting
  balanced functions}.
\State $A \gets \{ i \in [k+1] : \#S_i \geq m(\epsilon,\delta/K) \}$

\Statex \vspace{1em}
  \textit{Ensure the function is close to a halfspace on each ring:}
\For{$i \in A$}
  \State Simulate $\textsc{Simple-Tester}
    (\mu_i, f_i, \epsilon, \delta/K)$ on $S_i$;
    \textbf{reject} if it rejects.
    \label{line:ri simple tester}
\EndFor

\Statex \vspace{1em}
  \textit{Check that the halfspaces for each ring are consistent with each
  other:}
\For{$i,j \in A, i<j$}
  \State Simulate $\textsc{Check-Consistency}
    (\mu_i, \mu_j, f_i, f_j, \epsilon, \delta/K)$
    on $S_i$; \textbf{reject} if it rejects.
    \label{line:ri check consistency}
\EndFor

\Statex \vspace{1em}
  \textit{Ensure all rings of large radius are close to balanced:}
  \For{$i \in A : i-2 \geq \log\fracb{\sqrt n}{\sqrt{2\pi}\epsilon}$}
  \If{$\abs{\textsc{Estimate-Mean}(\mu_i,f_i,\epsilon/2,\delta/K)} > \epsilon$}
    \textbf{reject}
    \label{line:ri check not constant}
    \label{line:ri check balanced}
  \EndIf
\EndFor

\Statex \vspace{1em}
  \textit{Ensure the central section is close to the correct constant:}

\If{$\#\{x \in X : \norm x < \tilde t\} \geq m_0 \define
\BigO{\frac{1}{\epsilon^2}\log(K/\delta)}$ and $A \neq \emptyset$}
  \State $\tilde v_1 \gets
    \textsc{Estimate-Mean}(\mu_0, f, 2\epsilon/3, \delta/K)$
  \State $\tilde v_2 \gets
    \textsc{Estimate-Mean}(\mu_{[\tilde t, \infty)}, f, 2\epsilon/3, \delta/K)$
  \If{$\textsc{Check-Threshold}((\mu_0)_\pi, (\mu_{[\tilde t, \infty)})_\pi,
  (\tilde v_1+1)/2, (\tilde v_2 +1)/2, \epsilon, \delta/K)$ rejects}
    \textbf{reject}
    \label{line:ri check threshold}
  \EndIf
\EndIf
\State \textbf{accept}
\end{algorithmic}
\end{algorithm}

\textbf{Theorem \ref{thm:ri tester}.}
\textit{%
Let $\mu$ be any RI distribution over $\bR^n$, $f : \bR^n \to \pmset$ be any
measurable function, and let $\epsilon, \delta > 0$. Then
\textsc{RI-Tester}, using no knowledge of $\mu$ and at most $\BigO{\frac{\sqrt{n}
\log(n/\epsilon)}{\epsilon^7}\log\fracb{\log(n/\epsilon)}{\delta}}$ random
examples, satisfies the following:
\begin{enumerate}
\item If $f$ is a halfspace, then \textsc{RI-Tester} accepts with probability at
least $1-\delta$; and
\item If $\dist_\mu(f,h) > \epsilon$ for all halfspaces $h$ then
\textsc{RI-Tester} rejects with probability at least $1-\delta$.
\end{enumerate}}
\begin{proof}
  We remark that the \textsc{Simple-Tester} subroutine is called on
  distributions not necessarily bounded by $R=\sqrt n$; this is easily
  remedied by rescaling the samples, which we may do since the algorithm knows
  the bounds on each distribution $\mu_i$.

  We will prove the sample complexity and show that the algorithm succeeds with
  probability at least $1-\delta$.  First we note that the set $X$ of samples
  has size
  \[
  M =
  \frac{2k}{\epsilon}m(\epsilon,\delta/K) = \BigO{\frac{k \sqrt
  n}{\epsilon^7}\log(K/\delta)}
  = \BigO{\frac{\sqrt n \log(n/\epsilon)}{\epsilon^7}
  \log\fracb{\log(n/\epsilon)}{\delta}} \,.
  \]
  The subroutines which do not use the samples in $X$ are \textsc{Find-Pivot},
  \textsc{Estimate-Mean}, and \textsc{Check-Threshold}, which together require
  $\BigO{\frac{1}{\epsilon^2} \log(K/\delta)} = \BigO{\frac{1}{\epsilon^2}
  \log\fracb{\log(n/\epsilon)}{\delta}}$ samples; thus the total number of
  samples is $\BigO{\frac{\sqrt n \log(n/\epsilon)}{\epsilon^7}
  \log\fracb{\log(n/\epsilon)}{\delta}}$. We bound the probability that the
  algorithm fails by the probability that any of the subroutines fail, or if too
  few samples occur in each significant ring (i.e.~$\sum_{i \notin A} \mu(R_i)
  \geq \epsilon$). To bound the probability of this latter event, we use Lemma
  \ref{lemma:ignore small sections} on the set of $k+1$ rings $R_i$, with the
  parameter $m(\epsilon, \delta/K)$ for each ring. For this we must have
  $m(\epsilon,\delta/K) \geq \frac{k+1}{\epsilon}\ln(2kK/\delta)$: since $k =
  \log\fracb{2n}{\sqrt{2\pi}\epsilon}$ we have
  \begin{equation}
  \label{eq:ri m large enough}
    m(\epsilon, \delta/K) \geq \frac{\sqrt n}{\epsilon^6}\ln(K/\delta)
    = \omega \left(\frac{k}{\epsilon}\ln(kK/\delta)\right) \,,
  \end{equation}
  so with probability at least $1-\delta/K, \sum_{i \notin A} \mu(R_i) <
  \epsilon$. Assuming this is successful, we can bound the failure probability
  of the rest of the algorithm: There are at most ${k \choose 2} + k + 4$
  subroutines that succeed with probability at least $1-\delta/K$ so by the
  union bound the probability of success is at least $1-\delta$. For the rest of
  the proof, assume that all procedures have succeeded.

  \textbf{Completeness:} Suppose $f$ is a halfspace with threshold $t$.
  The next claim bounds the mean for all large rings, which we will use to show
  that \textsc{Simple-Tester} works on the projected function $f_{k+1}$ and that
  line \ref{line:ri simple tester} will pass.
  \begin{claim}
  \label{claim:mean of large rings}
    Set $i^* = \log\fracb{\sqrt n}{\sqrt{2\pi}\epsilon}$.
    For all $i \in [k+1], \abs{\bE_{\mu_i}{f}} \leq \epsilon/2^{i-i^*-1}$.
  \end{claim}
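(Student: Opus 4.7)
Since $f$ is a halfspace $\sign(\inn{w,x}-t)$, Proposition \ref{prop:halfspace mean} applied to the (still RI) restricted distribution $\mu_i$ gives
\[
\abs{\Exu{x\sim\mu_i}{f}}
= \Pru{x\sim\mu_i}{|\inn{w,x}|\leq |t|} \,,
\]
so the task is to upper-bound this probability. The strategy is to use the bound $\tilde t\geq |t|$ guaranteed by Lemma \ref{lemma:find pivot} (assumed successful earlier in the proof) together with the density bound of Proposition \ref{prop:density of sphere} applied sphere-by-sphere inside each ring.

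For $i\in [k]$, every $x\in R_i$ satisfies $\norm{x}\geq 2^{i-1}\tilde t\geq 2^{i-1}|t|$, and $\mu_i$ is an RI distribution supported in this annulus. Writing $\mu_i$ as a mixture of uniform distributions on spheres of radius $r\geq 2^{i-1}\tilde t$, the density of the 1-dimensional projection $(\mu_i)_\pi$ at any point is at most $\tfrac{\sqrt{n-1}}{\sqrt{2\pi}\, r}\leq \tfrac{\sqrt{n}}{\sqrt{2\pi}\cdot 2^{i-1}\tilde t}$ by Proposition \ref{prop:density of sphere}. Integrating this over the interval $[-|t|,|t|]$ of length $2|t|$ and using $|t|\leq \tilde t$ gives
\[
\Pru{x\sim\mu_i}{|\inn{w,x}|\leq |t|}
\leq \frac{2|t|\sqrt{n}}{\sqrt{2\pi}\cdot 2^{i-1}\tilde t}
\leq \frac{\sqrt n}{\sqrt{2\pi}\cdot 2^{i-2}} \,.
\]
By definition of $i^*$, $\sqrt{n}/\sqrt{2\pi}=2^{i^*}\epsilon$, so the bound becomes $\epsilon/2^{i-i^*-c}$ for an absolute constant $c$, giving the claimed decay.

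For $i=k+1$, the distribution $\mu_{k+1}$ is the $\pi$-image of $\mu\mid_{(T,\infty)}$ and $f_{k+1}(x)=f(\pi^{-1}(x))$, so by change of variables the means coincide: $\bE_{\mu_{k+1}}{f_{k+1}}=\bE_{\mu\mid_{(T,\infty)}}{f}$. Then I would repeat the argument above using the fact that every $x$ with $\norm{x}>T=2^k\tilde t$ contributes a density bound $\tfrac{\sqrt n}{\sqrt{2\pi}\cdot 2^k \tilde t}$, yielding again a bound of order $\epsilon/2^{(k+1)-i^*-c}$. The only subtlety here is making sure the transformation $\pi$ plays no role in the computation, which is the cleanest point of the argument since $\pi$ is a bijection preserving the value of $f$; the expectation passes through unchanged.

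The main obstacle (and the only one, really) is being careful with the constants in the $i=k+1$ case and making sure the indexing of the rings lines up with the definition of $i^*$; once the density bound from Proposition \ref{prop:density of sphere} is in hand, the rest is a one-line integration.
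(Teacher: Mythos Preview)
Your proposal is correct and follows essentially the same approach as the paper: apply Proposition~\ref{prop:halfspace mean}, bound the projected density on each sphere in the ring by Proposition~\ref{prop:density of sphere} using the minimum radius $2^{i-1}\tilde t$, integrate over $[-|t|,|t|]$, and invoke $|t|\leq\tilde t$ from Lemma~\ref{lemma:find pivot}. Your treatment of the $i=k+1$ case is in fact slightly more careful than the paper's, since you explicitly note that the change of variables under $\pi$ leaves the mean of $f$ unchanged; the paper's proof simply treats all $i\in[k+1]$ uniformly via the formula $r=2^{i-1}\tilde t$ without commenting on the projection.
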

  \begin{proof}
    Let $r = 2^{i-1}\tilde t = 2^{i-i^*-1} 2^{i^*} \tilde t = 2^{i-i^*-1} \tilde
    t \frac{\sqrt n}{\sqrt{2\pi}} \epsilon$.  We have $\abs{\bE_{\mu_i}{f}} \leq
    \abs{\bE_{\sigma_r}{f}}$ since $f$ is a halfspace and $\mu_i$ is a convex
    combination of spheres of radius at least $\tilde t \cdot 2^{i-1}$. Now by
    Proposition \ref{prop:halfspace mean}, the bound on the density
    Proposition \ref{prop:density of sphere}, and the fact that $|t| \leq \tilde
    t$ by the guarantee of \textsc{Find-Pivot} (Lemma \ref{lemma:find pivot}),
    \[
    \abs{\bE_{\sigma_r}{f}} = \Pru{x \sim \sigma_r}{-|t| < x_1 < |t|}
    \leq 2|t| \frac{\sqrt n}{\sqrt{2\pi} r}
    \leq 2 \tilde t \frac{\sqrt n}{\sqrt{2\pi} r}
    = \epsilon/2^{i-i^*-1} \,. \qedhere
    \]
  \end{proof}

  \emph{Lines 7,8:} The examples of functions $f_1, \dotsc, f_k$ are not
  modified, so these lines pass for $i \in [k]$ by Theorem \ref{thm:simple
  tester}. It remains to show that $f_{k+1}$ passes, since these examples have
  been normalized.
  Let $c$ be the constant in the tolerance guarantee of Theorem \ref{thm:simple
  tester} (for $\eta=1$).  We have $(k+1) - i^* -1 \geq \log(2n) - \log(\sqrt n)
  = \log(2\sqrt n)$ so the mean of the function on the $\nth{(k+1)}$ ring is at
  most $\abs{\Exuc{}{f(x)}{\norm x \geq 2^k \tilde t}} \leq \epsilon/2\sqrt n <
  c\epsilon^4$ when $\epsilon \geq \frac{1}{(4c^2n)^{1/6}}$. Thus by Lemma
  \ref{lemma:projecting balanced functions} we know that $f_{k+1}$ is
  $c\epsilon^4$-close to a halfspace on the projected distribution $\mu_{k+1}$
  so line \ref{line:ri simple tester} will pass.

  \emph{Lines 9,10:} The functions $f_1, \dotsc, f_k$ pass this test by Lemma
  \ref{lemma:check consistency}.
  We have just proved that $f_{k+1}$ is $c\epsilon^4$-close to a halfspace, and
  again by Lemma \ref{lemma:projecting balanced functions} we know that
  $f_{k+1}$ is aligned with $f$, so it satisfies the requirement of
  \textsc{Check-Consistency} and will also pass these lines. 

  \emph{Lines 11,12:} Claim \ref{claim:mean of large rings} proves that
  $\abs{\bE_{\mu_i}{f}} \leq \epsilon/2$ for each $i$ on which line \ref{line:ri
  check balanced} is applied, so these lines pass.

  \emph{Line 16:}
  We have $\abs{\tilde v_1 - \bE_{\mu_0}{f}} < \epsilon, \abs{\tilde v_2 -
  \bE_{\mu_{[\tilde t, \infty)}}{f}} < \epsilon$. Let $t_1, t_2$ be thresholds
  such that the halfspaces $h_i(x) \define \sign(\inn{w,x} - t_i)$ satisfy
  $\bE_{\mu_0}{h_1} = \tilde v_1, \bE_{\mu_{[\tilde t, \infty)}}{h_2} = \tilde
  v_2$. For $i \in \{1,2\}$ and distributions $\mu_0, \mu_{[\tilde t, \infty)}$
  respectively, and assuming without loss of generality that $t \leq t_i$, we
  have
  \[
    2\epsilon/3 \geq \abs{\tilde v_i - \Ex f}
    = \abs{\Pr{z \geq t_i} - \Pr{z \geq t} - \Pr{z < t} + \Pr{z < t_i}}
    = 2\Pr{t \leq z \leq t_i}
    = 2\hdist(t, t_i) \,,
  \]
  so \textsc{Check-Threshold} accepts by Lemma \ref{lemma:check threshold}.
  
  \textbf{Soundness:} Suppose $f$ is accepted by the algorithm. We want to find
  a halfspace $h$ such that $f$ is $\epsilon$-close to $h$.

  \begin{claim}
    Suppose the algorithm succeeds and accepts $f$, and let $\tilde t$ be the
    value of \textsc{Find-Pivot}. Then there exists a halfspace $h$ such that
    \[
      \mu_\circ[\tilde t,\infty) \cdot \dist_{[\tilde t, \infty)}(f, h) < 5\epsilon
      \,.
    \]
  \end{claim}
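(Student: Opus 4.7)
The plan is to build the halfspace $h$ by stitching together the local halfspaces that \textsc{Simple-Tester} produced on each significant ring, and then to bound the weighted disagreement ring-by-ring. Write $T = 2^k \tilde t$ and decompose $\{x : \|x\| \ge \tilde t\}$ into the rings $R_1, \ldots, R_k$ together with the outer region $\{x : \|x\| \ge T\}$ (which, after the projection $\pi$, becomes $\mu_{k+1}$). Then
\[
  \mu_\circ[\tilde t, \infty) \cdot \dist_{[\tilde t, \infty)}(f, h)
  = \sum_{i \in A} \mu(R_i) \dist_{\mu_i}(f, h)
  + \sum_{i \notin A} \mu(R_i) \dist_{\mu_i}(f, h) \,.
\]

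For the insignificant rings, I would invoke Lemma \ref{lemma:ignore small sections} with the per-ring threshold $m(\epsilon, \delta/K)$ (whose hypothesis is met by \eqref{eq:ri m large enough}) to conclude $\sum_{i \notin A} \mu(R_i) < \epsilon$, so this tail contributes at most $\epsilon$ no matter which $h$ I pick. For each significant $i \in A$ with $i \le k$, acceptance of \textsc{Simple-Tester} combined with Theorem \ref{thm:simple tester} yields a halfspace $h_i$ aligned with $f$ on $\mu_i$ satisfying $\bE_{\mu_i}[h_i] = \bE_{\mu_i}[f]$ and $\dist_{\mu_i}(f, h_i) \le \epsilon$. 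For $i = k+1$, \textsc{Simple-Tester} ran on the projected $\mu_{k+1}$, and Lemma \ref{lemma:projecting balanced functions} transports the found halfspace back to $\mu\mid_{[T, \infty)}$ as a balanced aligned halfspace at the cost of only $O(\epsilon)$ additional distance --- provided the projected halfspace is close to balanced, which is ensured by the balance-checks on lines 11--12 (forcing the mean of $f$ on each large ring, and hence on the projected region, to be within $\epsilon$ of $0$).

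Next, because \textsc{Check-Consistency}$(\mu_i, \mu_j, f, f, \epsilon, \delta/K)$ accepted for every pair $i, j \in A$, I would apply Lemma \ref{lemma:check consistency group} to combine the individual $h_i$'s into a single halfspace $h$ with $\dist_{\mu_i}(f, h) = O(\epsilon)$ uniformly across all $i \in A$. Summing the contributions ---  weighted by $\mu(R_i)$ --- gives a total of $O(\epsilon)\cdot \sum_{i \in A} \mu(R_i) + \epsilon \le O(\epsilon)$; careful constant tracking through Lemma \ref{lemma:check consistency group} (which itself pipes through Lemma \ref{lemma:check consistency}) and Lemma \ref{lemma:projecting balanced functions} is what produces the $5\epsilon$ bound rather than a larger multiple.

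The main obstacle is the outer (projected) region: the map $\pi$ only approximately preserves halfspace structure, and only for halfspaces that are already nearly balanced. Verifying this near-balance relies on combining the algorithm's explicit balance tests with the alignment-preservation part of Lemma \ref{lemma:projecting balanced functions}, so that the balanced lift $h'$ exists and satisfies the stated distance guarantee. Keeping the additional loss from this step small enough --- and making sure the unified halfspace delivered by Lemma \ref{lemma:check consistency group} is simultaneously consistent with the lifted outer halfspace --- is where most of the constant juggling happens.
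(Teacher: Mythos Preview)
Your outline matches the paper's approach closely: decompose into rings, handle insignificant rings via Lemma~\ref{lemma:ignore small sections}, extract per-ring halfspaces $h_i$ from Theorem~\ref{thm:simple tester}, unify them via Lemma~\ref{lemma:check consistency group} into a single $h$ with $\dist_{\mu_i}(f,h) < 3\epsilon$ on each $i \in A$, and then pull the result on the projected $\mu_{k+1}$ back to the original outer region using Lemma~\ref{lemma:projecting balanced functions}. You also correctly flag the outer region as the hard part.

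Where you understate things is the step you call ``constant juggling.'' Lemma~\ref{lemma:projecting balanced functions} only gives you closeness of $f$ to the \emph{balanced} halfspace $h'$ aligned with $h$ on the unprojected outer region; you still need $\dist_{\mu'_{k+1}}(h,h') \le 4\epsilon$, i.e.\ that the unified $h$ itself is nearly balanced out there. The paper does this by a genuine two-case split that is not just bookkeeping. In Case~1 there is some significant ring $i \in A$ of large radius on which the balance check (line~\ref{line:ri check balanced}) ran; then $|\bE_{\mu_i} f|$ is small, and since $\dist_{\mu_i}(f,h) < 3\epsilon$, the halfspace $h$ is nearly balanced on that ring and hence on all larger radii. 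In Case~2 every large-radius ring (other than $k+1$) is insignificant, so no balance check ran on them; here the balance-check reasoning you sketch does not apply, and the paper instead argues that one may take the threshold of $h$ to be at most $r = 2\tilde t \sqrt{n}/(\sqrt{2\pi}\,\epsilon)$ (replacing it if necessary, since $h$ is constant on all significant inner rings anyway), after which $|\bE_{\mu'_{k+1}} h|$ is bounded directly via the density estimate. Your proposal covers Case~1 but not Case~2, so the argument as written has a gap when all the large intermediate rings happen to be empty.
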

  \begin{proof}
    Since line \ref{line:ri simple tester} passes for each $i \in A$, we know
    from Theorem \ref{thm:simple tester} that there exists a halfspace $h_i$
    aligned with $f$ on $\mu_i$ satisfying $\bE_{\mu_i}{h_i} = \bE_{\mu_i}{f}$
    and $\dist_{\mu_i}(f,h_i) \leq \epsilon$. Thus from Lemma \ref{lemma:check
    consistency group}, there exists a halfspace $h$ such that
    $\dist_{\mu_i}(f,h) < 3\epsilon$ for each $i \in A$.

    Special attention is required for $\mu_{k+1}$, which is the projection of
    the distribution over radii $[T,\infty)$ onto a sphere, as in Lemma
    \ref{lemma:projecting balanced functions}. Write $\mu_{k+1}'$ for the
    distribution $\mu$ restricted to $R_{k+1}$ (i.e.~the distribution
    $\mu_{k+1}$ before it was projected onto a sphere). If $k+1 \in A$ then we
    know the projected function $f_{k+1}$ is $3\epsilon$-close to the halfspace
    $h$ on $\mu_{k+1}$. By line \ref{line:ri check balanced}, we know that
    $\abs{\bE_{\mu_{k+1}}{f_{k+1}}} < \epsilon$, so by the lemma we know that
    $f$ is $2\epsilon$-close to the balanced halfspace $h'$ aligned with $h$; now
    we need to make sure $h'$ is close to $h$.

    Since $\mu_{k+1}'$ is the distribution over radii greater than $T$, the
    distance between $h'$ and $h$ on $\mu_{k+1}'$ is at most the distance on the
    sphere of radius $T$. We want to show, for the sphere of radius $T$,
    that the halfspace $h$ should have threshold $t$ satisfying
    \begin{equation}
    \label{eq:ri tester soundness}
      \dist_{\mu_{k+1}'}(h,h') = \Pr{0 \leq x_i < t} \leq 4\epsilon \,.
    \end{equation}
    Recall that $r = 2 \tilde t \frac{\sqrt n}{\sqrt{2\pi} \epsilon}$.

    \textbf{Case 1:} Suppose there exists $i \in A$ such that $\tilde t
    \cdot 2^{i-1} \geq r$, so \ref{line:ri check not constant} is executed for
    some such $i$.

    Then we have $\abs{\bE_{\mu_i}{f}} \leq \epsilon + \epsilon/2 \leq
    2\epsilon$ and $\dist_{\mu_i}(f,h) < 3\epsilon$ so $\abs{\bE_{\mu_i}{h}} <
    8\epsilon$. Since $\mu_{k+1}'$ is a distribution over spheres of larger
    radius than sphere $i$, we know that $\abs{\bE_{\mu_{k+1}}{h}} = 2\Pr{0 \leq
    x_1 < t} < 8\epsilon$, which proves equation \eqref{eq:ri tester soundness}.

    \textbf{Case 2:} Suppose that for all $i \leq k$ such that $\tilde t \cdot
    2^{i-1} \geq r, i \notin A$. Then lines \ref{line:ri simple tester} and
    \ref{line:ri check consistency} were not executed for any such rings $i$, so
    we may ignore them. Suppose $t > r$; then $h$ is constant on all radii at
    most $r$, and we can consider the halfspace with threshold $r$ instead of
    $t$ without changing the distance on those radii. Since $f$ is close to
    balanced on $\mu_{k+1}$, we can again consider threshold $r$. Thus we can
    assume that $t \leq r$. Since $t \leq r$ we have, for the sphere of radius
    $T$,
    \[
      \abs{\bE_{\mu_{k+1}}{h}}
      = 2\Pr{0 \leq x_1 < t}
      \leq 2t \frac{\sqrt n}{\sqrt{2\pi} T}
      \leq 2r \frac{\sqrt n}{\sqrt{2\pi} T}
      \leq 2\epsilon
    \]
    since $r = 2 \tilde t \frac{\sqrt n}{\sqrt{2\pi} \epsilon} \leq
    \epsilon T \frac{\sqrt{2\pi}}{\sqrt n}$ when $T \geq 2 \tilde t
    \frac{n}{\epsilon^2}$. Then equation \eqref{eq:ri tester soundness} holds.

    We have $m(\epsilon,\delta/K) =
    \omega\left(\frac{k}{\epsilon}\ln(kK/\delta)\right)$ (equation \eqref{eq:ri
    m large enough}), so from Lemma \ref{lemma:ignore small sections}, we know
    that $\sum_{i \notin A} \mu(R_i) < \epsilon$.  Since $\bigcup_i R_i = \{ x
    \in \bR^n : \norm{x} \geq \tilde t \}$, we conclude
    \[
    \mu_\circ[\tilde t, \infty) \cdot \dist_{[\tilde t, \infty)}(f,h)
    = \sum_{i \in A} \mu(R_i) \dist_{\mu_i}(f,h)
      + \sum_{i \notin A} \mu(R_i) \dist_{\mu_i}(f,h)
      \leq 4\epsilon + \epsilon \,. \qedhere
    \]
  \end{proof}

  Now if $\mu_\circ[0, \tilde t) < \epsilon$ we may ignore this interval. If
  $\mu_\circ[0, \tilde t) \geq \epsilon$ while $\mu_\circ[\tilde t, \infty) <
  \epsilon$ then we may ignore the latter interval and by \textsc{Find-Pivot} we
  know that for some constant $b$,
  \[
    \Pr{f(x) \neq b} \leq \Pr{f(x) \neq b \wedge \norm{x} < \tilde t} + \epsilon
    \leq \epsilon^2 + \epsilon \,.
  \]
  Thus $f$ is $2\epsilon$-close to the constant function $b$. So all that
  remains is the case where neither interval $[0, \tilde t)$ or $[\tilde t,
  \infty)$ can be safely ignored. In this case line \ref{line:ri check
  threshold} is executed. From \textsc{Find-Pivot} we have
  \[
    \epsilon \cdot \Pruc{f(x) \neq b}{\norm x < \tilde t}
    \leq \Pr{f(x) \neq b \wedge \norm x < \tilde t}
    < \epsilon^2
  \]
  so $f$ is $\epsilon$-close to the constant $b$ on $\mu_0$. Let $h_1$ be the
  halfspace aligned with $h$ with $\bE_{\mu_0}{h_1} = \tilde v_1$ and let
  $h_2$ be the halfspace aligned with $h$ with $\bE_{\mu_{[\tilde t,
  \infty)}}{h_2} = \tilde v_2$. Since $f$ is $\epsilon$-close to the constant
  $b$ on $\mu_0$ we have
  \[
    \dist_{\mu_0}(f, h_1) \leq \dist_{\mu_0}(f, b) + \dist_{\mu_0}(b, h_1)
    \leq \epsilon + \frac{1}{2}\abs{\Exuc{}{b - h_1(x)}{\norm x < \tilde t}}
    \leq 2\epsilon \,,
  \]
  since $\abs{\Exuc{}{b-h_1(x)}{\norm x < \tilde t}} \leq \abs{1 - \tilde v_1}
  \leq 2\epsilon$. We can show a similar bound for $h_2$:
  \begin{align*}
    \dist_{[\tilde t, \infty)}(f, h_2)
    &\leq \dist_{[\tilde t, \infty)}(f, h) + \dist_{[\tilde t, \infty)}(h, h_2) \\
    &\leq \epsilon + \frac{1}{2}\abs{\Exu{[\tilde t, \infty)}{h - h_2}} \\
    &\leq 2\epsilon + \frac{1}{2}\abs{\Exu{[\tilde t, \infty)}{f - h_2}} \\
    &\leq 3\epsilon \,.
  \end{align*}
  Now since $h_1, h_2$ are both aligned with $h$, from \textsc{Check-Threshold}
  (Lemma \ref{lemma:check threshold}) we are guaranteed that there exists a
  halfspace $h'$ such that $\dist_{[0, \tilde t)}(h_1, h'), \dist_{[\tilde t,
  \infty)}(h_2, h') \leq \epsilon$. Therefore we have
  \begin{align*}
  \dist(f, h')
  &\leq \mu_\circ[0, \tilde t) \cdot \dist_{[0, \tilde t)}(f, h')
    + \mu_\circ[\tilde t, \infty) \cdot \dist_{[\tilde t, \infty)}(f, h') \\
  &\leq \mu_\circ[0, \tilde t)(2\epsilon + \epsilon)
    + \mu_\circ[\tilde t, \infty)(3\epsilon + \epsilon)
  \leq 4\epsilon \,. \qedhere
  \end{align*}
\end{proof}

\ignore{
\note{Read \cite{NS60, Kla94} to see if there is any improvement on the
following lemma. This is a variant ``Double Dixie Cup Problem''. (These papers
seem to corroborate the below result at first glance, i.e. that for the uniform
distribution the expected number of samples required is
$ k\ln k + (m-1)\ln \ln k + C_m + o(1)$ for some constant $C_m$.)}
}

\textbf{Lemma \ref{lemma:ignore small sections} (restated).}
Let $\epsilon, \delta > 0$ and
let $\mu$ be an arbitrary probability distribution over a space $\cX$ and
suppose $\cX$ is partitioned into sets $\{ R_i \}_{i \in [k]}$ for some $k$. Let
$\{ m_i \}_{i \in [k]}$ be a set of arbitrary nonnegative numbers with maximum
value $m = \max_i m_i \geq \frac{k}{\epsilon}\ln(2k/\delta)$. Let $X$ be
a set of $M = \frac{2k}{\epsilon}m$ random samples from $\mu$. Write $A = \{ i
\in [k] : \#(X \cap R_i) < m_i \}$. Then with probability at least $1-\delta$ we
have $\sum_{i \in A} \mu(R_i) < \epsilon$.

\begin{proof}
  Write $m = \max_i m_i$.
  Define a parameter $L$, to be fixed later, and write $M =
  (L+1)\frac{k}{\epsilon}m$. Write $S_i = \#( X \cap R_i )$. Let $i \in [k]$ be
  any index such that $\mu(R_i) \geq \epsilon/k$. Then we want to show that
  \[
    \Pr{ S_i < m_i } \leq \delta/k \,.
  \]
  We have $S_i = \sum_{x \in X} \ind{x \in R_i}$ so $S_i$ is a sum of
  independent Bernoulli random variables with expectation $\Ex{S_i} = M
  \mu(R_i)$. Then
  \[
  \Pr{ S_i < m_i } \leq \Pr{\abs{M\mu(R_i) - S_i} > M\mu(R_i) - m_i}
  \]
  and
  \[
  M\mu(R_i) = \mu(R_i) (L+1)\frac{km}{\epsilon}
  \geq (L+1) \frac{\epsilon}{k} \frac{km}{\epsilon} = (L+1)m \,.
  \]
  Then $M\mu(R_i) - m_i \geq (L+1)m - m_i \geq Lm$ so
  \[
  \Pr{S_i < m_i}
  \leq \Pr{\abs{M\mu(R_i) - S_i} > M\mu(R_i) - m_i}
  \leq \Pr{\abs{M\mu(R_i) - S_i} > Lm}
  \leq 2\exp{-\frac{2 (Lm)^2}{M}}
  \]
  by Hoeffding's inequality. We want this probability to be at most $\delta/k$
  so we want
  \[
    \frac{2(Lm)^2}{M} \geq \ln(2k/\delta) \,.
  \]
  Substituting
  $Lm = \left(\frac{\epsilon}{km}M - 1 \right)m = M\frac{\epsilon}{k} - m$, we
  have
  \[
  \frac{2(Lm)^2}{M} = \frac{2(M\epsilon/k - m)^2}{M}
  = 2M\frac{\epsilon^2}{k^2} - 4\frac{\epsilon}{k} m + 2\frac{m^2}{M}
  \]
  Suppose $m \geq \frac{k}{\epsilon}\ln(2k/\delta)$. Then for $M =
  \frac{2m}{\epsilon}k$ the first two terms cancel out and the expression is
  \[
  \frac{2(Lm)^2}{M} = \frac{m\epsilon}{k} \geq \ln(2k/\delta) \,,
  \]
  which is what we want. Using the union bound, we get that the probability that
  any set $R_i$ with $\mu(R_i) \geq \epsilon/k$ has $S_i < m_i$ is at most
  $\delta$:
  \[
  \Pr{\exists i : \mu(R_i) \geq \epsilon/k, S_i < m_i}
  \leq k \Pr{S_i < m_i} \leq \delta \,.
  \]
  Assuming this event does not occur, the only indices $i$ satisfying $S_i <
  m_i$ are those with $\mu(R_i) < \epsilon/k$, so
  \[
    \sum_{i \in A} \mu(R_i) < \sum_{i \in A} \epsilon/k \leq \epsilon \,.
    \qedhere
  \]
\end{proof}

\subsection{Proofs for Checking Consistency}
\label{subsection:proofs of consistency checks}
We now present the proofs of Lemmas \ref{lemma:check threshold} and
\ref{lemma:check consistency}. For convenience we will restate these lemmas:

\textbf{Lemma \ref{lemma:check threshold}}
Let $\epsilon,\delta \in (0,1/2)$, $v_1, v_2 \in [0,1]$ and let $\mu_1, \mu_2$
be any distributions over $\bR$, unknown to the algorithm. Let $t_1, t_2$
satisfy $\Pru{x \sim \mu_1}{x_1 \geq t_1} = v_1, \Pru{y \sim \mu_2}{y_1 \geq
t_2} = v_2$. Then, using $\BigO{\frac{1}{\epsilon^2}\log(1/\delta)}$ samples,
with probability at least $1-\delta$ \textsc{Check-Threshold} satisfies the
following:
\begin{enumerate}
\item If there exists a threshold $t$ such that $\hdist_{\mu_1}(t_1, t),
\hdist_{\mu_2}(t_2, t) < \epsilon/3$ then \textsc{Check-Threshold} accepts; and
\item If \textsc{Check-Threshold} accepts then there exists $t$ such that
$\hdist_{\mu_1}(t_1, t), \hdist_{\mu_2}(t_2, t) \leq \epsilon$.
\end{enumerate}
\begin{proof}[Proof of Lemma \ref{lemma:check threshold}]
  First we show that $\epsilon/3 \leq \mu_1(\tilde a_1,
  t_1],\mu_1(t_2, \tilde b_1], \mu_2(\tilde a_2, t_2], \mu_2(t_2,\tilde b_2]
  \leq \epsilon$ with high probability. This is accomplished by using
  Proposition \ref{prop:estimate threshold} 4 times with parameters $\epsilon/3,
  \delta/4$ and each combination
  of $v = v_i \pm 2\epsilon/3$ for $i$ and $\pm$.

  Suppose without loss of generality that $t_1 \leq t_2$, and assume the above
  estimations succeed. Note that if $\tilde a_i = -\infty$ then $v_i +
  2\epsilon/3 > 1$ so $v_i > 1-2\epsilon/3 \geq 2/3$. Therefore $\tilde b_i$ is
  finite, and the opposite direction holds as well, so only one of $\tilde a_i,
  \tilde b_i$ is infinite. If $\tilde a_1$ is finite then $\mu_1[\tilde a_1,
  t_1) \geq \epsilon/3$ so $\tilde a_1 < t_1$; if $\tilde a_1=-\infty$ this
  clearly holds as well. Similarly $t_2 < \tilde b_2$, so $\tilde a_1 \leq t_1
  \leq t_2 \leq \tilde b_2$.
  
  For the first guarantee, we have $\mu_1[t_1,t), \mu_2[t,t_2) <
  \epsilon/3$, and we want to show that the intervals $[\tilde a_1, \tilde b_1],
  [\tilde a_2,\tilde b_2]$ overlap. If $t > \tilde b_1$ then
  \[
  \mu_1[t_1,t)
  \geq \mu_1[t_1, \tilde b_1)
  \geq \epsilon/3
  \]
  which is a contradiction. Likewise if $t < \tilde a_2$ then $\mu_2[t,t_1) \geq
  \epsilon/3$; thus $\tilde a_2 \leq t \leq \tilde b_1$. Therefore the intervals
  overlap and the test passes.

  Now suppose that the test passes, so the intervals $[\tilde a_1, \tilde b_1],
  [\tilde a_2, \tilde b_2]$ overlap.  Let $t$ be some point inside the
  intersection. We need to show that $\mu_1[t_1,t],\mu_2[t,t_2] \leq \epsilon$,
  which is easily seen since $\mu_1[t_1,t] \leq \mu_1[\tilde a_1, \tilde b_1]
  \leq \epsilon$, and the analogous argument holds for $\mu_2$.
\end{proof}

\textbf{Lemma \ref{lemma:check consistency}.}
Let $\mu_1, \mu_2$ be bounded RI distributions over $\bR^n$, with shared
constant $C$, unknown to the algorithm. Assume that for  $\tau_1 =
\bE_{x \sim \mu_1}[x_1^2], \tau_2 = \bE_{x \sim \mu_2}[x_1^2]$ that $\tau_1 \tau_2 = 1$. Let
$\epsilon, \delta \in (0,1/2)$ and let $f_1, f_2 : \bR^n \to \pmset$ be $\mu_1$-
and $\mu_2$-measurable functions.  Then \textsc{Check-Consistency} satisfies the
following:
\begin{itemize}
  \item If there exists a halfspace $h$ such that $\dist_{\mu_1}(f_1, h),
    \dist_{\mu_2}(f_2, h) < \epsilon/6$ and $h$ is aligned with $f_1, f_2$, the
    algorithm will accept with probability at least $1-\delta$;
  \item If there exist halfspaces $h_1, h_2$ with thresholds $t_1, t_2$ such
    that $\dist_{\mu_1}(f_1,h_1), \dist_{\mu_2}(f_2, h_2) < \epsilon$ and $h_1, h_2$ are
    aligned with $f_1, f_2$ with $\bE_{\mu_1}{h_1} = \bE_{\mu_1}{f_1},
    \bE_{\mu_2}{h_2} = \bE_{\mu_2}{f_2}$,
      then with probability at least $1-\delta$: if the algorithm accepts then
      there exists a threshold $t$ such that $\hdist_{\mu_1}(t_1, t),
      \hdist_{\mu_2}(t_2, t) < 2\epsilon$, and one of the following holds:
    \begin{enumerate}
      \item $\alpha(h_1, h_2) \leq \frac{\pi}{2}\epsilon$, or
      \item $\abs{\bE_{\mu_1}{f_1}} \geq 1-\frac{3}{2}\epsilon$ or
        $\abs{\bE_{\mu_2}{f_2}} \geq 1-\frac{3}{2}\epsilon$.
    \end{enumerate}
    In particular, as a consequence, there exists a halfspace $h$ such that
    $\dist_{\mu_1}(f_1, h), \dist_{\mu_2}(f_2, h) < 5\epsilon$.
  \item The algorithm requires at most $\BigO{\frac{\sqrt
    n}{\epsilon^4}\log(1/\delta)}$ random samples from each distribution.
\end{itemize}
\begin{proof}
  The calls to \textsc{Estimate-Mean} require at most
  $\BigO{\frac{1}{\epsilon^2}\log(1/\delta)}$ samples (Lemma \ref{lemma:estimate
  mean}), the call to \textsc{Check-Threshold} requires at most
  $\BigO{\frac{1}{\epsilon^2}\log(1/\delta)}$ samples (Lemma \ref{lemma:check
  threshold}), and the three calls to \textsc{Estimate-IP} and
  \textsc{Estimate-Norm} each require $m = \BigTheta{\frac{\sqrt
  n}{\epsilon^4}\log(1/\delta)}$ samples. So the total sample complexity is
  $\BigTheta{\frac{\sqrt n}{\epsilon^4} \log(1/\delta)}$. By the union bound,
  the probability of failure is at most $\delta$. In the following, assume all
  estimations succeed, and let $c_1 = \norm{\bE_{\mu_1}[xf_1(x)]}, c_2 =
  \norm{\bE_{\mu_2}[xf_2(x)]}, p = \inn{\bE_{\mu_1}[xf_1(x)],
  \bE_{\mu_2}[xf_2(x)]}$.
  
  \textbf{Completeness:} Suppose there exists a halfspace $h$ of distance at
  most $\epsilon/2$ to both $f_1$ and $f_2$, and let $t$ be the threshold of $h$.
  Let $t_1, t_2$ be the thresholds of the halfspaces $h_1, h_2$ aligned with
  $f_1, f_2, h$ such that $\bE_{\mu_1}{f_1} = \bE_{\mu_1}{h_1}$ and
  $\bE_{\mu_2}{f_2} = \bE_{\mu_2}{h_2}$.
  Then
  \[
  \hdist_{\mu_1}(t_1, t)
  = \dist_{\mu_1}(h_1,h)
  = \frac{1}{2}\abs{\Exu{x \sim \mu_1}{h_1(x) - h(x)}}
  = \frac{1}{2}\abs{\Exu{x \sim \mu_1}{f_1(x) - h(x)}}
  \leq \dist_{\mu_1}(f_1, h) < \epsilon/6
  \]
  and same for $\hdist_{\mu_2}(t_2, t)$. Let $\tilde t_1, \tilde t_2$ be the
  thresholds for halfspaces with volumes $\tilde v_1, \tilde v_2$, respectively,
  on $\mu_1$ and $\mu_2$. Then $\hdist_{\mu_1}(\tilde t_1, t_1) = |\tilde v_1 - v_1|/2
  \leq \epsilon/6$ and the same for $\tilde t_2$. We have for each $i$ that
  $\Pru{z \sim (\mu_i)_\pi}{z \geq \tilde t_i} = (\tilde v_i + 1)/2$ which are
  the parameters used for \textsc{Check-Threshold}. Since
  \[
    \hdist_{\mu_1}(\tilde t_1, t) \leq \hdist_{\mu_1}(\tilde t_1, t_1) + \hdist_{\mu_1}(t_1, t)
    \leq \epsilon/6 + \epsilon/6 \leq \epsilon/3 \,,
  \]
  the condition for the first guarantee of \textsc{Check-Threshold} (Lemma
  \ref{lemma:check threshold}) is satisfied, so \textsc{Check-Threshold} passes.

  Suppose $|\tilde v_1|, |\tilde v_2| < 1-\epsilon$, so that
  $\abs{\Ex{h_1}}, \abs{\Ex{h_2}} < 1-\epsilon/2$. Keep in mind that $\mu_1,
  \mu_2$ may not be isotropic, so when we compare the estimates $\tilde c_1,
  \tilde c_2$, and $\tilde p$ we will use the multiplicative guarantees of
  \textsc{Estimate-IP}. However, to prove that the multiplicative guarantees on
  each estimate hold, we scale the distributions appropriately to match the
  hypothesis of Lemma \ref{lemma:estimate ip}.  The multiplicative
  guarantee holds for $\tilde c_1$ iff it holds when $\mu_1$ is scaled to be
  isotropic, so assume this is the case. By Proposition \ref{prop:simple ri
  halfspaces have large centers} we know that
  \begin{equation}
  \label{eq:lower bound on center-norms}
  \norm{\Exu{x \sim \mu_1}{x h_1(x)}} \geq \frac{C}{2} \epsilon \,.
  \end{equation}
  Then from Lemma \ref{lemma:upper bound on norm distance}, we have for some
  constant $K > 0$,
  \begin{equation}
  \label{eq:check consistency center norms are large}
  \begin{aligned}
    c_1^2 &\geq \left(\norm{\Exu{\mu_1}{xh_1(x)}} - K\epsilon
    \sqrt{\ln(1/\epsilon)} \right)^2
    = \frac{C^2}{4}\epsilon^2 - \frac{CK}{2}\epsilon^2 \sqrt{\ln(1/\epsilon)} +
     K^2\epsilon^2 \ln(1/\epsilon) \,.
  \end{aligned}
  \end{equation}
  We may assume $K \geq 1$ since $K$ is an upper bound in Lemma \ref{lemma:upper
  bound on norm distance}, and we also know $C \leq 1$. Thus the following
  inequalities hold when $\epsilon \leq 1/2 \leq e^{-1/4}$:
  \[
    K \sqrt{\ln(1/\epsilon)} \geq C/2 \iff
    \sqrt{\ln(1/\epsilon)} \geq C/(2K) \iff
    \epsilon \leq \exp{-\frac{C^2}{4K^2}} \,,
  \]
  and this implies that $c_1^2 \geq \frac{C^2}{4}\epsilon^2$; this holds also
  for $c_2^2$ by the same argument.

  Now we must show that $\tilde p$ satisfies the multiplicative guarantee. We
  may no longer assume that both $\mu_1, \mu_2$ are isotropic since we are
  comparing the two, but we know that for $\tau_1 = \bE_{\mu_1}[x_1^2], \tau_2 =
  \bE_{\mu_2}[x_1^2]$, $\tau_1 \tau_2 = 1$. Rescaling inequality \eqref{eq:lower
  bound on center-norms} gives us $c_1 \geq \sqrt{\tau_1} \cdot C\epsilon/4$ and
  similar for $c_2$.  Since $f_1, f_2$ are aligned, $p = c_1c_2$, so we also
  have $p \geq \sqrt{\tau_1\tau_2} \cdot C^2\epsilon^2/4 = C^2\epsilon^2/4$.
  Thus
  \[
  \tilde p
  \geq \frac{C^2}{4}\epsilon^2 - \epsilon_1^2
  \geq C^2(3/12 - 1/12)\epsilon^2
  \geq \frac{C^2}{6} \epsilon^2
  \]
  since $\epsilon_1^2 = \frac{C^2}{12}\epsilon^2$. Then line \ref{line:coarse
  estimate of angle} passes and $m = \BigTheta{\frac{\sqrt
  n}{\epsilon_1^4}\log(1/\delta)} \geq \BigOmega{\frac{\sqrt n}{\epsilon_1^2
  p^2} \log(1/\delta)}$ so (for large enough constant in the definition of $m$)
  the multiplicative accuracy guarantee for \textsc{Estimate-IP} holds and we
  have $\tilde c_1^2 = (1\pm\epsilon_1^2)c_1^2, \tilde c_2^2 =
  (1\pm\epsilon_1^2)c_2^2, \tilde p = (1 \pm \epsilon_1^2)p$. Then the test
  passes since, by Claim \ref{claim:fraction lower bound} below, we have
  \[
    \frac{\tilde p}{\sqrt{\tilde c_1^2 \tilde c_2^2}}
    \geq \frac{1-\epsilon_1^2}{1+\epsilon_1^2} \frac{p}{c_1c_2}
    \geq 1-2\epsilon_1^2 \geq 1-2\epsilon^2 \,.
  \]
  \begin{claim}
  \label{claim:fraction lower bound}
    For any $x \in [0,1]$, $\frac{1-x}{1+x} \geq 1-2x$.
  \end{claim}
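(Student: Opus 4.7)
The plan is to verify this by a one-line algebraic manipulation. Since $x \in [0,1]$ we have $1+x > 0$, so the inequality $\frac{1-x}{1+x} \geq 1-2x$ is equivalent to $1-x \geq (1-2x)(1+x)$. Expanding the right-hand side gives $(1-2x)(1+x) = 1 + x - 2x - 2x^2 = 1 - x - 2x^2$, so the inequality reduces to $1-x \geq 1 - x - 2x^2$, i.e., $2x^2 \geq 0$, which holds for every real $x$.

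There is no real obstacle here: the only thing to be careful about is the direction of the inequality when clearing the denominator, which is fine because $1+x$ is strictly positive on $[0,1]$. The statement also holds on the larger interval $x \in (-1, \infty)$ by the same argument, though only $x \in [0,1]$ is needed for its use in the preceding estimate $\tilde p / \sqrt{\tilde c_1^2 \tilde c_2^2} \geq (1-\epsilon_1^2)/(1+\epsilon_1^2)$ with $x = \epsilon_1^2$.
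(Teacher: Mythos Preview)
Your proof is correct. It differs from the paper's argument: the paper computes the derivative $\frac{d}{dx}\frac{1-x}{1+x} = -\frac{2}{(1+x)^2} \geq -2$ and concludes from this that the function, which equals $1$ at $x=0$, stays above the line $1-2x$. Your direct algebraic manipulation is more elementary and avoids calculus entirely; it also makes the equality case ($x=0$) and the strictness for $x>0$ transparent. Both approaches are short, but yours is arguably the cleaner one for such a simple inequality.
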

  \begin{proof}
  $
    \frac{d}{dx} \frac{1-x}{1+x} = - \frac{1}{1+x} + (1-x) \frac{d}{dx} (1+x)^{-1}
    = - \frac{1}{1+x} - (1-x)(1+x)^{-2}
    = - \frac{2}{(1+x)^2} \geq -2 \,.
  $
  \end{proof}

  \textbf{Soundness:} Suppose that the algorithm accepts. By the guarantee on
  \textsc{Check-Threshold} (Lemma \ref{lemma:check threshold}) there exists a
  threshold $t$ such that $\hdist_{\mu_1}(\tilde t_1, t), \hdist_{\mu_2}(\tilde t_2, t) \leq
  \epsilon$.
  \[
  \hdist_{\mu_1}(t_1, t) \leq \hdist_{\mu_1}(t_1, \tilde t_1)
    + \hdist_{\mu_1}(\tilde t_1, t) \leq \epsilon/3 + \epsilon \,,
  \]
  and the same for $t_2$, so the first conclusion holds. Now there are two cases
  for the remaining tests:
  
  \textbf{Case 1:}
  Consider the case where either $|\tilde v_1| \geq 1-\epsilon$ or $|\tilde v_2|
  \geq 1-\epsilon$. Without loss of generality assume $|\tilde v_1| \geq
  1-\epsilon$, so that $\abs{\Exu{\mu_1}{f_1}} \geq 1-\frac{3}{2}\epsilon$. We must show
  that there exists a halfspace $h$ such that $\dist_{\mu_1}(f_1, h), \dist_{\mu_2}(f_2, h)
  \leq 4\epsilon$. Let $h$ be the halfspace with threshold $t$ that is aligned
  with $h_2$. Then
  \[
  \dist_{\mu_2}(f_2, h) \leq \dist_{\mu_2}(f_2, h_2) + \dist_{\mu_2}(h_2, h)
  \leq \epsilon + \hdist_{\mu_2}(t_2, t)
  \leq 3\epsilon \,.
  \]
  Next, let $h_1'$ be the halfspace $h_1$ rotated to be aligned with $h$. Then
  \[
    \dist_{\mu_1}(f_1, h) \leq \dist_{\mu_1}(f_1, h_1) + \dist_{\mu_1}(h_1, h)
    \leq \epsilon + \dist_{\mu_1}(h_1, h_1') + \dist_{\mu_1}(h_1', h)
    \leq \epsilon + 2\epsilon + \hdist_{\mu_1}(t_1, t)
    \leq 5\epsilon
  \]
  where the second-last inequality holds because $\abs{\bE_{\mu_1}{h_1}} =
  \abs{\bE_{\mu_1}{f_1}} \geq
  1-2\epsilon$ (Proposition \ref{prop:distance from mean}).

  \textbf{Case 2:}
  Now consider the case where $|\tilde v_1|, |\tilde v_2| \leq 1-\epsilon$, so
  $\abs{\bE_{\mu_1}{f_2}}, \abs{\bE_{\mu_2}{f_2}} \leq 1-\epsilon/2$.

  By line \ref{line:coarse estimate of angle} we know $\tilde p \geq
  \frac{C^2}{6} \epsilon^2$ and using the additive error guarantee of
  \textsc{Estimate-IP} we know $p \geq \frac{C^2}{12}\epsilon^2$. Therefore $p$
  satisfies the guarantee for the multiplicative error as well, so we know
  $\tilde p = (1 \pm \epsilon_1^2)p$. Now using the same argument from the
  completeness proof (equation \eqref{eq:check consistency center norms are
  large}), we know that $c_1^2, c_2^2 \geq \Omega(\epsilon^2)$ so the conditions
  for multiplicative error for $\tilde c_1, \tilde c_2$, from Lemma
  \ref{lemma:estimate ip}, are satisfied, and $\tilde c_i = (1 \pm \epsilon_1^2)
  c_i$ for each $i$.

  Write $\beta = p / \sqrt{c_1 c_2}$ and $\tilde \beta = \tilde p / \sqrt{\tilde
  c_1^2 \tilde c_2^2}$, which by the multiplicative guarantees satisfies $\tilde
  \beta = \frac{1 \pm \epsilon_1^2}{\sqrt{(1 \pm \epsilon_1^2)(1 \pm
  \epsilon_1^2)}} \beta$.  Since the algorithm accepts, we must have $\tilde
  \beta \geq 1-2\epsilon_1^2$ so
  \begin{align*}
  \beta
  = \frac{\sqrt{(1 \pm \epsilon_1^2)(1 \pm \epsilon_1^2)}}{1 \pm \epsilon_1^2} \tilde \beta
  \geq \frac{1-\epsilon_1^2}{1+\epsilon_1^2}\tilde \beta
  &\geq (1-2\epsilon_1^2) \tilde \beta
    \xh{(Proposition \ref{claim:fraction lower bound})} \\
  &\geq (1-2\epsilon_1^2)^2
  = 1 - 4\epsilon_1^2 + 4\epsilon_1^4
  \geq 1 - 4\epsilon_1^2 \,.
  \end{align*}
  Then since $\epsilon_1 \leq \epsilon/\sqrt{12}$, the angle is at most
  \[
  \alpha(h_1, h_2) = \asin(\sqrt{1-\beta^2})
  \leq \frac{\pi}{2} \sqrt{1-\beta^2}
  \leq \frac{\pi}{2} \sqrt{1-(1-4\epsilon_1^2)^2}
  = \frac{\pi}{2}\sqrt{8\epsilon^2(1-2\epsilon_1^2)}
  \leq \frac{\pi}{2} \sqrt{8} \epsilon_1 < \frac{\pi}{2} \epsilon \,.
  \]
  Finally, we will prove that there exists a halfspace $h$ such that
  $\dist_{\mu_1}(f_1, h), \dist_{\mu_2}(f_2, h) < 5\epsilon$. Let $h$ be the
  halfspace aligned with $h_1$ with threshold $t$. Then
  \[
  \dist_{\mu_1}(f_1, h) \leq \dist_{\mu_1}(f_1, h_1) + \dist_{\mu_1}(h_1, h)
  = \dist_{\mu_1}(f_1, h_1) + \hdist_{\mu_1}(t_1, t) \leq 3\epsilon
  \]
  and
  \[
  \dist_{\mu_2}(f_2, h) \leq \dist_{\mu_2}(f_2, h_2) + \dist_{\mu_2}(h_2, h) \leq
    \epsilon + \hdist_{\mu_2}(t_2, t) + \frac{\alpha(h_2,h)}{\pi}
    \leq (4 + 1/2)\epsilon \,. \qedhere
  \]
\end{proof}

\textbf{Lemma \ref{lemma:check consistency group}.}
  Let $\epsilon, \delta > 0$, let $\mu_1, \dots, \mu_k$ be a set of bounded RI
  distributions over $\bR^n$ and let $f : \bR^n \to \pmset$ be a function
  measurable by all these distributions. Suppose that for all $i \in [k]$ there
  exists a halfspace $h_i$ such that
    $h_i$ is aligned with $f$ on $\mu_i$,
    $\bE_{\mu_i}{h_i} = \bE_{\mu_i}{f}$, and
    $\dist_i(f,h_i) \leq \epsilon$.
  Assume that for all $i,j \in [k], i \neq j$,
  $\textsc{Check-Consistency}(\mu_i, \mu_j, f, f, \epsilon/5, \delta)$ has
  succeeded. Then there exists a halfspace $h$ such that for all $i \in [k],
  \dist_i(f,h) < 3\epsilon$.

\begin{proof}
  By the guarantee on \textsc{Check-Consistency} (Lemma \ref{lemma:check
  consistency}), we have for all $i,j$ there exists a threshold $t_{i,j}$ such
  that $\hdist_i(t_i, t_{i,j}),\hdist_j(t_j, t_{i,j}) < \epsilon$ and one of the
  following holds:
  \begin{enumerate}
    \item $\alpha(h_i,h_j) \leq \frac{\pi}{2} \epsilon$, or
    \item $\abs{\bE_{\mu_i}{f}} \geq 1-\epsilon$ or
      $\abs{\bE_{\mu_j}{f}} \geq 1-\epsilon$.
  \end{enumerate}
  We will first show that there exists some threshold $t$ such that
  $\hdist_i(t_i, t) \leq \epsilon$ for all $i \in [k]$. For each $i$ define the
  interval $I(i) \define [a_i, b_i]$ where $a_i, b_i$ are
  \[
    a_i \define \min_j t_{i,j} \qquad, \qquad b_i \define \max_j t_{i,j} \,.
  \]
  $\hdist_i(a,b)$ is clearly a non-decreasing function with $|a-b|$, so for any
  $t' \in I(i)$ we have $\hdist_i(t_i, t') \leq \max\{\hdist_i(t_i, a_i),
  \hdist_i(t_i, b_i)\} \leq \epsilon$. For any $i,j, t_{i,j} \in I(i) \cap I(j)$
  so $I(i) \cap I(j) \neq \emptyset$; this implies that the set of intervals $I(i)$
  are pairwise nonempty, so $ \bigcap_i I(i) \neq \emptyset$.
  Then there is some $t$ in this intersection, and $t$ therefore satisfies
  $\hdist_i(t_i,t) \leq \epsilon$ for all $i$.

  If there exists some $z$ such that $\abs{\bE_{\mu_z}{f}} < 1-\epsilon$, let
  $h$ be the halfspace aligned with $h_z$ with threshold $t$. Otherwise let $h$
  be an arbitrary halfspace with threshold $t$.  Now let $i \in [k]$. There are
  two cases.

  \textbf{Case 1:} $\abs{\bE_{\mu_i}{f}} \geq 1-\epsilon$.
  In this case, let $h_i'$ be the halfspace $h_i$ rotated to be aligned with $h$.
  Then by Proposition \ref{prop:distance from mean}, $\dist_i(h_i,h_i') \leq
  \epsilon$, so
  \[
  \dist_i(f,h) \leq \dist_i(f,h_i) + \dist_i(h_i, h_i') + \dist_i(h_i',h)
  \leq \epsilon + \epsilon + \hdist_i(t_i,t) \leq 3\epsilon \,.
  \]

  \textbf{Case 2:} $\abs{\bE_{\mu_i}{f}} < 1-\epsilon$. In this case, there must
  be some $z$ as specified above, and $h = h_z$. Let $h_i'$ be the halfspace
  aligned with $h_z$ with threshold $t_i$. Then $\alpha(h_i,h_z) \leq
  \pi\epsilon/2$ so by Proposition \ref{prop:distance decomposition} we have $\dist_i(h_i,h_i') =
  \alpha(h_i,h_z)/\pi \leq \epsilon/2$. Thus
  \[
  \dist_i(f,h) \leq \dist_i(f,h_i) + \dist_i(h_i, h_i') + \dist_i(h_i', h)
  \leq \epsilon + \alpha(h_i,h_i')/\pi + \hdist_i(t_i, t)
  \leq 3\epsilon \,. \qedhere
  \]
\end{proof}

\section*{Acknowledgments}
Many thanks to Eric Blais for suggesting this problem and for many helpful
discussions. Thanks also to Eric Blais and Amit Levi for comments on the draft
of this paper, and to the anonymous reviewers for their careful reading and
comments.

\bibliographystyle{alpha}
\bibliography{references.bib}

\appendix

\section{Appendix}
\subsection{Proofs of the Preliminaries}
\textbf{Proposition \ref{prop:distance from mean}.}\textit{
Let $f,g$ be $\pm 1$-valued functions with $\abs{\Ex f} \geq 1-\epsilon$ and
$\abs{\Ex f - \Ex g} \leq \delta$. Then $\dist(f,g) \leq \epsilon + \delta/2$.
}
\begin{proof}
Suppose without loss of generality that $\Ex f \geq 1-\epsilon$. Then
$1-\epsilon \leq 1-2\Pr{f(x)=-1}$ so $\Pr{f(x)=-1} \leq \epsilon/2$. Now we have
$\delta \geq \abs{\Ex f - \Ex h} = 2\abs{\Pr{f(x) = -1} - \Pr{g(x) = -1}}$ so
\begin{align*}
  \dist(f,g)
  &= \Pr{f(x)=1,g(x)=-1} + \Pr{f(x)=-1,g(x)=1} \leq \Pr{g(x)=-1} + \Pr{f(x)=-1} \\
  &\leq 2\Pr{f(x)=-1} + \delta/2 \leq \epsilon + \delta/2 \,. \qedhere
\end{align*}
\end{proof}

\textbf{Proposition \ref{prop:isotropy}.}\textit{
Let $\mu$ be any RI distribution over $\bR^n$. $\mu$ is isotropic iff $\bE_{x
\sim \mu}[\norm{x}^2]=n$.}

\begin{proof}
By rotation invariance we have $\Ex{x_i^2} = \Ex{\inn{x,u}^2}$ for any unit
vector $u$. Then for any such $u$ we have
\[
  \Ex{\norm{x}^2} = \Ex{\inn{x,x}} = \sum_i \Ex{x_i^2} = n \Ex{\inn{x,u}^2} \,.
\]
If $\mu$ is isotropic this is $n$; in the opposite direction, if this quantity
is $n$ then $\Ex{\inn{x,u}^2}=1$ so $\mu$ is isotropic.
\end{proof}

We need the following proposition:
\begin{proposition}
\label{prop:sphere ratio}
  Let $S_n$ denote the surface area of the $n$-sphere. Then
  $\frac{\sqrt{n-2}}{\sqrt{2\pi}}
  \leq \frac{S_{n-2}}{S_{n-1}} \leq \frac{\sqrt{n-1}}{\sqrt{2\pi}}$.
\end{proposition}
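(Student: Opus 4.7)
The plan is to reduce the inequality to a classical Gamma-function bound and then to prove that bound by log-convexity. With the convention that $S_n$ denotes the surface area of the $n$-dimensional unit sphere (sitting in $\bR^{n+1}$), we have $S_n = 2\pi^{(n+1)/2}/\Gamma((n+1)/2)$, so the ratio in question simplifies:
\[
\frac{S_{n-2}}{S_{n-1}}
= \frac{2\pi^{(n-1)/2}/\Gamma((n-1)/2)}{2\pi^{n/2}/\Gamma(n/2)}
= \frac{1}{\sqrt{\pi}} \cdot \frac{\Gamma(n/2)}{\Gamma((n-1)/2)}.
\]
Dividing the target inequality through by $1/\sqrt{\pi}$, it therefore suffices to prove
\[
\sqrt{(n-2)/2} \;\leq\; \frac{\Gamma(n/2)}{\Gamma((n-1)/2)} \;\leq\; \sqrt{(n-1)/2}.
\]
Setting $a = (n-1)/2$, this is exactly the Wendel-style bound $\sqrt{a-1/2} \leq \Gamma(a+1/2)/\Gamma(a) \leq \sqrt{a}$.

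The plan for both bounds is to apply log-convexity of $\Gamma$ (which follows from the Bohr--Mollerup characterization, and is a standard fact). For the upper bound, log-convexity at the three points $a, a+1/2, a+1$ gives
\[
\log \Gamma(a+1/2) \;\leq\; \tfrac{1}{2}\bigl(\log \Gamma(a) + \log \Gamma(a+1)\bigr),
\]
so $\Gamma(a+1/2)^2 \leq \Gamma(a)\Gamma(a+1) = a\,\Gamma(a)^2$ by the functional equation $\Gamma(a+1)=a\Gamma(a)$; taking square roots yields the upper bound. For the lower bound, the same log-convexity applied at $a-1/2, a, a+1/2$ gives $\Gamma(a)^2 \leq \Gamma(a-1/2)\Gamma(a+1/2)$, and using $\Gamma(a+1/2) = (a-1/2)\Gamma(a-1/2)$ rewrites this as $\Gamma(a)^2 \leq \Gamma(a+1/2)^2/(a-1/2)$, i.e.\ $\Gamma(a+1/2)/\Gamma(a) \geq \sqrt{a-1/2}$.

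Since $n \geq 3$ (so that $a \geq 1$ and $\Gamma(a-1/2)$ is defined and positive), both applications of log-convexity are valid, and combining them proves the proposition. There is no significant obstacle here: the only delicate point is getting the indexing convention for $S_n$ right so that the algebra in the first display matches the claimed bounds; the actual inequality on $\Gamma$ is a two-line consequence of log-convexity plus the functional equation.
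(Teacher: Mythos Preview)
Your proof is correct and follows essentially the same approach as the paper: both reduce the ratio $S_{n-2}/S_{n-1}$ to $\frac{1}{\sqrt{\pi}}\Gamma(n/2)/\Gamma((n-1)/2)$ and then bound this Gamma ratio by Wendel's inequality. The only difference is that the paper cites the inequality $(x-1)^\epsilon\Gamma(x-\epsilon)\leq\Gamma(x)\leq(x-\epsilon)^\epsilon\Gamma(x-\epsilon)$ from \cite{Wen48,Qi10}, whereas you supply a self-contained derivation of the special case $\epsilon=1/2$ via log-convexity of $\Gamma$.
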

\begin{proof}
  It is well known that $S_{n-1} = \frac{2\pi^{n/2}}{\Gamma\fracb{n}{2}}$, so the
  ratio is $ \frac{S_{n-2}}{S_{n-1}}
    = \frac{\Gamma\fracb{n}{2}}{\sqrt{\pi}\Gamma\fracb{n-1}{2}} $.
  The conclusion follows from applying the following inequality for $x>1,
  \epsilon \in [0,1]$:
  \[
    (x-1)^\epsilon \Gamma(x-\epsilon) \leq \Gamma(x) \leq (x-\epsilon)^\epsilon
    \Gamma(x-\epsilon)
  \]
  which can be found in \cite{Wen48, Qi10}.
\end{proof}

\textbf{Proposition \ref{prop:density of sphere}.}\textit{
Let $\sigma$ be the uniform distribution of the sphere of radius $r$ over
$\bR^n$. Let $\sigma_\pi$ be the density of the 1-dimensional projection.
Then $\sigma_\pi(x) \leq \frac{\sqrt{n-1}}{\sqrt{2\pi r^2}}
e^{-\frac{x^2(n-2)}{2r^2}}$, and in particular, for $r = \sqrt{n}$,
$\sigma_\pi(x) \leq \frac{1}{\sqrt{2\pi}} e^{-x^2/4}$ when $n \geq 4$.}
\begin{proof}
  Using the inequality $\frac{S_{n-2}}{S_{n-1}} \leq
  \frac{\sqrt{n-1}}{\sqrt{2\pi}}$, we get
  \begin{align*}
  \sigma_\pi(x)
  &= \frac{S_{n-2}(\sqrt{r-x^2})}{S_{n-1}(r)}
  = \frac{S_{n-2}}{S_{n-1}} \frac{(r-x^2)^{(n-2)/2}}{r^{n-1}}
  = \frac{S_{n-2}}{S_{n-1}} \frac{(1-(x/r)^2)^{(n-2)/2}}{r} \\
  &\leq \frac{S_{n-2}}{r S_{n-1}} e^{-\frac{x^2(n-2)}{2r^2}}
  \leq \frac{\sqrt{n-1}}{\sqrt{2\pi r^2}} e^{-\frac{x^2(n-2)}{2r^2}}
  \,. \qedhere
  \end{align*}
\end{proof}

\textbf{Proposition \ref{prop:center and normal are parallel}.}\textit{
For any RI distribution $\mu$ over $\bR^n$ and any halfspace $h(x) =
\sign(\inn{w,x}-t)$, there exists a scalar $s > 0$, such that $\bE_{x \sim
\mu}[x h(x)] = s w$.}
\begin{proof}
Without loss of generality assume that $w = e_1$.
Let $\sigma_r$ be the uniform distribution over the sphere of radius $r$ in
$\bR^n$ and pick any $\gamma$ such that $|\gamma| \leq r$. Let $x$ be drawn from
$\sigma_r$ conditioned on $x_1=\gamma$. Then $x_2\dotsm x_n$ is drawn from the
uniform distribution over the sphere with radius $\sqrt{r-\gamma^2}$ in
$\bR^{n-1}$, so for $i > 1, \Exuc{}{x_i}{x_1=\gamma} = 0$.  Thus $\bE_{x \sim
\sigma_r}[x \;\;\mid\;\; x_1=\gamma] = \gamma e_1$, and
\begin{align*}
  \Exu{x \sim \sigma_r}{ x\sign(x_1-t) }
  &= \int_{-r}^r \Exuc{}{x \sign(\gamma -t)}{x_1=\gamma } d\Pr{x_1=\gamma } \\
  &= \int_{-r}^r \sign(\gamma -t) \gamma e_1 d\Pr{x_1=\gamma }
  = \Exu{x \sim \sigma_r}{ x_1 \sign(x_1-t) } \cdot e_1 \,.
\end{align*}
Finally, recalling $w=e_1$, $\bE_{x \sim \mu}[xh(x)]
= \bE_{r \sim \mu_\circ}[\bE_{x \sim \sigma_r}[xh(x)]]
= \bE_{x \sim \mu}[ \inn{w,x} h(x) ] \cdot w$.
\end{proof}

\textbf{Proposition \ref{prop:distance decomposition}.}\textit{
  For any two halfspaces $g(x) = \sign(\inn{v,x}-q), h(x) = \sign(\inn{u,x}-p)$
  (where $u,v$ are unit vectors) and RI distribution $\mu$,
  \[
    \dist_\mu(g,h) \leq \frac{\alpha(g,h)}{\pi} + \hdist_\mu(p,q) \,.
  \]}

\begin{proof}
Without loss of generality, assume $|p| \geq |q|$, and let $h'$ be the halfspace
$h'(x) = \sign(\inn{v,x}-p)$. Then by the triangle inequality, $\dist(g,h) \leq
\dist(g,h') + \dist(h',h)$. Since $h',h$ have the same normal vector,
$\dist(h',h) = \hdist(p,q)$, so it remains to show that $\dist(g,h') \leq
\alpha(g,h)/\pi$. It suffices to prove this in the 2-dimensional case,
since we may project $\mu$ onto the span of $u,v$ to get another RI distribution
$\mu_\pi$ on which $\dist_{\mu_\pi}(g,h') = \dist_\mu(g,h')$. It further
suffices to consider the uniform distribution over the unit circle, since any RI
distribution in $\bR^2$ is a convex combination of scaled circles.

Consider $\sigma$, the uniform distribution over the sphere (circle) in
$\bR^2$ with radius $r$. Assume $0\leq p<1$, since if $p < 0$ we may consider
the halfspaces $-g, -h'$ with threshold $-p$ without changing their distance,
and if $p \geq 1$ then $g,h'$ are constant on $\sigma$ and the proposition
easily holds.

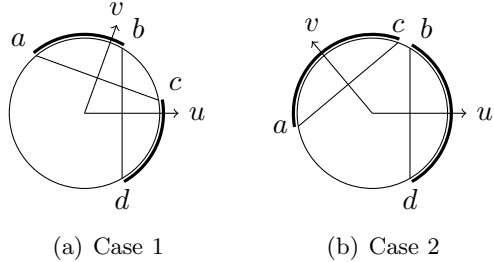
\begin{figure}[H]
\centering
\mbox{
\subfigure[Case 1]{
\begin{tikzpicture}
  \draw (0,0) circle [x radius=1, y radius=1];
  \draw[->] (0,0) -- (1.25,0) node[anchor=west]{$u$};
  \draw[->] (0,0) -- (70:1.25) node[anchor=south]{$v$};
  \draw (130:1)node[anchor=south east]{$a$} -- (10:1)node[anchor=south west]{$c$};
  \draw (-60:1)node[anchor=north]{$d$} -- (60:1)node[anchor=south west]{$b$} ;
  \draw[very thick] (130:1.05) arc (130:60:1.05) ;
  \draw[very thick] (10:1.05) arc (10:-60:1.05) ;
\end{tikzpicture}
}\quad
\subfigure[Case 2]{
\begin{tikzpicture}
  \draw (0,0) circle [x radius=1, y radius=1];
  \draw[->] (0,0) -- (1.25,0) node[anchor=west]{$u$};
  \draw[->] (0,0) -- (130:1.25) node[anchor=south]{$v$};
  \draw (190:1) node[anchor=east]{$a$}-- (70:1) node[anchor=south] {$c$};
  \draw (-60:1) node[anchor=north]{$d$} -- (60:1) node[anchor=south west]{$b$};
  \draw[very thick] (190:1.05) arc (190:70:1.05);
  \draw[very thick] (60:1.05) arc (60:-60:1.05);
\end{tikzpicture}
}
}
\caption{Proof of Proposition \ref{prop:distance decomposition}.}
\label{fig:distance decomposition}
\end{figure}

Write $\alpha = \alpha(g,h)$ for the angle between $u$ and $v$. Let $a,b,c,d \in
\bR^2$ be the points on the circle satisfying $\inn{a,v} = \inn{c,v} = \inn{b,u}
= \inn{d,u} = p$, $\inn{a,u} \leq \inn{c,u}, \inn{d,v} \leq \inn{b,v}$, and $a
\neq c, b \neq d$ (see Figure \ref{fig:distance decomposition}); from here we
conclude that $\alpha(a,v)=\alpha(c,v)=\alpha(b,u)=\alpha(d,u)$.

In case 1, where the halfspaces intersect, we can see that the difference (bold
in Figure \ref{fig:distance decomposition}) is bounded by
$(\alpha(a,b)+\alpha(c,d))/2\pi \leq \alpha/\pi$. And in case 2, where the
halfspaces are disjoint, the same relation holds with equality, which proves the
proposition.
\end{proof}

\textbf{Proposition \ref{prop:halfspace mean}.}\textit{
  Let $h$ be a halfspace with threshold $t$ and normal vector $w$. Then for
  any RI distribution,
  \[
  \abs{\Ex h} = \Pr{-|t| \leq \inn{w,x} \leq |t|} \,.
  \]}
\begin{proof}
  Without loss of generality, assume $t \leq 0$ (otherwise take $-h$).
  By rotation invariance we have $\Pr{\inn{w,x} < t} = \Pr{\inn{w,x} > -t}$.
  Then
  \begin{align*}
    \abs{\Ex h}
    &= \Pr{h(x)=1} - \Pr{h(x)=-1}
    = \Pr{\inn{w,x} \geq t} - \Pr{\inn{w,x} < t} \\
    &= \Pr{\inn{w,x} \geq t} - \Pr{\inn{w,x} > -t}
    = \Pr{t \leq \inn{w,x} \leq t} \,. \qedhere
  \end{align*}
\end{proof}

\ignore{
\begin{proof}[Proof of Proposition \ref{prop:sphere absolute value}]
  For radius 1,
  \begin{align*}
    2\int_0^1 x \frac{S_{n-2}(\sqrt{1-x^2})}{S_{n-1}(r)} dx
    &= 2 \frac{S_{n-2}}{S_{n-1}} \int_0^r x (1-x^2)^{n-2} dx \\
    &\leq 2 \frac{S_{n-2}}{S_{n-1}} \int_0^r x e^{-x^2(n-2)} dx \\
    t = x^2(n-2), dt = 2x(n-2): \qquad \qquad
    &\leq \frac{S_{n-2}}{(n-2)S_{n-1} r^{n-1}} \int_0^\infty e^{-t} dt \\
    &\leq \frac{\sqrt{n-1}}{\sqrt{2\pi}(n-2)}
    \leq \frac{1}{\sqrt{\pi(n-2)}} \xh{(since $\sqrt{(n-1)/(n-2)} \leq \sqrt{2}$)} \\
    &\leq \frac{\sqrt{2}}{\sqrt{\pi n}} \xh{(for $n \geq 4$)}.
  \end{align*}
  For the lower bound, observe that the maximum density of the projected
  distribution is $\mu_{max} = \frac{S_{n-2}}{S_{n-1}} \leq \sqrt{n/2\pi}$
  (Proposition ??). Let $t \geq 0$ such that $\Pr{\abs{\inn{x,u}} < t} = 1/2$.
  Then $t\sqrt{n/2\pi} \geq 1/2$ so $t \geq \sqrt{\pi/2n}$. Thus
  \[
  \Ex{\abs{\inn{x,u}}} \geq \frac{1}{2}t \geq \frac{\sqrt{\pi}}{\sqrt{8n}} \,.
  \]
\end{proof}
}

\textbf{Proposition \ref{prop:tail bounds for spheres}.}\textit{
Let $\sigma$ be the uniform distribution over the sphere of radius $r$, and let
$t \geq 0$. Then for any unit vector $u$,
\[
\Pru{x \sim \sigma}{\inn{u,x} \geq t} \leq \sqrt{2}e^{-t^2(n-2)/2r^2} 
\leq \sqrt{2}e^{-t^2n/4r^2}
\]
(where the last inequality holds when $n \geq 4$).}
\begin{proof}
Suppose $\sigma$ is the unit sphere and without loss of generality assume
$u=e_1$.
\begin{align*}
\Pr{x_1 \geq t}
&= \frac{S_{n-2}}{S_{n-1}} \int_t^1 (1-t^2)^{(n-2)/2} dt
\leq \frac{S_{n-2}}{S_{n-1}} \int_t^\infty e^{-t^2(n-2)/2} dt \\
&= \frac{S_{n-2}}{S_{n-1}} \frac{\sqrt{2\pi}}{\sqrt{n-2}} \cdot
  \Pru{z \sim \cN(0, 1/\sqrt{n-2})}{z \geq t}
  \leq \frac{S_{n-2}}{S_{n-1}} \frac{\sqrt{2\pi}}{\sqrt{n-2}} e^{-t^2(n-2)/2}
\end{align*}
where the final inequality for the Gaussian distribution can be found, for
example, in \cite{BLM13}. The conclusion then follows from an application of
Proposition \ref{prop:sphere ratio}.
\end{proof}

\subsection{Concentration Inequalities and Empirical Estimation}
\label{appendix:concentration inequalities}
We will make use of two standard concentration inequalities:
\begin{lemma}[Chebyshev's Inequality]
Let $X$ be any random variable over $\bR$ and let $t \geq 0$. Then $\Pr{X \geq
t} \leq \frac{\Var{X}}{t^2}$.
\end{lemma}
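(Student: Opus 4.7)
The statement to prove is the classical Chebyshev inequality, interpreted in its standard form as bounding $\Pr{|X - \Ex X| \geq t}$ (since the raw bound $\Pr{X \geq t} \leq \Var{X}/t^2$ fails for a nonzero-mean constant random variable; the displayed inequality is meaningful as a deviation bound). So the plan is to derive Chebyshev from Markov's inequality applied to the squared deviation.

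First I would state and prove Markov's inequality as a lemma: for any nonnegative random variable $Y$ and any $s > 0$, one has $s \cdot \mathds{1}[Y \geq s] \leq Y$ pointwise, and taking expectations of both sides yields $s \cdot \Pr{Y \geq s} \leq \Ex Y$, i.e.\ $\Pr{Y \geq s} \leq \Ex Y / s$. This step is essentially immediate and requires only the monotonicity and linearity of expectation.

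Next I would apply Markov's inequality to the nonnegative random variable $Y \define (X - \Ex X)^2$ at the threshold $s \define t^2$ (the case $t = 0$ is trivial, so assume $t > 0$). This gives
\[
\Pr{(X - \Ex X)^2 \geq t^2} \;\leq\; \frac{\Ex{(X - \Ex X)^2}}{t^2} \;=\; \frac{\Var{X}}{t^2},
\]
using the definition of variance. Finally, since squaring is monotone on $[0, \infty)$, the events $\{(X - \Ex X)^2 \geq t^2\}$ and $\{|X - \Ex X| \geq t\}$ coincide, so we conclude $\Pr{|X - \Ex X| \geq t} \leq \Var{X}/t^2$, which is the intended statement.

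There is no real obstacle here; the result is entirely classical and the proof is a two-line reduction to Markov's inequality. The only subtlety worth flagging is the clarification that the inequality should be read as a deviation bound, since the literal one-sided version for an arbitrary $X$ is false (e.g.\ take $X$ to be a positive constant). If the author intends a one-sided version with mean-zero assumption, the same argument applies verbatim after noting $\Pr{X \geq t} \leq \Pr{|X| \geq t}$.
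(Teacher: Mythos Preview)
Your proof is correct and is the standard derivation of Chebyshev's inequality from Markov's inequality. The paper itself does not prove this lemma at all; it merely states Chebyshev's and Hoeffding's inequalities as standard concentration inequalities in the appendix without proof. Your observation that the statement as literally written is false (a positive constant $X$ has zero variance but $\Pr{X \geq t} = 1$ for small $t$) is accurate, and your interpretation as the deviation bound $\Pr{|X - \Ex X| \geq t} \leq \Var{X}/t^2$ is the intended meaning, consistent with how the paper actually uses the inequality (e.g., in the proof of Lemma~\ref{lemma:estimate ip}, where it is applied to $|\tilde p - \Ex{\tilde p}|$).
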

\begin{lemma}[Hoeffding's Inequality]
Let $\{X_i\}_{i \in [m]}$ be a set of $m$ independent random variables over
$\bR$ and let $\{a_i,b_i\}_{i \in [m]}$ be pairs of numbers such that for each
$i \in [m], \Pr{a_i \leq X_i \leq b_i}=1$. Then for $X = \sum_i X_i$, and any $t
\geq 0$,
$\Pr{ X - \Ex X > t} \leq \exp{-\frac{2t^2}{\sum_i (b_i-a_i)^2}}$ and the same
bound holds for $\Pr{ \Ex X - X > t }$.
\end{lemma}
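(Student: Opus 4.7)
The plan is to prove Hoeffding's Inequality by the standard Chernoff--Cram\'er exponential moment method. First I would recenter by setting $Y_i \define X_i - \bE X_i$, so each $Y_i$ has mean zero and is supported on an interval of length $b_i - a_i$. For any $\lambda > 0$, Markov's inequality applied to the nonnegative random variable $e^{\lambda(X - \bE X)}$ yields
\[
\Pr{X - \bE X > t}
\;\leq\; e^{-\lambda t} \, \bE\!\left[ e^{\lambda \sum_i Y_i} \right]
\;=\; e^{-\lambda t} \prod_{i=1}^m \bE\!\left[ e^{\lambda Y_i} \right] ,
\]
where the factorization into a product of one-dimensional moment generating functions uses the independence of the $Y_i$.

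The key intermediate step is \emph{Hoeffding's lemma} on mean-zero bounded random variables: if $\bE Y = 0$ and $\Pr{a \leq Y \leq b} = 1$, then $\bE[e^{\lambda Y}] \leq e^{\lambda^2 (b-a)^2 / 8}$. I would prove this by writing each realization of $Y$ as the convex combination $Y = \tfrac{b-Y}{b-a}\, a + \tfrac{Y-a}{b-a}\, b$, applying convexity of $x \mapsto e^{\lambda x}$ to get the pointwise bound $e^{\lambda Y} \leq \tfrac{b-Y}{b-a} e^{\lambda a} + \tfrac{Y-a}{b-a} e^{\lambda b}$, and taking expectations. Using $\bE Y = 0$, the right-hand side becomes $e^{\phi(u)}$ where $u = \lambda(b-a)$, $p = -a/(b-a) \in [0,1]$, and $\phi(u) = -pu + \log\!\bigl(1 - p + p e^u\bigr)$. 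Direct computation shows $\phi(0) = \phi'(0) = 0$ and $\phi''(u) = q(1-q) \leq 1/4$, where $q = p e^u / (1 - p + p e^u) \in [0,1]$, so Taylor's theorem yields $\phi(u) \leq u^2/8$, which proves the lemma.

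Combining these two steps gives
\[
\Pr{X - \bE X > t} \;\leq\; e^{-\lambda t + (\lambda^2/8) \sum_{i=1}^m (b_i - a_i)^2} \,,
\]
valid for every $\lambda > 0$. The exponent is a quadratic in $\lambda$ minimized at $\lambda^\star = 4t / \sum_i (b_i - a_i)^2$, and substituting this value yields the stated bound $e^{-2t^2 / \sum_i (b_i - a_i)^2}$. The symmetric one-sided bound on $\Pr{\bE X - X > t}$ follows by applying the same argument to $-X = \sum_i (-X_i)$, each of whose summands is supported on an interval of the same length $b_i - a_i$.

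The main obstacle is Hoeffding's lemma: the Chernoff step, the independence factorization, and the optimization over $\lambda$ are essentially one-line moves, whereas bounding $\phi''(u) \leq 1/4$ requires the convexity/Taylor argument described above, recognizing the second derivative as the variance of a Bernoulli random variable with parameter $q$ and then invoking the elementary bound $q(1-q) \leq 1/4$.
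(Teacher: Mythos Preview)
Your proof is correct and is the standard Chernoff--Cram\'er argument for Hoeffding's inequality. The paper, however, does not prove this lemma at all: it is stated in the appendix as a well-known concentration inequality and used as a black box, so there is no ``paper's own proof'' to compare against.
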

We will use standard empirical estimation techniques, which are easy
consequences of Hoeffding's Inequality. For clarity of presentation, we
present this as a subroutine:
\begin{algorithm}[H]
\label{alg:estimate mean}
\caption{\textsc{Estimate-Mean}($\mu, f, \epsilon, \delta$)}
\begin{algorithmic}[1]
\State $m \gets \frac{2}{\epsilon^2}\ln(2/\delta)$
\State Let $(x_i, f(x_i))_{i \in [m]}$ be a set of random samples.
\State \Return $\frac{1}{m} \sum_{i \in [m]} f(x_i)$
\end{algorithmic}
\end{algorithm}
\begin{lemma}
\label{lemma:estimate mean}
For any distribution $\mu$ over $\bR^n$ and function $f : \bR^n \to R$,
algorithm \textsc{Estimate-Mean} requires at most
$\BigO{\frac{1}{\epsilon^2}\log(1/\delta)}$ random samples and will produce an
estimate $\tilde E$ such that, with probability at least $1-\delta$,
$\abs{\tilde E - \bE_{\mu}{f}} < \epsilon$.
\end{lemma}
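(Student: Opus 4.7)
The plan is to prove the lemma by a direct application of Hoeffding's Inequality, since \textsc{Estimate-Mean} is just the empirical mean of $m = \frac{2}{\epsilon^2}\ln(2/\delta)$ i.i.d.\ samples of $f$. The sample complexity $\BigO{\epsilon^{-2}\log(1/\delta)}$ is immediate from the definition of $m$ in the algorithm, so the only substantive step is the confidence bound.

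For the confidence bound, I will set $Y_i \define f(x_i)$ for each of the $m$ sampled points. In every application of this routine in the paper, $f$ is $\pmset$-valued, so each $Y_i$ lies in $[-1,1]$; thus Hoeffding's Inequality applies with $b_i-a_i = 2$ and $\sum_i (b_i-a_i)^2 = 4m$. By linearity, the returned estimator $\tilde E = \frac{1}{m}\sum_i Y_i$ satisfies $\bE[\tilde E] = \bE_\mu[f]$, so applying Hoeffding to $\sum_i Y_i$ with deviation $m\epsilon$ yields
\[
\Pr{\abs{\tilde E - \bE_\mu[f]} > \epsilon}
= \Pr{\abs{\sum_i Y_i - \sum_i \Ex{Y_i}} > m\epsilon}
\leq 2\exp{-\frac{2(m\epsilon)^2}{4m}}
= 2\exp{-m\epsilon^2/2} \,.
\]
Plugging in $m = \frac{2}{\epsilon^2}\ln(2/\delta)$ bounds the right-hand side by $\delta$, which is exactly what is needed.

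There is no real obstacle here: the choice of $m$ in the algorithm was calibrated precisely so that the Hoeffding bound yields the target confidence $\delta$, and the whole argument is the textbook proof that empirical means concentrate around their expectations. The only minor subtlety worth flagging is that the statement writes $f : \bR^n \to R$, which I read as a (mild) typo for functions bounded in $[-1,1]$ --- the $\pmset$-valued setting in which \textsc{Estimate-Mean} is actually invoked throughout the paper; for arbitrary unbounded $f$ the conclusion would fail, and the proof implicitly uses the range bound to invoke Hoeffding.
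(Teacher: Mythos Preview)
Your proof is correct and follows exactly the same approach as the paper: a direct application of Hoeffding's inequality to the empirical mean, yielding the bound $2\exp{-m\epsilon^2/2} \leq \delta$ for $m = \frac{2}{\epsilon^2}\ln(2/\delta)$. Your observation about the implicit boundedness assumption on $f$ is also apt.
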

\begin{proof}
  By Hoeffding's inequality, $\Pr{\abs{\tilde E - \Ex{f}} \geq \epsilon}
  \leq 2\exp{-m\epsilon^2/2} = \delta$ when $m =
  \frac{2}{\epsilon^2}\ln(2/\delta)$.
\end{proof}

The next proposition gives a guarantee on a simple method for estimating
thresholds:
\begin{proposition}
\label{prop:estimate threshold}
  Let $\epsilon,\delta > 0$ and let $\mu$ be some distribution over $\bR$.
  For any $v \in [0,1]$ fix $t$ such that $\Pr{x \geq t}=v$ and let $X$ be a
  set of $m = \frac{1}{2\epsilon^2}\ln(2/\delta)$ random samples. Let $p =
  \max\{z : \#\{x \in X : x \geq z\} \geq mv \}$.  If $p \leq t$ let $I =
  [p,t)$, otherwise let $I = [t,p)$. Then
  \[
  \Pru{X}{ \mu(I) > \epsilon } \leq \delta \,.
  \]
\end{proposition}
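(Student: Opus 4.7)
The plan is to rewrite $\mu(I)$ in terms of the survival function $G(s) \define \Pru{x \sim \mu}{x \geq s}$, which is non-increasing and left-continuous. In both cases of the definition of $I$ one checks using $G(t) = v$ that $\mu(I) = |G(p) - v|$, so the event $\{\mu(I) > \epsilon\}$ splits into the upper bad event $\{G(p) > v + \epsilon\}$ and the lower bad event $\{G(p) < v - \epsilon\}$. I will show each of these occurs with probability at most $\delta/2$ by applying Hoeffding's inequality to the sum of indicators $\sum_i \ind{x_i \geq t^*}$ at a deterministically chosen threshold $t^*$. The chosen $m = \frac{1}{2\epsilon^2}\ln(2/\delta)$ satisfies $\exp{-2m\epsilon^2} = \delta/2$, so the union bound then closes the argument.

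For the upper event, let $t_+$ be a threshold with $G(t_+) = v + \epsilon$ (for now assume one exists). The key monotone observation is: if $\#\{x_i \geq t_+\} \geq mv$, then $t_+$ is feasible in the set whose maximum defines $p$, so $p \geq t_+$ and hence $G(p) \leq G(t_+) = v + \epsilon$. Contrapositively, $\{G(p) > v + \epsilon\} \subseteq \{\#\{x_i \geq t_+\} < mv\}$; the right-hand event is a deviation of at least $m\epsilon$ below the mean $m(v+\epsilon)$ of the empirical count, so Hoeffding's inequality bounds its probability by $\exp{-2m\epsilon^2} \leq \delta/2$.

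For the lower event, let $t_-$ satisfy $G(t_-) = v - \epsilon$. If $G(p) < v - \epsilon = G(t_-)$, then monotonicity forces $p > t_-$, which gives the nested inclusion $\{x_i \geq p\} \subseteq \{x_i \geq t_-\}$ and hence $\#\{x_i \geq t_-\} \geq \#\{x_i \geq p\} \geq mv$ by the feasibility of $p$ itself. This is a deviation of at least $m\epsilon$ above the mean $m(v-\epsilon)$, so Hoeffding's inequality again bounds its probability by $\delta/2$. Combining the two bounds yields the proposition.

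The only mildly delicate step is the appeal to ``the threshold with $G(t_\pm) = v \pm \epsilon$'' when $\mu$ is not continuous and no such value exists exactly. In that case one sets $t_+ = \sup\{s : G(s) \geq v + \epsilon\}$ and $t_- = \inf\{s : G(s) \leq v - \epsilon\}$ and uses left-continuity of $G$ together with the fact that the empirical count $\#\{x_i \geq z\}$ is itself left-continuous, integer-valued, and strictly less than $mv$ for every $z > p$; the same two implications then go through. This is a routine check rather than a real obstacle.
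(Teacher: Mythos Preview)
Your proof is correct and follows essentially the same approach as the paper's: both define boundary thresholds (your $t_+,t_-$ are exactly the paper's $a,b$, since $\mu[z,t)=G(z)-v$ and $\mu[t,z)=v-G(z)$), contain each bad event in a one-sided deviation of the empirical count at that fixed threshold, apply Hoeffding, and union bound. Your reformulation $\mu(I)=|G(p)-v|$ is a clean way to organize the two cases, but the underlying argument is identical, and both versions share the same mild looseness about atoms that you flag at the end.
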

\begin{proof}
  Let $a = \max\{z : \mu[z,t) \geq \epsilon \}, b = \min\{z : \mu[t,z) \geq
  \epsilon \}$, or $a=-\infty, b=\infty$ if no such values exist. The
  estimation fails if $p \leq a$ or $p \geq b$; note that if $a=-\infty$ or
  $b=\infty$ the respective inequalities cannot be satisfied, so we are
  concerned only with the cases where $a,b$ are finite. If $p \leq a$ then $A
  \define \#\{x \in X : x > a \} < mv$, which, since $\Ex{A/m} \geq v +
  \epsilon$, happens with probability
  \[
  \Pr{ A/m < v } \leq \Pr{ A/m < \Ex{A/m} - \epsilon } \leq \exp{-2m\epsilon^2}
  \]
  by Hoeffding's inequality. The same holds for $B \define \#\{x \in X : x
  \geq b\}$. Thus, taking the union bound and setting $m =
  \frac{1}{2\epsilon^2}\ln(2/\delta)$ we see that the failure probability is
  at most $\delta$.
\end{proof}

\end{document}